\documentclass{article}
\usepackage[utf8]{inputenc}

\usepackage{amsmath,amssymb,bbm,amsthm}
\usepackage{fullpage}
\usepackage{thm-restate,color,xcolor,xspace}
\usepackage{hyperref,cleveref}
\usepackage{algorithm,algorithmic}
\usepackage{thm-restate}
\usepackage{graphicx}

\newtheorem{theorem}{Theorem}[section]
\newtheorem{lemma}[theorem]{Lemma}
\newtheorem{proposition}[theorem]{Proposition}

\newtheorem{definition}[theorem]{Definition}
\newtheorem{corollary}[theorem]{Corollary}

\newtheorem{observation}[theorem]{Observation}

\newtheorem{conjecture}[theorem]{Conjecture}
\newtheorem{oq}[theorem]{Open Question}

\def \eps {\varepsilon}

\title{Hardness of Approximation in P via Short Cycle Removal:\\ Cycle Detection, Distance Oracles, and Beyond}
\author{Amir Abboud\footnote{Weizmann Institute of Science. Email: \text{amir.abboud@weizmann.ac.il}} \and Karl Bringmann\footnote{Saarland University and Max Planck Institute for Informatics, Saarland Informatics Campus, Germany. Email: \text{bringmann@cs.uni-saarland.de}} \and Seri Khoury\footnote{UC Berkeley. Email: \text{seri\text{\_}khoury@berkeley.edu}} \and Or Zamir\footnote{Institute for Advanced Study. Email: \text{orzamir@ias.edu}}}

\begin{document}

\maketitle

\begin{abstract}
    We present a new technique for efficiently removing almost all short cycles in a graph without unintentionally removing its triangles.
    Consequently, triangle finding problems do not become easy even in almost $k$-cycle free graphs, for any constant $k\geq 4$.
    
    Triangle finding is at the base of many conditional lower bounds in P, mainly for distance computation problems, and the existence of many $4$- or $5$-cycles in a worst-case instance had been the obstacle towards resolving major open questions. 
    
    \begin{itemize}
        \item \textbf{Hardness of approximation:} Are there distance oracles with $m^{1+o(1)}$ preprocessing time and $m^{o(1)}$ query time that achieve a constant approximation?
        Existing algorithms with such desirable time bounds only achieve super-constant approximation factors, while only $3-\eps$ factors were conditionally ruled out (P{\u{a}}tra{\c{s}}cu, Roditty, and Thorup; FOCS 2012). We prove that no $O(1)$ approximations are possible, assuming the $3$-SUM or APSP conjectures. In particular, we prove that $k$-approximations require $\Omega(m^{1+1/ck})$ time, which is tight up to the constant~$c$. The lower bound holds even for the \emph{offline} version where we are given the queries in advance, and extends to other problems such as dynamic shortest paths.
        \item \textbf{The $4$-Cycle problem:} An infamous open question in fine-grained complexity is to establish any surprising consequences from a subquadratic or even linear-time algorithm for detecting a $4$-cycle in a graph. This is arguably one of the simplest problems without a near-linear time algorithm nor a conditional lower bound.
        We prove that $\Omega(m^{1.1194})$ time is needed for $k$-cycle detection for all $k\geq 4$, unless we can detect a triangle in $\sqrt{n}$-degree graphs in $O(n^{2-\delta})$ time; a breakthrough that is not known to follow even from optimal matrix multiplication algorithms.

    \end{itemize}
\end{abstract}

\thispagestyle{empty}
\clearpage
\setcounter{page}{1}

\section{Introduction}


One of the most central and challenging goals in fine-grained complexity is to prove \emph{hardness of approximation} results for the many fundamental problems that we already know are hard to compute exactly.
With the exception of few results that follow from simple gadget reductions,\footnote{Similar to saying that the NP-hardness of $3$-coloring implies a $4/3-\eps$-hardness of approximation for the chromatic number.} understanding the time vs. approximation trade-off seems to require specialized fine-grained \emph{gap amplification} techniques.
As we know from the quest for \emph{NP-hardness} of approximation that started in the early 90's, such techniques are not easy to come by, and the fine-grained restrictions on the reductions can only make matters worse.

Two notable success stories, highlighted in a recent survey by Rubinstein and Vassilevska Williams \cite{RVsurvey}, are the Distributed PCP framework \cite{ARW17} based on algebraic error-correcting codes that has lead to strong results for many pair-finding type of problems \cite{rubinstein2018hardness,KLM19,CW19,Chen+19,Chen20,abboud2019dynamic,karthik2020closest}, and a graph-products technique \cite{B+21} that has lead to impressive inapproximability results for computing the diameter of a graph \cite{DVVW19,Bonnet21a,Li21,DW21,Bonnet21b,DLV21}.
Nevertheless, we are still far away from satisfactory results for many problems (see the open questions in \cite{ARW17,RVsurvey}).
Even \emph{distance computations in graphs}, an extensively studied subject in fine-grained complexity, exhibits many huge gaps. 

As a case in point, consider the open questions below for three of the most basic problems in the area, each of them with a long list of upper bounds spanning several decades: \emph{distance oracles} \cite{thorup1999undirected, cohen2001all, DHZ00, awerbuch1998near, cohen1998fast,  thorup2005approximate, baswana2006approximate, baswana2008distance, PR14,sommer2009distance,PRT12}, \emph{dynamic shortest paths} \cite{shiloach1981line,roditty2012dynamic,bernstein2011improved,henzinger2014decremental,henzinger2015unifying,chechik2018near,demetrescu2004new, bernstein2009fully,baswana2012fully, forster2021dynamic}, and \emph{shortest cycle} (girth) \cite{itai1978finding, lingas2008efficient, roditty2012subquadratic, dahlgaard2017new, ducoffe2021faster, kadaria+22}.

\begin{oq}[Distance Oracles]
\label{oq:DO}
Can we preprocess a graph in $m^{1+o(1)}$ time and answer shortest path queries in $m^{o(1)}$ time with an $O(1)$-approximation? 
What if we are given the list of all queries in advance?
\end{oq}

No known~$O(1)$-approximation algorithm can achieve the desirable time bounds in the open question.
The above references take $m^{1+\Theta(1/k)}$ preprocessing time to answer queries with a $k$-approximation in $m^{o(1)}$ time.
Meanwhile, the best conditional lower bound by P{\u{a}}tra{\c{s}}cu, Roditty, and Thorup \cite{PRT12} only rules out a $(3-\eps)$-approximation with such time bounds under a set-intersection conjecture.\footnote{Their conjecture and hardness result apply even for preprocessing algorithms with $m^{1+o(1)}$ space (and unbounded time), but higher lower bounds are not known even when restricting the time complexity.}
Existing inapproximability results higher than the $3-\eps$ barrier are either information-theoretic incompressibility arguments~\cite{bourgain1985lipschitz,matouvsek1996distortion,thorup2005approximate} and therefore only rule out $o(m)$ space bounds, or in the cell-probe model~\cite{sommer2009distance} and therefore only apply for query times up to $\log{n}$.\footnote{The latter is due to the well-known barrier of proving higher unconditional lower bounds for any problem (see \cite{PR14,PRT12}). To prove inapproximability even with the more satisfying $m^{o(1)}$ restriction on the query time, we seem to need the conditional lower bounds approach of fine-grained complexity.}


\begin{oq}[Dynamic Shortest Paths]
\label{oq:dynamic}
Can we preprocess a graph in $O(mn)$ time, then support edge-updates in $m^{o(1)}$ amortized time, and answer shortest path queries in $m^{o(1)}$ time with an $O(1)$-approximation? 
\end{oq}

Again, no known constant factor approximation meets these desirable requirements on the update and query times. It is known how to achieve update and query time $O(m^{1/k})$ with approximation factor $O(k)$ in the partially dynamic (deletions only) case \cite{chechik2018near} and approximation factor $(\log{n})^{O(k)}$ in the fully dynamic case~\cite{forster2021dynamic}.
The only conditional lower bounds are for~$(2-\varepsilon)$-approximation algorithms and they follow directly from the lower bounds for the \emph{exact} setting~\cite{RZ04esa,AV14,henzinger2015unifying}, where it is shown that distinguishing distance $2$ from $4$ is hard.\footnote{The lower bounds hold even against much higher $O(n^{1-\eps})$ update and query times, but inapproximability results with higher multiplicative factors are not known even if we demand $m^{o(1)}$ update and query times.}

\begin{oq}[Girth]
\label{oq:girth}
Can we return an $O(1)$-approximation to the girth (i.e.\ the length of the shortest cycle) in $m^{1+o(1)}$ time? 
\end{oq}

The best known approximation with $m^{1+o(1)}$ running time is super-constant; very recently, Kadaria~\emph{et al.}~\cite{kadaria+22} obtained an $O(k)$-approximation in $O(m^{1+1/k})$ time.  A lower bound for $(4/3-\eps)$-approximation follows from assuming hardness of triangle finding, as deciding if a graph has a triangle is equivalent to distinguishing between girth $3$ vs $4$.
No better lower bound is known.

\medskip

Trying to answer the above questions negatively by a lower bound leads to a common barrier; it is the \emph{short cycle barrier} discussed below.
Overcoming this barrier is related to Open Question 5 in the distributed PCP paper \cite{ARW17} asking for gap amplification techniques from conjectures other than SETH.
This is because (the exact versions of) our distance computation problems are not SETH-hard; their hardness is via  reductions from (detecting or) \emph{listing triangles} in a graph, a problem that is hard under the 3SUM or APSP conjectures, but not under SETH.

\paragraph{The Short Cycle Barrier.}
Suppose we are given a tripartite graph $G$ with parts $A,B,C$ and want to detect a triangle $a\in A, b \in B, c \in C$.\footnote{This tripartite version is equivalent to the general case of Triangle detection by a standard reduction.}
The standard reductions to distance oracles would do the following (and more or less similarly for the other problems).
We define a graph $G'$ that is obtained from $G$ by removing the $B \times C$ edges.
Then, we query for the distance in $G'$ for any pair $\{b,c\}\in E(G) \cap B\times C$ that used to be an edge in $G$.
If the distance is small, namely $2$, we conclude that $\{b,c\}$ is in a triangle in $G$, because there must be an $a\in A$ that is connected to both $b$ and $c$; this is the yes-case.
Otherwise, if the distance is larger, namely $\geq 3$, then we conclude that $\{b,c\}$ is not in any triangle in $G$ because there is no node $a\in A$ that is connected to both $b$ and $c$; this is the no-case.
Assuming a super-linear $\Omega(m^{1+\eps})$ lower bound for finding triangles (specifically for this \emph{all-edge} version) we conclude that no distance oracle with $m^{1+o(1)}$ preprocessing and $m^{o(1)}$ query time can distinguish between distance $=2$ and $\geq 3$.

To boost this result into a strong inapproximability statement, we must amplify the gap between the distances in the yes-case vs.\ the no-case.
Since the graph $G'$ is bipartite (by the assumption on $G$) we can readily observe that the distance in the no-case will actually be $\geq 4$, not just $\geq 3$, so the above construction rules out any $(2-\eps)$-approximate answers in the aforementioned time bounds.

Unfortunately from a hardness of approximation perspective, it is rather difficult to argue that the distance in the no-case must be any larger than $4$.
This is because for any pair $\{b,c\}$ the graph $G'$ is extremely likely to contain a $4$-path that makes one \emph{zigzag}, $b \to a \to b' \to a' \to c$, i.e.\ after the first step from $b$ to $a \in N(b)\cap A$, it goes back and forth once from $a\in A$ to $b' \in B$ and back to a different node in $a' \in A$, and only then goes to $c$. (See Figure~\ref{fig:shortcycle}.)
This path does not imply that $a' \in N(b)$ nor that $a \in N(c)$ and therefore does not correspond to a triangle in $G$.
Indeed, it only corresponds to a \emph{$5$-cycle} in $G$ that contains the $\{b,c\}$ edge.
This is precisely the short cycle barrier: a short cycle allows a path to make a short detour (a zigzag) and prevents us from achieving a larger gap between the yes- and no-cases.
It is also rather clear that simply subdividing edges will not work, as it increases the distance in the yes-case as well; it seems impossible to break the factor $2$ barrier with such simple tricks.

In FOCS 2010, P{\u{a}}tra{\c{s}}cu and Roditty \cite{PR14} devised an ingenious graph-products technique (conceptually similar to \cite{B+21}) to push the lower bound to approximation factors beyond $2$. 
Thinking of their construction in the terminology of triangles, their idea is to make $G'$ have $k>3$ layers by adding $k-2$ layers between $B$ and $C$ that together represent $A$.
In the yes-case where $\{b,c\}$ is in a triangle, the distance is now $k-1$, but the main advantage is that in the no-case they manage to force any path from $b$ to $c$ to make a zigzag in \emph{each of the $k-1$ layers}, making the distance $3k-2$.
For large enough $k$ this shows that distance oracles with $(3-\eps)$-approximations cannot meet the aforementioned time bounds.
In the original paper~\cite{PR14}, they could only make this approach work for small $k$ and could only prove inapproximability for factors $2\frac{2}{3}-\eps$, but in a follow-up paper with Thorup \cite{PRT12} the full potential of this approach was realized, and they established a lower bound for any $(3-\eps)$-approximations. Alas, it is clear that $3$ is the limit of this approach.
(We remark that this is a barrier even in \emph{weighted} graphs.\footnote{The case of \emph{directed} graphs is different however. For some of these problems even deciding if the distance is finite has strong lower bounds. The reason is that the directed edges can prevent zigzags.})


\begin{figure}
\centering
\includegraphics[page=1, height=150pt, trim = 0pt 0pt 400pt 0pt]{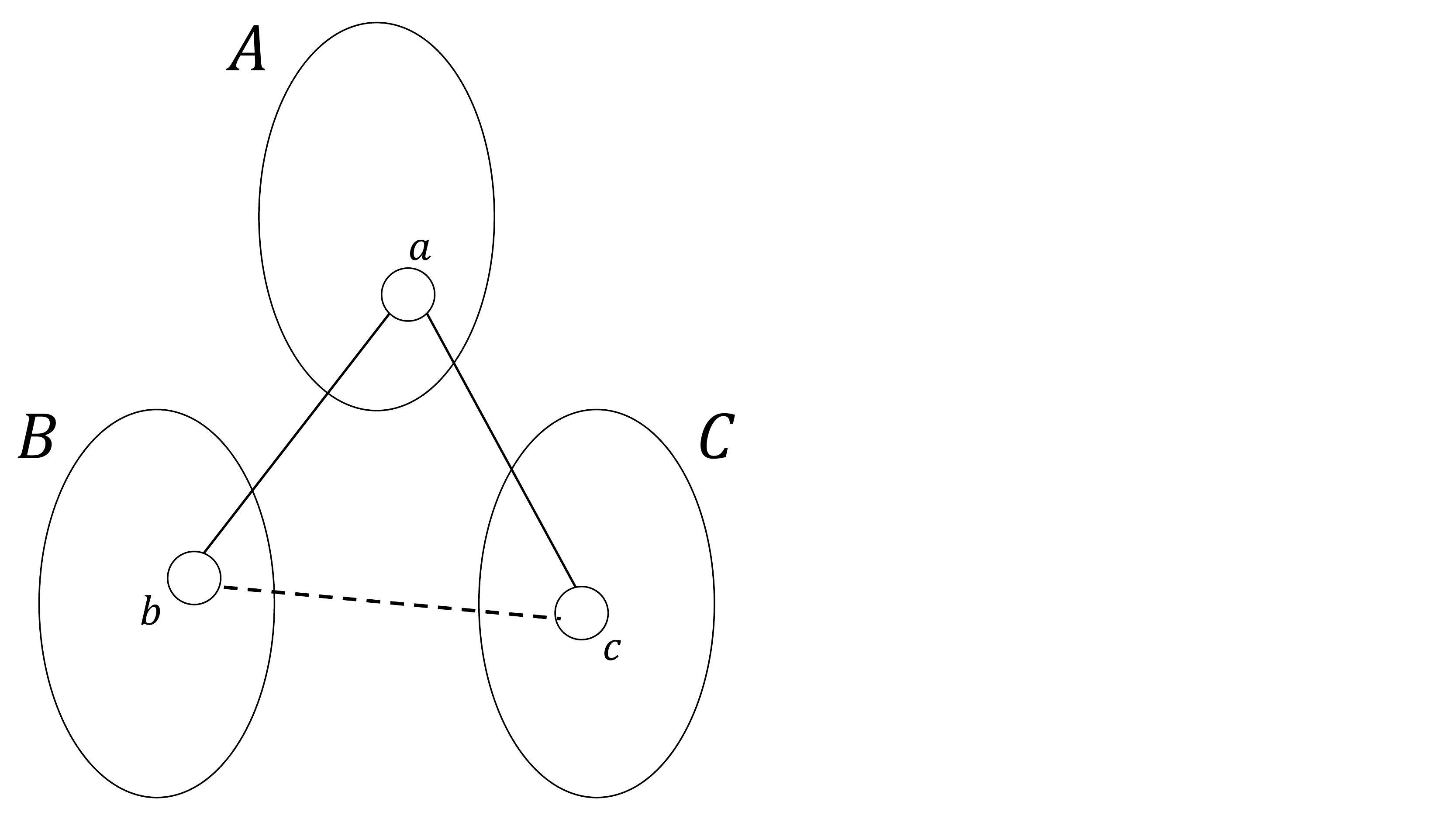}
\hspace{50pt}
\includegraphics[page=2, height=150pt, trim = 0pt 0pt 400pt 0pt]{4cyc.pdf}
\caption{If~$\{b,c\}$ belongs to a triangle in~$G$, then the distance between~$b,c$ in~$G'$ is~$2$. If~$\{b,c\}$ does not belong to a triangle in~$G$, then the distance between~$b,c$ in~$G'$ can be as small as~$4$. This corresponds to a $5$-cycle in~$G$.}
\label{fig:shortcycle}
\end{figure}

The natural and more promising approach for circumventing this barrier is to somehow ensure that there are no $\leq k$-cycles in the original graph $G$. 
Then, any effective zigzag must be \emph{long}, and even the natural two-layered construction would give us a lower bound of $\Omega(k)$.
Indeed, the distance for a pair $b,c$ would be $=2$ if the pair is in a triangle, versus $\geq k-1$ otherwise.
This would be reminiscent of the use of the girth conjecture in lower bounds for \emph{multiplicative} spanners \cite{PS89,Alt+93},
whereas the aforementioned graph-products technique is reminiscent of lower bounds for \emph{additive} spanners \cite{Woodruff06,AB17,ABP18}.\footnote{The Girth Conjecture and the techniques for additive spanners were already used, of course, for lower bounds against distance oracles as well. However, such lower bounds (and any information-theoretic arguments) cannot prove lower bounds higher than $m$; rather, they are interesting for understanding how much dense graphs can be compressed. Thus, the similarity can only be in spirit.}
All we have to do is to prove this gap amplification result for Triangle, amplifying the no-case from triangle-free to $k'$-cycle free, for all $4\leq k' \leq k$ (without unintentionally removing a triangle in the yes-case).
This boils down to the following natural question.

\begin{oq}[Main Open Question]
\label{oq:main}
Can we prove hardness for finding a triangle in a $4$-cycle free graph? What if it is $k$-cycle free for all $4 \leq k \le O(1)$?
\end{oq}

Any progress on Question~\ref{oq:main} carries over to progress on the aforementioned three open questions, by the standard reductions.
But it is far from clear why such a gap amplification should be possible.
The needle-in-a-haystack flavor (and intuitive hardness) of triangle finding \emph{stems} from the possibility of a triangle hiding amidst plenty of $4$- or $5$-cycles.
In a $4$-cycle free graph no two nodes can have more than one common neighbor; doesn't that restrict the search space by too much?\footnote{Such high-girth assumptions can indeed reduce the complexity of some problems from almost-quadratic to almost-linear. In particular, in the Orthogonal Vectors problem with dimension $d=n^{o(1)}$ (at the core of the Diameter lower bounds, and many others), if no two vectors can have two common coordinates that are non-zero, there is an $O(n d^2)$ algorithm.}

Clearly, we do not expect the triangle finding problem to remain \emph{equally hard} in $k$-cycle-free graphs as in general graphs, already because $k$-cycle-free graphs for a large even $k$ are very sparse.
Moreover, one can apply a standard reduction, e.g.\ the one to distance oracles sketched above, and then use an existing upper bound (e.g.~\cite{PRT12}) to find a triangle in $m^{1+O(1/k)}$ time.
Therefore, the main open question is whether or not the problem becomes \emph{very easy}: Can we find a triangle in a $4$-cycle free graph in linear time?
This contemplation touches upon a well-known hole in our understanding of graph problems. 
Indeed, by a simple reduction, even this latter most restricted form of Question~\ref{oq:main} is at least as hard as resolving one of the most infamous open questions in fine-grained complexity:

\begin{oq}
\label{oq:4cycle}
Can we determine if a graph contains a $4$-cycle in $m^{1+o(1)}$ time?
\end{oq}

In 1994, Yuster and Zwick \cite{YZ97} put forth the conjecture that one cannot detect a $4$-cycle in a graph in subquadratic time. 
The longstanding upper bound is $O(m^{4/3})$ via a high-degree low-degree argument 
\cite{AYZ97}. The running time can also be bounded by $O(n^2)$ because when $m \geq 200\cdot n^{1.5}$ we can simply output ``yes'': by the Bondy-Simonovitz Theorem \cite{BS74}, a graph with such density must contain a $4$-cycle.
Frustratingly, to this date, the field of fine-grained complexity has not managed to show any hardness for this problem.
``\emph{What hope do we have to
understand more complex problems if we cannot settle the complexity of this simple one, even
conditionally?}''\footnote{This is a quote from the survey by Rubinstein and Vassilevska Williams \cite{RVsurvey} where it is referring to the approximability of the graph diameter problem. We find it no less poignant when considering the $4$-cycle problem.}

\medskip

In this paper we give answers to all of the above questions, some full and some partial, based on fine-grained complexity assumptions.
It turns out that Triangle (detection or listing) requires super-linear time even when the graph has very few short cycles.

\subsection{First Result: Removing Most $k$-Cycles}

Our first main result is a fine-grained \emph{self-reduction} for Triangle from worst-case $\sqrt{n}$-degree graphs to graphs with few $k'$-cycles for all $4 \leq  k' \leq k$.
For concreteness, consider the All-Edge version where we want to report for each of the $m=O(n^{1.5})$ edges in the graph whether it is in a triangle.
This problem is known to require $n^{2-o(1)}$ time, under the $3$-SUM Conjecture \cite{Pat10,KPP16} or under the APSP Conjecture~\cite{VX20}, and this holds even for graphs of maximum degree $\sqrt{n}$. Thus it is a very plausible conjecture that $n^{2-o(1)}$ is required. (See Theorem~\ref{thm:3sumReduction} and the discussion in Section~\ref{sec:discussion}.)
A worst-case input graph to this problem might have up to $n^{k/2+1/2}$ $k$-cycles.
Given such a graph, for a sufficiently small constant $\alpha > 0$ depending on $k$, the following theorem constructs many subgraphs such that: (1) solving All-Edge-Triangle on all of these subgraphs suffices to solve the original problem, (2) the total number of edges in all these subgraphs is subquadratic $n^{2-\Omega(1)}$, and (3) the total number of $k$-cycles in all these subgraphs is subquadratic $n^{2-\Omega(1)}$. 
The latter implies that a linear-time algorithm for All-Edge-Triangle in graphs with few short cycles implies a subquadratic algorithm for the starting problem and refutes the popular conjectures.

\begin{restatable}[Removing Most $k$-Cycles]{theorem}{ThmRemovingCycles}
\label{thm:main1}

For any choice of constants~$k\geq 4, \alpha\in (0,\frac{1}{2})$, and $\eps\in (0,\frac{3-\omega}{4})$ the following holds.
Given a graph~$G$ with~$n$ vertices and maximum degree at most $\sqrt{n}$, there is a randomized algorithm, running in time $O(n^{2-\eps})$, that returns a subset of the edges $E' \subseteq E(G)$ and a collection of $s=n^{3/2-3\alpha}$ subgraphs $G_1,\ldots,G_s \subseteq G$ such that:
\begin{itemize}
    \item Every edge~$e\in E'$ participates in a triangle of~$G$.
    \item If an edge $e \in E(G)$ participates in a triangle of~$G$, then it is either in $E'$ or it participates in a triangle in at least one subgraph~$G_i$.
    \item With high probability, each $G_i$ has $O(n^{1/2+\alpha})$ vertices and maximum degree $O(n^{\alpha})$. 
    \item For every~$i$, the expected total number of $k'$-cycles of sizes $4\leq k'\leq k$ in~$G_i$ is at most~$O(n^{\frac{\omega-1}{4}+k\alpha+\eps})$.
\end{itemize}

\end{restatable}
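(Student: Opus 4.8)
The plan is to take the $G_i$ to be the $s=b^3$ subgraphs of $G$ induced by the triples of parts of a random partition $V(G)=V_1\cup\dots\cup V_b$ into $b=n^{1/2-\alpha}$ parts, preceded by a cleanup step that destroys the cycle-dense part of $G$ and records its triangle edges in $E'$. First I would hash each vertex independently and uniformly into one of $b$ buckets and set $G_{ijk}:=G[V_i\cup V_j\cup V_k]$ for every triple $(i,j,k)$; there are exactly $b^3=n^{3/2-3\alpha}$ of these. The covering property is then immediate: a triangle $\{x,y,z\}$ of $G$ lies inside $V_i\cup V_j\cup V_k$ for the buckets $i,j,k$ of its vertices and hence is a triangle of $G_{ijk}$, unless one of its edges was removed during cleanup --- and the cleanup is arranged so that every removed edge lying on a triangle is put into $E'$. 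The size and degree claims are routine concentration estimates: a bucket has $\Theta(n/b)=\Theta(n^{1/2+\alpha})$ vertices, and since every vertex has degree $\le\sqrt n$ in $G$, the number of its neighbours in a fixed bucket is binomially distributed with mean $\le n^{\alpha}$; as $\alpha>0$ is a constant this mean is polynomially large, so a Chernoff bound plus a union bound over all vertices and buckets gives maximum degree $O(n^{\alpha})$ in every $G_i$ with high probability (hence also $|E(G_i)|=O(n^{1/2+2\alpha})$).

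The crux is the last bullet, the count of short cycles. A plain random partition does not suffice: a worst-case $G$ of degree $\le\sqrt n$ can contain $\Theta(n^{(k'+1)/2})$ $k'$-cycles, each survives into a fixed $G_{ijk}$ with probability $\Theta(b^{-k'})$, so $E[\#\,k'\text{-cycles in }G_i]=\Theta(n^{1/2+k'\alpha})$, which overshoots the target $n^{(\omega-1)/4+k\alpha+\eps}$ by a factor $n^{(3-\omega)/4-\eps}$; the hypothesis $\eps<\tfrac{3-\omega}{4}$ is exactly what keeps this factor polynomial, so something genuinely nontrivial must happen. The idea is to run a cleanup on $G$ itself that drives the number of $k'$-cycles down to $n^{k'/2+(\omega-1)/4+\eps}$ for all $4\le k'\le k$ simultaneously, while not losing any triangle. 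The structural observation I would lean on is that short cycles in a degree-$\sqrt n$ graph are controlled by vertex pairs with many common neighbours --- a $4$-cycle is a pair together with two common neighbours, and longer cycles decompose similarly --- so the cycle excess is concentrated on ``heavy book'' configurations. An edge that is the spine of a heavy book automatically lies on a triangle and can be moved to $E'$ for free; the work is in breaking the cycles through heavy non-edges, by removing a controlled set of vertices or edges and recording in $E'$ every triangle touched.

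This is also where fast matrix multiplication enters, producing the $(\omega-1)/4$ term. Crucially one cannot solve All-Edge-Triangle on $G$, nor even afford to examine all $\Theta(n^2)$ length-$2$ paths, so the cleanup must isolate a much smaller object carrying the cycle excess and run a squaring/Boolean-product subroutine only on that; optimizing the threshold that separates ``resolve directly by matrix multiplication'' from ``leave to the partition'' against the $O(n^{2-\eps})$ budget is what yields the exponent $\tfrac{\omega-1}{4}$, with the extra $n^{\eps}$ absorbing polylogarithmic factors and threshold slack. Once the residual graph satisfies $Z_{k'}\le n^{k'/2+(\omega-1)/4+\eps}$, substituting into $E[\#\,k'\text{-cycles in }G_i]=Z_{k'}\cdot\Theta(b^{-k'})$ with $b=n^{1/2-\alpha}$ gives exactly $O(n^{(\omega-1)/4+k'\alpha+\eps})=O(n^{(\omega-1)/4+k\alpha+\eps})$; one last check confirms that the cleanup removed only a negligible fraction of the vertices, so the covering, size, and degree bounds above are unaffected.

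I expect the main obstacle to be precisely the design of this cleanup: making it strong enough to bring every $k'$-cycle count in the range down to $n^{k'/2+(\omega-1)/4+\eps}$ at once, cheap enough that it together with the $b^3$ subgraph constructions fits in $O(n^{2-\eps})$ --- which rules out touching all length-$2$ paths and forces the matrix-multiplication route on a small isolated piece --- and conservative enough that every destroyed triangle resurfaces in $E'$ while only negligibly many vertices are removed. Pinning down exactly which piece carries the cycle excess, why it is small enough for the matrix-multiplication subroutine to respect the budget, and that a single threshold choice meets all of these requirements at once, is the technical core; the remainder is the partition-and-Chernoff skeleton described above.
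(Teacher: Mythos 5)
Your outer skeleton matches the paper's: a cleanup phase that reduces the number of short cycles while recording triangle information in $E'$, followed by a random partition of each part into $n^{1/2-\alpha}$ buckets and taking the $n^{3/2-3\alpha}$ slices; your survival-probability calculation ($Z_{k'}\cdot b^{-k'}$ with $Z_{k'}\le n^{k'/2+(\omega-1)/4+\eps}$) and the Chernoff-based size/degree bounds are exactly how the paper finishes. But there is a genuine gap: the cleanup step, which you yourself flag as "the main obstacle," is the technical core of the theorem and your proposal only gestures at it. The paper's cleanup (its Lemma on dense-piece removal) rests on three concrete ingredients that are absent from your sketch. First, a structural lemma: if a degree-$\le\sqrt n$ graph has at least $n^{k'/2+\gamma}$ $k'$-cycles, then it contains at least $\tfrac12 n^{k'/2-1+\gamma}$ simple $(k'-2)$-paths $\{u,\dots,w\}$ each lying on $\ge \tfrac12 n^{1/2+\gamma}$ $k'$-cycles, hence with $\ge \tfrac12 n^{1/2+\gamma}$ edges between $N(u)$ and $N(w)$ (a counting argument using that there are at most $n^{(k'-1)/2}$ such paths and each lies on at most $n$ cycles). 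Second, an algorithmic consequence: since a uniformly grown random walk hits a $\gamma$-dense path with probability $\ge n^{\gamma-1/2}$, sampling $\tilde O(n^{1/2-\gamma})$ paths and estimating the density of $N(u)\times N(w)$ by sampling pairs finds a dense piece w.h.p.\ in $\tilde O(n^{1-2\gamma})$ time. Third, and most importantly, a subroutine that in $O(n^{(3+\omega)/4})$ time reports \emph{all} edges of $G$ that lie in a triangle using an edge of $N(u)\times N(w)$ (a high-degree/low-degree split with threshold $n^{1/2-\beta}$, $\beta=(3-\omega)/4$, handling the high-degree side by Boolean matrix products of $n^{1/2}\times n^{1/2+\beta}$ matrices); only after this report is it safe to delete all edges of $N(u)\times N(w)$ without violating your second bullet. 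Your shortcut --- "a spine edge of a heavy book lies on a triangle and can be moved to $E'$ for free" --- does not do this job: the cycle excess lives on the edges \emph{between} the two neighborhoods, not on the spine, and deleting them requires first certifying, within the time budget, every edge of the whole graph whose triangles pass through them; that is precisely where matrix multiplication is needed and where the exponent $(\omega-1)/4$ actually comes from. Finally, the time accounting you defer is not automatic: each dense piece removed deletes $\Omega(n^{1/2+\gamma})$ edges, so there are $O(n^{1-\gamma})$ iterations and total time $O(n^{1-\gamma}\cdot n^{(3+\omega)/4})$, which is $O(n^{2-\eps})$ exactly when $\gamma=(\omega-1)/4+\eps$ with $\eps<\tfrac{3-\omega}{4}$. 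Without these pieces the proposal is an outline of the right strategy rather than a proof.
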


This result achieves a weaker statement than that asked by Question~\ref{oq:main} because it does not remove all short cycles.
Still, it is sufficient for fully resolving Questions~\ref{oq:DO} and~\ref{oq:dynamic} above.
Intuitively, by applying the standard reductions (as described above), each of the few remaining short cycles might result in a false positive: a pair $\{b,c\}$ that has short distance even though it is not in a triangle.
But since the number of such cycles is small (and the degrees in the $G_i$ graphs are small), they can all be filtered in a post-processing stage in subquadratic time.

Before giving the inapproximabilty results, let us briefly explain why the matrix multiplication exponent $2\leq \omega < 2.37286$ \cite{AlmanW21} appears in our statements.
Perhaps counter-intuitively, our lower bounds get \emph{higher} the closer $\omega$ gets to $2$.
Roughly speaking, this is because our results follow from reductions that employ several procedures, including fast matrix multiplication, to extract these subgraphs with few short cycles from a given graph.
In any case, our results are new and meaningful for any $2 \leq  \omega < 2.37286$ (or even any $2 \le \omega < 3$); the only difference is in the constants.

\paragraph{Applications}
Our first corollary improves the $(3-\eps)$ hardness of P{\u{a}}tra{\c{s}}cu, Roditty, and Thorup \cite{PRT12} all the way up to $\omega(1)$, showing that $O(k)$-approximation with $O(m^{1+1/k})$ preprocessing is indeed the right tradeoff for distance oracles with $O(m^{1/k})$ query time.
Our lower bound is comparable to that of Sommer, Verbin, and Yu~\cite{sommer2009distance} in the cell-probe model, except that we allow much higher query time: $m^{\Omega(1)}$ vs. their $O(1)$. 
Moreover, our lower bound applies to the easier \emph{offline} version of the problem where all the queries are given in advance; previous lower bounds \cite{PR14,PRT12,sommer2009distance} do not apply to this restricted setting.\footnote{In the stronger models that these papers consider, where we measure space/probes rather than time, this offline problem becomes trivially easy.}
If $\omega=2$ our lower bound becomes $m^{1+\frac{1}{4k-2}-o(1)}$ time for a $(k-\delta)$-approximation; with the current $\omega$ it is $\Omega(m^{1+\frac{1}{6.3776k-4.3777}})$.

\begin{restatable}[Hardness of Approximation for Offline Distance Oracles]{corollary}{CorDistOracles}\label{cor:distOracles}
Let $k\geq 4$ be an integer, and let $\eps,\delta>0$. Define $c = \frac{4}{3-\omega}$ and $d = \frac{2\omega-2}{3-\omega}$.
Assuming either the $3$-SUM Conjecture or the APSP Conjecture, no algorithm can return a $(k-\delta)$-approximation to the distance between $m$ pairs of nodes in a simple graph with $m$ edges in time $O(m^{1+\frac{1}{ck-d}-\eps})$. 
Consequently, there is no $(k-\delta)$-approximate distance oracle with $O(m^{1+\frac{1}{ck-d}-\eps})$ preprocessing and $O(m^{\frac{1}{ck-d}-\eps})$ query time.
\end{restatable}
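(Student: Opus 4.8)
\medskip
\noindent\textbf{Proof plan.}
The plan is to feed the instances produced by \Cref{thm:main1} into the hypothetical $(k-\delta)$-approximate offline algorithm and to show that the few short cycles that survive the decomposition create only a controllable number of ``false positives'', which can be removed within a sub-quadratic post-processing budget. Write $c=\tfrac{4}{3-\omega}$ and $d=\tfrac{2\omega-2}{3-\omega}$, and note the identity $d+2=c$. I would start from All-Edge-Triangle on an $n$-vertex graph of maximum degree $\sqrt n$, which requires $n^{2-o(1)}$ time under the $3$-SUM or APSP Conjectures by \Cref{thm:3sumReduction}; by a standard reduction one may assume this graph is tripartite with parts $A,B,C$ and that only the triangle-status of its $A$--$B$ edges is needed. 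I would then apply \Cref{thm:main1} with cycle parameter $2k$ and with a sufficiently small constant $\alpha>0$ (fixed at the end), obtaining $E'$ and the $s=n^{3/2-3\alpha}$ subgraphs $G_1,\dots,G_s$, each tripartite with $O(n^{1/2+\alpha})$ vertices, maximum degree $O(n^{\alpha})$, hence $O(n^{1/2+2\alpha})$ edges.

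For each $G_i$ with parts $A_i,B_i,C_i$, I form $G_i'$ by deleting the $A_i$--$B_i$ edges and take as the offline query set all pairs $\{a,b\}$ that were $A_i$--$B_i$ edges of $G_i$; this is a simple graph with $O(n^{1/2+2\alpha})$ edges and as many queries. The decisive dichotomy is: if $\{a,b\}$ lies in a triangle of $G_i$ then $\mathrm{dist}_{G_i'}(a,b)=2$; and if $\{a,b\}$ lies in no triangle of $G_i$ but $\mathrm{dist}_{G_i'}(a,b)\le 2k-1$, then a shortest $a$--$b$ path closes, through the edge $\{a,b\}$, into a simple cycle of $G_i$ of length in $\{4,\dots,2k\}$. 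Since distances are integral and $2(k-\delta)<2k$, a $(k-\delta)$-approximation distinguishes $\mathrm{dist}=2$ from $\mathrm{dist}\ge 2k$; hence every pair that the assumed algorithm reports at distance $\ge 2k$ is certified not to lie in a triangle, and the surviving ``candidate'' pairs of $G_i$ are precisely its triangle edges together with at most $2k\cdot(\#\text{cycles of length between }4\text{ and }2k\text{ in }G_i)$ further pairs.

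For the post-processing I would scan the subgraphs one at a time, maintaining the set of edges of $G$ already certified to lie in a triangle (initialized to $E'$); for each candidate pair of $G_i$ whose underlying edge of $G$ is not yet certified, I test in $O(n^{\alpha})$ time whether its two endpoints have a common neighbor in $G_i$, and certify that edge if so. Correctness follows verbatim from \Cref{thm:main1}. For the running time, the key point is that a \emph{successful} test happens at most once per triangle edge of $G$ --- at most $m=O(n^{3/2})$ of them, for a total of $O(n^{3/2+\alpha})$ --- whereas every other (\emph{wasted}) test is charged to a distinct incidence of an edge lying on a short cycle of some $G_i$, of which there are $O\!\bigl(k\cdot s\cdot n^{(\omega-1)/4+2k\alpha+\eps'}\bigr)$ in expectation by \Cref{thm:main1} (here $\eps'$ is that theorem's error parameter and each wasted test costs $O(n^\alpha)$). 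Together with the cost $s\cdot n^{(1/2+2\alpha)(1+\frac1{ck-d}-\eps)}$ of running the assumed algorithm on all $s$ instances, these are the three terms that must be $n^{2-\Omega(1)}$: the post-processing term forces $\alpha<\tfrac{(3-\omega)/4-\eps'}{2k-2}$, and the algorithm term is $n^{2-\Omega(1)}$ exactly when $\tfrac1{ck-d}-\eps<\tfrac{\alpha}{1/2+2\alpha}$. Using $d+2=c$ one verifies $\tfrac{1}{2(ck-d-2)}=\tfrac{(3-\omega)/4}{2k-2}$, so the two admissible windows for $\alpha$ abut exactly and the slack provided by the $-\eps$ in the statement is precisely what allows a valid choice of $\alpha$ together with a small enough $\eps'$. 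Thus a $(k-\delta)$-approximate offline algorithm with the forbidden running time would solve All-Edge-Triangle in $n^{2-\Omega(1)}$ time, contradicting the conjectures; the distance-oracle consequence then follows from the routine fact that an oracle with $O(m^{1+\frac1{ck-d}-\eps})$ preprocessing and $O(m^{\frac1{ck-d}-\eps})$ query time answers $m$ offline queries in $O(m^{1+\frac1{ck-d}-\eps})$ total time.

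I expect the main obstacle to be exactly this post-processing: one must resist verifying each (candidate, subgraph) incidence separately, which would already cost $\Theta(n^2)$, and instead amortize successful verifications against the $\le m$ triangle edges of $G$ and wasted ones against the few surviving short cycles; combined with the tight parameter balancing above, this is also what pins down the precise constants $c$ and $d$, and it explains why the lower bound improves as $\omega\to 2$.
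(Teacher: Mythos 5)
Your proposal is correct and follows essentially the same route as the paper's proof of Corollary~\ref{cor:distOracles} (via Theorem~\ref{thm:distanceOracles}): apply Theorem~\ref{thm:main1}, delete one bipartite edge class from each slice, query the deleted pairs offline, exploit the gap between distance $2$ and the cycle-length threshold, charge false positives to the surviving short cycles, and deduplicate the $O(n^{\alpha})$-time triangle checks before balancing $\alpha$ against the running-time exponent. The only difference is cosmetic: the paper proves the bound for a $(k/2-\delta')$-approximation with cycle parameter $k$ and then substitutes $2k$ to get the stated constants $c,d$, whereas you work directly with factor $k-\delta$ and cycle parameter $2k$; the arithmetic (including $d+2=c$) coincides.
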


The \emph{Offline Distance Oracle} problem in the above corollary is at the core of the dynamic shortest paths problem as well. 
By a straightforward reduction it implies that Chechik's decremental APSP $k$-approximation algorithm in total time $O(m^{1+1/\alpha k})$ is tight up to the constant $\alpha$.
For fully dynamic algorithms, we can strengthen the result further by ruling out algorithms that start with a cubic-time preprocessing phase (which is natural as it gives the algorithm enough time to precompute all the distances). However, in this fully dynamic case, the best known upper bound only achieves a $(\log{n})^{O(k)}$-approximation in $O(m^{1+1/k})$ time.

\begin{restatable}[Hardness of Approximation for Dynamic APSP]{corollary}{CorDynamic}\label{cor:dynamic}
Let $k\geq 4$ be an integer, and let $\eps,\delta>0$,  $c = \frac{4}{3-\omega}$ and $d = \frac{2\omega-2}{3-\omega}$.
Assuming either the $3$-SUM Conjecture or the APSP Conjecture:
\begin{itemize}
 \item No algorithm can maintain a simple graph through a sequence of edge-deletion updates in a total of $O(m^{1+\frac{1}{ck-d}-\eps})$ time, while answering distance queries between a given pair of nodes with a $(k-\delta)$-approximation in $O(m^{\frac{1}{ck-d}-\eps})$ time.
    \item No algorithm can preprocess a simple graph in $O(n^3)$ time and then support (fully dynamic) updates and  queries in $O(m^{\frac{1}{ck-d}-\eps})$  time, where an answer to a query is a $(k-\delta)$-approximation to the distance between a given pair of nodes.
\end{itemize}
\end{restatable}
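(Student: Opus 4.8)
The plan is to show that any dynamic shortest-paths data structure achieving the time bounds forbidden in the statement could replay, within the same budget, the offline computation underlying Corollary~\ref{cor:distOracles}, and therefore refute the $3$-SUM/APSP conjectures. Recall how that corollary is obtained: starting from an All-Edge-Triangle instance on a $\sqrt n$-degree graph (which needs $n^{2-o(1)}$ time by Theorem~\ref{thm:3sumReduction}), one applies Theorem~\ref{thm:main1} to get subgraphs $G_1,\dots,G_s$ with few short cycles, builds for each $G_i$ the standard layered distance-oracle graph, and for each relevant edge of $G_i$ issues one distance query whose answer is ``small'' if the edge lies in a triangle and ``large'' (by more than a factor $k-\delta$) otherwise; the $n^{2-\Omega(1)}$ false positives created by the surviving short cycles are filtered in sub-quadratic postprocessing. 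We make the dynamic algorithm answer exactly this list of queries, so the $(k-\delta)$-approximation guarantee carries over verbatim.

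\emph{Decremental case.} This is a direct consequence of Corollary~\ref{cor:distOracles}: a decremental APSP structure, run on its initial graph with an empty deletion sequence, is in particular a static distance oracle. We initialize it on the single graph underlying the offline instance --- the disjoint union of the layered graphs of $G_1,\dots,G_s$, which has $m=n^{2-\Omega(1)}$ edges by the second bullet of Theorem~\ref{thm:main1} --- make no updates, and answer the $m$ queries. Since ``total update time'' includes initialization, an algorithm with total update time $O(m^{1+1/(ck-d)-\eps})$ and query time $O(m^{1/(ck-d)-\eps})$ would solve the offline problem in the same total time, contradicting Corollary~\ref{cor:distOracles}. (One may instead delete each layered $G_i$ after processing its queries; this adds only $\sum_i|E(G_i)|=n^{2-\Omega(1)}$ deletions and changes nothing.) This matches the decremental upper bound of Chechik's algorithm up to the constant in the exponent, as claimed.

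\emph{Fully dynamic case with $O(n^3)$ preprocessing.} Here we cannot feed the whole instance to the preprocessing routine: the union of all layered $G_i$ has $n^{2-\Omega(1)}$ vertices, so $O(|V|^3)$ preprocessing would by itself be super-quadratic and could just precompute all distances. We therefore process the subgraphs one at a time, keeping the dynamic graph at $O(n^{1/2+\alpha})$ vertices and $m'=O(n^{1/2+2\alpha})$ edges (third bullet of Theorem~\ref{thm:main1}), so that $O(|V|^3)=O(n^{3/2+3\alpha})$ preprocessing is sub-quadratic. Concretely, we preprocess once a fixed template graph on the right vertex set (say the empty graph), and then for $i=1,\dots,s$ perform $O(|E(G_i)|)$ edge updates to realize the layered graph of $G_i$, issue one query per relevant edge of $G_i$, and perform $O(|E(G_i)|)$ updates to restore the template. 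The total number of updates and of queries is $\sum_i O(|E(G_i)|)=n^{2-\Omega(1)}$, and each update/query costs $O(m'^{1/(ck-d)-\eps})$; with the same parameter balancing as in Corollary~\ref{cor:distOracles}, now with the small current size $m'$ governing the per-operation cost, the whole simulation runs in $n^{2-\Omega(1)}$ time, again contradicting the hardness of All-Edge-Triangle.

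The main obstacle is exactly this preprocessing-budget issue in the fully dynamic case: because $O(n^3)$ preprocessing is so generous, we are forced to keep the dynamic graph tiny, hence to rebuild the layered gadget from scratch for every $G_i$, and we must then re-verify that the cost of all these rebuild-updates together with the per-operation overhead $m'^{1/(ck-d)}$ (in terms of the current size $m'$, not the global instance size) still sums to $n^{2-\Omega(1)}$ --- i.e.\ that the $\alpha$ supplied by Theorem~\ref{thm:main1} is simultaneously small enough for the preprocessing to be cheap and large enough for this sum to beat $n^2$. Everything else --- the layered-graph gadget, the preservation of the $(k-\delta)$ factor, and the filtering of short-cycle-induced false positives --- is inherited unchanged from the proof of Corollary~\ref{cor:distOracles}, and the decremental part needs nothing beyond the static-embeds-into-decremental observation.
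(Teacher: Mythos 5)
Your proposal matches the paper's proof: the decremental bullet is the same black-box observation that a decremental structure run with no deletions (initialization counted in the total update time) is an offline distance oracle, contradicting Corollary~\ref{cor:distOracles}, and the fully dynamic bullet is the same phase-by-phase simulation in which the dynamic graph is kept at slice size $O(n^{1/2+\alpha})$ so that the $O(N^3)=O(n^{3/2+3\alpha})$ preprocessing is subquadratic, the slices are swapped in via $O\bigl(\sum_i |E(G_i)|\bigr)=O(n^{2-\alpha})$ updates with per-operation cost governed by the current size $m'=O(n^{1/2+2\alpha})$, and short-cycle-induced false positives are filtered exactly as in Theorem~\ref{thm:distanceOracles}. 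Two cosmetic inaccuracies do not affect correctness: the hard offline instances are the individual slices with their $B\times C$ edges removed (not a ``layered'' gadget, and not their disjoint union, which would only give a weaker exponent if used to re-derive the offline bound), and the paper swaps $G_i'$ for $G_{i+1}'$ directly rather than rebuilding from an empty template, which only changes the update count by a constant factor.
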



We next go back to the $4$-Cycle problem.
A direct corollary of our theorem is that the All-Edge version has a super-linear lower bound, finally extending the $m^{3/2}$ lower bound for triangle enumeration from P{\u{a}}tra{\c{s}}cu's seminal paper \cite{Pat10} to a hardness result for $4$-cycle enumeration.\footnote{P{\u{a}}tra{\c{s}}cu's lower bound is presented as a lower bound for the \emph{listing} problem, rather than enumeration, where we are required to list $m$ triangles (and the lower bound is $m^{4/3-o(1)}$). Our lower bound also extends to this version, but the exponent is smaller.}
If $\omega=2$ the lower bound is $m^{5/4-o(1)}$, and with the current $\omega$ it is $\Omega(m^{1.1927})$.

\begin{restatable}[Hardness for $k$-Cycle Enumeration]{corollary}{CorEnum}\label{cor:Enum}
Let $k\geq 3$ be an integer, and let $\eps>0$. Assuming either the $3$-SUM Conjecture or the APSP Conjecture, no algorithm can process an $m$-edge graph in $O(m^{1+\frac{3-\omega}{2(4-\omega)}-\eps})$ time and then enumerate $k$-cycles with $m^{o(1)}$ delay.
\end{restatable}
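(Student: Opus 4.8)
The plan is to derive the corollary directly from Theorem~\ref{thm:main1}, using two classical facts on top of it: a $C_4$-sparse graph has few triangles, and triangles can be ``lifted'' to $k$-cycles by subdividing edges. I would start from the All-Edge-Triangle problem on graphs with $n$ vertices and maximum degree $\sqrt n$, which requires $n^{2-o(1)}$ time under the $3$-SUM and APSP conjectures (Theorem~\ref{thm:3sumReduction}). Given such an instance $G$, run the algorithm of Theorem~\ref{thm:main1} with cycle parameter $4$, a tiny $\eps'\in(0,\tfrac{3-\omega}{4})$, and a constant $\alpha\in(0,\tfrac{3-\omega}{4})$ to be optimized. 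This yields, in subquadratic time, a set $E'$ of edges certifiably in triangles, together with $s=n^{3/2-3\alpha}$ subgraphs $G_1,\dots,G_s$, each with $O(n^{1/2+\alpha})$ vertices, maximum degree $O(n^{\alpha})$ (hence $\tilde m_i:=|E(G_i)|=O(n^{1/2+2\alpha})$ edges), and $\mathbb{E}[\#C_4(G_i)]=O(n^{(\omega-1)/4+4\alpha+\eps'})$, such that an edge of $G$ lies in a triangle of $G$ iff it is in $E'$ or lies in a triangle of some $G_i$. Since $\sum_i\tilde m_i$ and the assembly step are subquadratic, it suffices to solve All-Edge-Triangle on every $G_i$ in total time $n^{2-\Omega(1)}$.

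The key point is the following. For any graph, $3\cdot\#\{\text{triangles}\}=\sum_{\{u,v\}\in E}|N(u)\cap N(v)|\le |E|+4\cdot\#C_4$, because every edge lying in $t\ge 2$ triangles creates $\binom{t}{2}$ distinct $4$-cycles; hence, in expectation, $G_i$ has only $O(\tilde m_i+n^{(\omega-1)/4+4\alpha+\eps'})$ triangles. I would then reduce All-Edge-Triangle on $G_i$ to $k$-cycle enumeration: replace $G_i$ by its standard tripartite blow-up (same size up to constants) with parts $A,B,C$, and subdivide the three edge classes into internally disjoint paths of lengths $\ell_{AB},\ell_{BC},\ell_{CA}\ge 1$ summing to $k$, so that every triangle becomes a $k$-cycle and $|E(H_i)|=O(\tilde m_i)$. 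With the lengths chosen appropriately (all roughly $k/3$ when $k\ge 6$; $(2,2,1)$ for $k=5$; $(2,1,1)$ for $k=4$; no subdivision for $k=3$), one checks that the only $k$-cycles of $H_i$ are the triangle-cycles, except for $k=4$ where at most $O(\#C_4(G_i))$ further $4$-cycles arise from the bipartite part. In all cases $\mathbb{E}[\#C_k(H_i)]=O(\tilde m_i+n^{(\omega-1)/4+4\alpha+\eps'})$, and enumerating all $k$-cycles of $H_i$ determines, for every edge of $G_i$, whether it lies in a triangle.

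Finally, assume $k$-cycles can be enumerated with $O(m^{1+\beta-\eps})$ preprocessing and $m^{o(1)}$ delay, where $\beta:=\tfrac{3-\omega}{2(4-\omega)}-\eps$. Running this on each $H_i$ and summing, the expected total time is
\[
\sum_{i=1}^{s}\Big(O(\tilde m_i^{\,1+\beta})+O\big(\tilde m_i+n^{(\omega-1)/4+4\alpha+\eps'}\big)\cdot n^{o(1)}\Big).
\]
Plugging in $\tilde m_i=O(n^{1/2+2\alpha})$ and $s=n^{3/2-3\alpha}$, the second group of terms is $n^{2-\Omega(1)}$ as soon as $\alpha+\eps'<\tfrac{3-\omega}{4}$, and the first group equals $n^{3/2-3\alpha+(1/2+2\alpha)(1+\beta)}$, which is $n^{2-\Omega(1)}$ precisely when $\beta<\tfrac{2\alpha}{1+4\alpha}$; since $\tfrac{2\alpha}{1+4\alpha}\nearrow\tfrac{3-\omega}{2(4-\omega)}$ as $\alpha\nearrow\tfrac{3-\omega}{4}$, some admissible $\alpha$ (and then small enough $\eps'$) works for every $\eps>0$. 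This contradicts the $n^{2-o(1)}$ lower bound for All-Edge-Triangle; the failure probability of Theorem~\ref{thm:main1} is handled by Markov's inequality and a constant number of repetitions.

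I expect the main obstacle to be the lifting step: arguing, uniformly over all $k\ge 4$, that the subdivision lengths can be chosen so that no ``parasitic'' $k$-cycle survives --- i.e.\ that no short cycle or closed walk of $G_i$ other than a triangle maps to a length-exactly-$k$ cycle of $H_i$ --- and carefully isolating the genuine $k=4$ exception (the $O(\#C_4(G_i))$ extra $4$-cycles), since it is only because of this exception that applying Theorem~\ref{thm:main1} with cycle parameter $4$ suffices and the final exponent is $k$-independent. The convexity bound, the assembly, and the parameter optimization are routine.
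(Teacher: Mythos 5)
Your proposal is correct in substance and reaches the right exponent, but it deviates from the paper's route in two interesting ways. First, the paper must control the delay spent on \emph{genuine} (subdivided-triangle) outputs, and it does so by proving a separate hardness result (Theorem~\ref{thm:All-Edge-Few-Triangles}): All-Edge-Triangle remains 3SUM/APSP-hard even when the instance has only $O(n^{3/2+\gamma})$ triangles, proved by another round of dense-piece removal; Corollary~\ref{cor:Enum} is then obtained by combining this with Theorem~\ref{thm:4-cycle-enumeration}. You avoid that ingredient entirely via the convexity observation $3\cdot\#\{\text{triangles}\}\le |E|+O(\#C_4)$ applied \emph{inside each slice}: since Theorem~\ref{thm:main1} already bounds the $4$-cycles of each $G_i$, the triangles of each $G_i$ are automatically at most $O(\tilde m_i+\#C_4(G_i))$, and summing over slices gives a subquadratic output size without any hypothesis on the original instance. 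This is a genuine simplification. Second, for $k\ge 5$ the paper does not re-run the slicing argument per $k$: it proves the $k=4$ (and $k=3$) enumeration lower bound once and transfers it by the folklore subdivision reductions of Appendix~\ref{sec:reduction34cycle} (which is exactly why the exponent is $k$-independent), whereas you build a bespoke three-length subdivision $(\ell_{AB},\ell_{BC},\ell_{CA})$ summing to $k$ for every $k$. Your route works but forces the case analysis over cycle profiles $(a,b,c)$ (all of the same parity) with $a\ell_{AB}+b\ell_{BC}+c\ell_{CA}=k$; note a small slip: for $k=8$ with lengths $(3,3,2)$ the profile $(0,0,4)$ also hits $k$, so ``only triangle-cycles survive for $k\ge 6$'' is not literally true --- though these parasites are again $4$-cycles of $G_i$ and hence controlled by exactly the mechanism you already use for $k=4$, so your argument self-repairs.

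One step would fail as literally written: the ``standard tripartite blow-up'' of $G_i$. In the blow-up with $A=B=C=V(G_i)$, every $2$-path of $G_i$ creates spurious $4$-cycles (e.g.\ $v_C,u_A,w_C,u_B$ from the star $v\!-\!u\!-\!w$), with profiles $(2,2,0),(2,0,2),(0,2,2)$; there can be $\Theta(n^{1/2+3\alpha})$ of these per slice, i.e.\ $\Theta(n^2)$ in total, and for your $k=4$ choice $(2,1,1)$ the $(0,2,2)$ ones lift to genuine $4$-cycles of $H_i$, destroying the delay bound. The fix is simply to drop the blow-up: the hard instance is assumed tripartite, and the $G_i$ produced by Theorem~\ref{thm:main1} are slices $(A_j,B_k,C_\ell)$ of it, hence already tripartite --- this is what the paper exploits when it subdivides only the $B\times C$ edges of each slice. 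With that correction (and the $k=8$ caveat above), your parameter calculation $\beta<2\alpha/(1+4\alpha)\nearrow\frac{3-\omega}{2(4-\omega)}$ matches the paper's and the proof goes through.
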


Unlike our two previous corollaries, the lower bound in Corollary~\ref{cor:Enum} does not get weaker with $k$.
This is because for all $k>4$ we can simply apply Theorem~\ref{thm:main1} with $k=4$ and then use known simple gadget reductions that show that $k$-cycle (detection or enumeration) for any $k$ is at least as hard as either the $k=3$ or $k=4$ case (see \cite{DKS17}).\footnote{The reduction simply subdivides some of the edges. Note that this trick is not useful in the hardness of approximation context because subdividing edges increases the distances even in the yes-case.} 
Either way, we separate $k$-cycle from the class $\textsc{DelayC}_{lin}$ of problems solvable with linear time preprocessing and constant delay~\cite{DG07}. 
This class has received significant attention in recent years from the enumeration algorithms community (see e.g. \cite{segoufin2015constant,florenzano2018constant,CK19,CK20}).
Our result is somewhat surprising because enumerating \emph{all} cycles (without restricting $k$) is \emph{in} the class $\textsc{DelayC}_{lin}$ \cite{birmele2013optimal}.

\medskip
Theorem~\ref{thm:main1} does not fully resolve the Main Open Question~\ref{oq:main}, because it does leave $n^{\Omega(1)}$ short cycles in the graph.
This is not an issue for all of the applications above because the application problem returns multiple answers (that can be filtered afterwards).
Unfortunately, this cannot be done for problems with a single output, such as our most basic $4$-Cycle detection problem.
Nonetheless, with a bit more work we can actually get rid of all $k$-cycles in the $k=4$ case, as we discuss next.

%
%

\subsection{Second Result: Removing \emph{all} $4$-Cycles}

Our most technical result is a strengthening of Theorem~\ref{thm:main1}, giving a reduction to graphs that are \emph{completely} $4$-cycle-free.
The following theorem is analogous to Theorem~\ref{thm:main1} and should be thought of in the same way; as a self-reduction from Triangle. The main two differences are that it only works for $k=4$, but it achieves the much stronger property of $4$-cycle freeness in the subgraphs it produces.

\begin{restatable}{theorem}{ThmAllFourCycles}
\label{thm:main2}
For any choice of constant~$\alpha\in (0,\frac{3-\omega}{8})$ and~$\varepsilon\in (0, \frac{3-\omega}{4}-2\alpha)$ the following holds.
Given a graph~$G$ with~$n$ vertices and maximum degree at most $\sqrt{n}$, there is a randomized algorithm, running in time $O(n^{2-\eps})$, that returns a subset of the edges $E' \subseteq E(G)$ and a collection of $s=n^{3/2-3\alpha}$ subgraphs $G_1,\ldots,G_s \subseteq G$ such that:
\begin{itemize}
    \item Every edge~$e\in E'$ participates in a triangle of~$G$.
    \item If an edge $e \in E(G)$ participates in a triangle of~$G$, then it is either in $E'$ or it participates in a triangle in at least one subgraph~$G_i$.
    \item With high probability, each $G_i$ has $O(n^{1/2+\alpha})$ vertices and maximum degree $O(n^{\alpha})$.
    \item With probability larger than~$0.99$, no subgraph~$G_i$ contains a~$4$-cycle.
\end{itemize}
\end{restatable}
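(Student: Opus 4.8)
The plan is to start from the decomposition of Theorem~\ref{thm:main1} with $k=4$, which already hands us $s = n^{3/2-3\alpha}$ subgraphs $G_i$ on $O(n^{1/2+\alpha})$ vertices with maximum degree $O(n^\alpha)$ and, crucially, with only $O(n^{\frac{\omega-1}{4}+4\alpha+\eps})$ expected $4$-cycles each. The point of the new argument is to \emph{destroy} these few remaining $4$-cycles without destroying any triangle. The naive idea — delete one edge from each surviving $4$-cycle — fails because a deleted edge might be the edge of a triangle we were supposed to preserve. So instead I would \emph{re-randomize}: superimpose a fresh random sampling step on top of the Theorem~\ref{thm:main1} construction so that for each fixed triangle the probability it survives is still large, while the probability that \emph{any} $4$-cycle survives is made $o(1)$ by a union bound over the (few) $4$-cycles. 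Concretely, recall that the subgraphs in Theorem~\ref{thm:main1} are obtained by randomly partitioning/coloring the vertex set into buckets and taking induced subgraphs on triples of buckets (this is the standard color-coding–style triangle self-reduction); I would compose this with an independent choice of a random subset of the edges, or equivalently refine each bucket further, so that a $4$-cycle's four vertices must all fall into a controlled configuration for it to appear in some $G_i$.

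The key steps, in order, are: (1) Invoke Theorem~\ref{thm:main1} with $k=4$ and a slightly smaller internal parameter (say $\alpha' = \alpha$, $\eps' = \eps + 2\alpha < \tfrac{3-\omega}{4}$, which is exactly why the hypothesis is $\alpha < \tfrac{3-\omega}{8}$ and $\eps < \tfrac{3-\omega}{4}-2\alpha$) to get the list of subgraphs together with the bound $N_4 := \sum_i \mathbb{E}[\#4\text{-cycles in }G_i] = n^{2-\Omega(1)}$. (2) On top of each $G_i$, run an additional independent ``cleanup'' random process: partition $E(G_i)$ (or $V(G_i)$) and keep only a random sub-collection of sub-subgraphs $G_{i,j}$, designed so that each triangle of $G$ that landed in $G_i$ survives into some $G_{i,j}$ with constant probability, and the \emph{expected} number of surviving $4$-cycles across all $i,j$ is $N_4 \cdot p \le n^{-\Omega(1)}$ for the appropriate survival probability $p$ of a $4$-vertex set. (3) Apply Markov's inequality: with probability $> 0.99$ \emph{no} surviving $G_{i,j}$ contains a $4$-cycle. (4) Re-index the $G_{i,j}$ as $G_1,\dots,G_s$; verify the counts $s = n^{3/2-3\alpha}$, vertex count $O(n^{1/2+\alpha})$, and degree $O(n^\alpha)$ still hold (possibly after absorbing constants into $\alpha,\eps$), and verify triangle coverage is preserved by adjusting $E'$ to absorb the edges in triangles that failed to survive — but those edges, being in triangles, are detected in the process and can be thrown into $E'$, all within the $O(n^{2-\eps})$ time budget since $N_4$ and the total edge count are subquadratic.

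I expect the main obstacle to be step (2): designing the cleanup process so that it \emph{simultaneously} (a) preserves each triangle with constant probability, (b) keeps the total work and total number of subgraphs under control (no blowup beyond $n^{3/2-3\alpha}$), and (c) drives the expected $4$-cycle count to $o(1)$. The tension is that a $4$-cycle and a triangle both have their vertices spread over a constant number of buckets, so a single coarse bucketing cannot separate their survival probabilities; the exponent gap $\frac{\omega-1}{4}+4\alpha+\eps$ versus $2$ is what we have to spend, and the cleanup must be calibrated — likely by a \emph{second} level of hashing that is fine enough that a $4$-cycle needs an extra coincidence (its two ``diagonal'' vertex pairs colliding in a way a triangle does not) while a triangle, being a single clique, does not pay this cost. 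Getting the probability bookkeeping to close with room to spare for the union bound, while respecting the degree bound $O(n^\alpha)$ in each final subgraph, is the delicate part; everything else is routine verification that the Theorem~\ref{thm:main1} guarantees are inherited.
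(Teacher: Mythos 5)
There is a genuine gap, and it sits exactly where you predicted: step (2) cannot be realized, and the patch you offer for coverage in step (4) is circular. First, the coverage bullet of Theorem~\ref{thm:main2} (every edge in a triangle is in $E'$ or in a triangle in some $G_i$) is an unconditional guarantee, inherited from the fact that the slicing is a \emph{partition}, so every triangle lands in exactly one slice. Any ``cleanup'' that preserves each triangle only with constant probability destroys this, and your fix --- throw the edges of the triangles that failed to survive into $E'$ --- presupposes you can identify which edges lie in triangles; that is precisely the All-Edge-Triangle problem the whole reduction starts from, and moreover $E'$ may only contain edges \emph{verified} to be in triangles. Second, the quantitative bookkeeping cannot close. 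After Theorem~\ref{thm:main1} with $k=4$ the total number of surviving $4$-cycles across the slices is about $n^{3/2+\alpha+(\omega-1)/4+\eps'}$, so a union bound/Markov argument needs each of them killed with probability at least $1-n^{-3/2-\Omega(1)}$. But a $4$-cycle has only one vertex (or edge) more than a triangle: under independent sampling at rate $q$, keeping triangles with constant probability forces $q=\Omega(1)$ and then $4$-cycles also survive with constant probability; under any further refinement of the vertex partition, the ``extra coincidence'' a $4$-cycle must pay is one bucket collision, whose probability is at least $1/(\#\text{buckets})\ge 1/n$ while triangles are preserved deterministically. So no second level of hashing can buy the factor $n^{3/2+\Omega(1)}$ you need; the exponent gap you hoped to spend is simply not available to this kind of argument.

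The paper's proof is, ironically, the ``naive idea'' you dismissed, made sound by two ingredients. (a) An output-sensitive enumeration (Lemma~\ref{lem:listfourcycles}): because every $4$-cycle inside a slice $(A_j,B_k,C_\ell)$ uses exactly two vertices from one part, all remaining $4$-cycles in all slices can be listed in time $O(n^{3/2+\alpha}+c)$ where $c$ is their number, which is subquadratic after the dense-piece removal and slicing. (b) Before deleting an edge that lies on a listed $4$-cycle, one checks in $O(n^{\alpha})$ time (the slice degree bound) whether it participates in a triangle \emph{within its slice}, and if so adds it to $E'$; then deleting it cannot break coverage, since any triangle through a deleted edge has that edge in $E'$. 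The total cost is $O(n^{3/2+2\alpha+(\omega-1)/4+\eps'})$, subquadratic for $\alpha<(3-\omega)/8$, which is exactly where the hypothesis on $\alpha$ comes from. So the missing ideas in your proposal are the output-sensitive $4$-cycle listing within slices and the observation that triangle-membership of a \emph{listed} edge can be verified cheaply inside its low-degree slice; with those, no re-randomization is needed.
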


This result fully resolves the main Open Question~\ref{oq:main} in the $k=4$ case.
Consequently, we improve the lower bound for girth approximation in $m^{1+o(1)}$ time from $4/3-\delta$ to $5/3-\delta$; thus making the first non-trivial step towards Open Question~\ref{oq:girth}.
And most importantly, we establish the first conditional lower bound for $4$-Cycle detection, resolving Open Question~\ref{oq:4cycle}.
If $\omega=2$ the lower bound is $m^{7/6-o(1)}$, and with the current $\omega$ it is $\Omega(m^{1.1194})$. Note, however, that Corollary~\ref{cor:fourFree} uses a less standard conjecture compared to our previous results.

\begin{restatable}[Hardness for Triangle in $4$-Cycle-Free Graphs]{corollary}{CorFourFree}\label{cor:fourFree}
Assuming that triangle detection in graphs with maximum degree at most $ \sqrt{n}$ requires $n^{2-o(1)}$ time, no algorithm can solve any of the following problems in $O(m^{1+\frac{3-\omega}{2(5-\omega)}-\eps})$ time, for any $\eps>0$:
\begin{itemize}
    \item Decide if an $m$-edge graph has a $4$-cycle.
    \item Decide if an $m$-edge $4$-cycle-free graph has a triangle.
    \item Compute a $(5/3-\delta)$-approximation to the girth of an $m$-edge graph, for any $\delta>0$.
\end{itemize}
\end{restatable}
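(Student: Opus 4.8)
Essentially all of the difficulty is packed into Theorem~\ref{thm:main2}; the corollary is a matter of feeding its output into standard gadget reductions and balancing parameters, so that is what I would sketch. Start from an instance $G$ of triangle detection with $n$ vertices and maximum degree at most $\sqrt n$, which by hypothesis requires $n^{2-o(1)}$ time. Fix a parameter $\alpha\in\bigl(0,\tfrac{3-\omega}{8}\bigr)$ to be chosen at the end (close to the upper endpoint) together with a small $\eps>0$ satisfying $\eps<\tfrac{3-\omega}{4}-2\alpha$, so that Theorem~\ref{thm:main2} applies. Running it produces, in time $O(n^{2-\eps})$, a set $E'$ and $s=n^{3/2-3\alpha}$ subgraphs $G_1,\dots,G_s$, each with $O(n^{1/2+\alpha})$ vertices and maximum degree $O(n^{\alpha})$ -- hence $m_i:=|E(G_i)|=O(n^{1/2+2\alpha})$ edges -- such that, with probability $>0.98$ (union-bounding the $>0.99$ event with the ``with high probability'' size bounds), all $G_i$ are $4$-cycle-free and of the claimed size. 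By the first two bullets of the theorem, $G$ has a triangle if and only if $E'\neq\emptyset$ or some $G_i$ has a triangle; so it suffices to decide, for each $i$, whether the $4$-cycle-free graph $G_i$ contains a triangle.

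Next I would reduce ``triangle in a $4$-cycle-free graph $G_i$'' to each of the three target problems. For the first item there is nothing to do: $G_i$ is already a $4$-cycle-free graph with $m_i$ edges. For the $4$-Cycle problem, form $G_i'$ by adding, for every edge $\{u,v\}\in E(G_i)$, a fresh degree-two vertex adjacent to $u$ and to $v$; then $G_i'$ has $O(m_i)$ edges and maximum degree $O(n^{\alpha})$, and using that $G_i$ has no $4$-cycle one checks that $G_i'$ has a $4$-cycle if and only if $G_i$ has a triangle (a triangle $u,v,w$ yields the $4$-cycle $u\text{--}x_{\{u,v\}}\text{--}v\text{--}w\text{--}u$; conversely any $4$-cycle of $G_i'$ must, since the new vertices form an independent set of degree two, pass through exactly one new vertex $x_{\{u,v\}}$, whose two cycle-neighbors are $u,v$, so the other two edges form a length-$2$ path $u\text{--}w\text{--}v$ in $G_i$, certifying the triangle $u,v,w$). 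For the Girth problem, observe that a $4$-cycle-free graph has girth $3$ if it contains a triangle and girth $\ge 5$ otherwise; hence any $(5/3-\delta)$-approximation to the girth of $G_i$ returns a value $<5$ in the first case and $\ge 5$ in the second, so thresholding at $5$ decides whether $G_i$ has a triangle.

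Now I would balance parameters. Suppose one of the three problems admitted an algorithm running in $O(m^{1+\gamma})$ time with $\gamma=\tfrac{3-\omega}{2(5-\omega)}-\eps'$ for some $\eps'>0$; first amplify its success probability to $1-o(1/s)$ at a multiplicative cost of $O(\log n)$. Applying it to all $s$ reduced instances, each of size $O(m_i)$, and combining with the test $E'\overset{?}{=}\emptyset$ decides triangle detection on $G$ in time
\[
O(n^{2-\eps})\;+\;s\cdot\widetilde O\!\bigl(m_i^{\,1+\gamma}\bigr)\;=\;O(n^{2-\eps})\;+\;\widetilde O\!\bigl(n^{\,2-\alpha+(1/2+2\alpha)\gamma}\bigr),
\]
which is $n^{2-\Omega(1)}$ precisely when $\gamma<\tfrac{2\alpha}{1+4\alpha}$; since $\tfrac{2\alpha}{1+4\alpha}\to\tfrac{3-\omega}{2(5-\omega)}$ as $\alpha\to\tfrac{3-\omega}{8}$, choosing $\alpha$ close enough to $\tfrac{3-\omega}{8}$ makes this hold for the given $\gamma$. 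A union bound over the failure of Theorem~\ref{thm:main2} and the $s$ amplified calls shows the whole procedure is correct with probability $>0.9$, contradicting the assumed $n^{2-o(1)}$ lower bound and proving all three items.

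\textbf{Main obstacle.} With Theorem~\ref{thm:main2} granted, the only delicate points are the arithmetic -- checking that the target exponent $\tfrac{3-\omega}{2(5-\omega)}$ is exactly the supremum of $\tfrac{2\alpha}{1+4\alpha}$ over the admissible range of $\alpha$, and that the constraint $\eps<\tfrac{3-\omega}{4}-2\alpha$ of Theorem~\ref{thm:main2} still leaves positive room once $\alpha$ is pushed near $\tfrac{3-\omega}{8}$ -- and the probability bookkeeping: because $s=n^{\Omega(1)}$ is polynomial rather than constant, one cannot union-bound directly over the $s$ recursive calls and must first boost the assumed algorithm's error below $1/s$, an $O(\log n)$ overhead that is harmless inside the $n^{-\Omega(1)}$ slack. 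I would also flag, as the excerpt does, that this corollary must rest on a less standard assumption than Corollaries~\ref{cor:distOracles}--\ref{cor:Enum}: the $4$-Cycle and Girth problems are single-output, so the ``filter out the few false positives afterwards'' trick that lets those corollaries start from $3$-SUM or APSP (which only make \emph{All-Edge}-Triangle hard) is unavailable here, forcing a direct hardness assumption on triangle \emph{detection} and the use of the stronger, $4$-cycle-free version in Theorem~\ref{thm:main2} rather than Theorem~\ref{thm:main1}.
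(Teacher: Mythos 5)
Your proposal is correct and follows essentially the same route as the paper: run Theorem~\ref{thm:main2} with $\alpha$ pushed toward $\frac{3-\omega}{8}$, observe that a triangle survives in $E'$ or in some $4$-cycle-free slice $G_i$, solve each slice via the assumed algorithm (gadget to $4$-Cycle, direct call for triangle-in-$4$-cycle-free graphs, girth threshold at $5$), and balance $s\cdot m_i^{1+\gamma}$ exactly as in the paper's Theorem~\ref{lem:TimeFourDet} to recover the exponent $\frac{3-\omega}{2(5-\omega)}$. The only cosmetic deviations are your $4$-cycle gadget (a parallel length-$2$ path on every edge, rather than the paper's subdivision of only the $B\times C$ edges of the tripartite slice — both are valid given $4$-cycle-freeness) and the explicit success-probability amplification, which the paper leaves implicit.
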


As mentioned already, folklore gadget reductions show that either Triangle or $4$-Cycle can be reduced to a single instance of $k$-Cycle detection on the same number of edges, for any $k\geq 3$ (we add a proof of this statement in Appendix~\ref{thm:ckto34}, similar reductions appear, for example, in \cite{DKS17}).
Thus, we establish a super-linear $\Omega(m^{1.1194})$ lower bound for $k$-Cycle detection for \emph{all} constant $k\geq 3$.



Breaking the hardness assumption at the base of our conditional lower bound would be a major breakthrough.
The longstanding upper bound for triangle detection is $O(\min\{m^{2\omega/(\omega+1)},n^{\omega}\})$ \cite{AYZ97}.
Even if an optimal matrix multiplication algorithm exists ($\omega=2$) no algorithm breaks the quadratic barrier when $m=n^{1.5}$.
This continues to be the case in the natural setting where the maximum degree (rather than the average degree) is $\sqrt{n}$.
Note that we cannot base these lower bounds on 3SUM or APSP, as we did in the results above for problems with multiple outputs, until we know how to base the hardness of Triangle detection itself on these assumptions.
See Section~\ref{sec:discussion} for further discussion.

\section{Technical Overview}


The goal of this section is to give an overview of the main new ideas that go into our short cycle removal technique.
As discussed in the introduction, the technical barriers are most prominently apparent in the challenge of proving a hardness result for $4$-Cycle (detection).
For this reason, we choose to focus this section on this result, giving a tour of the reduction from Triangle (detection) to $4$-Cycle.
All of our conceptually new ideas go into this result and can be appreciated more clearly in this simple context. 
Afterwards, in Section~\ref{sec:k>4} we point out how these ideas lead to Theorems~\ref{thm:main1} and~\ref{thm:main2}.
The additional required ideas are either standard tricks (e.g. for the applications) or technical but unsurprising generalizations (e.g. for the $k>4$ case); we briefly mention them in Section~\ref{sec:k>4}.
While this presentation of results goes in the reverse order to that of the introduction, it has the advantage of presenting all important ideas while the reader needs to only think about the following deceptively simple goal: \emph{solve Triangle in $\sqrt{n}$-degree graphs in subquadratic time, given a linear-time algorithm for $4$-Cycle}.

\paragraph{Some notation:} Throughout the paper, we assume that the input graph $G=(V,E)$ for the triangle finding and all-edge triangle problems is tripartite with sides $A,B,$ and $C$.\footnote{If not, let $A=B=C=V$ and copy each edge in $E$ three times.} We denote the set of neighbors of a node $u$ by $N(u)$. We use $[k]$ to denote the set of integers $\{1,2,\cdots,k\}$, and we use the notation ``$\{4,..,k\}$-cycles'' to denote the set of cycles of length at least $4$ and at most $k$. We say that an event happens with high probability if it happens with probability at least $1-1/n^c$, for an arbitrarily large constant $c$. We denote by $\omega$ the matrix multiplication exponent.
Throughout this paper, by ``subquadratic'' we mean any bound of the form $O(n^{2-\eps})$, for any constant $\eps > 0$. Moreover, we always treat $k$ as a constant.

\subsection{Triangle to $4$-Cycle}
 To reduce Triangle to 4-Cycle, we want to convert a hard instance for Triangle in a way such that any triangle becomes a 4-cycle. Since a hard instance for Triangle is a tripartite graph with sides $A,B$, and~$C$, perhaps the first idea that one may try to use is to subdivide the edges between $B$ and $C$, by adding a dummy node $bc$ on each $\{b,c\}$ edge. Indeed, if the original graph doesn't contain a 4-cycle, then a 4-cycle in the new graph must use a dummy node, and  therefore, the existence of a 4-cycle in the new graph implies the existence of a triangle in the original graph. 

However, the original graph may have up to $n^{2.5}$ 4-cycles.\footnote{This is because each edge may be in up to $n$ $4$-cycles.} In particular, even after adding the dummy nodes between $B$ and $C$, there could still be up to $n^{2.5}$ 4-cycles between $A$ and $B$ (4-cycles that use two nodes from $A$ and two nodes from $B$), up to $n^{2.5}$ 4-cycles between $A$ and $C$, and up to $n^{2.5}$ 4-cycles that use two nodes from $A$ and one node from each of $B$ and $C$. None of these 4-cycles uses a dummy node, and therefore, the existence of a 4-cycle in the new graph doesn't necessarily imply the existence of a triangle in the original graph. We call these \emph{false $4$-cycles}.
Notably, this issue does not arise when reducing to $k$-cycle detection for odd $k$ (see \cite{DKS17}) and it can be side-stepped easily in harder contexts such as the directed\footnote{Where the directions can prevent the existence of false cycles.} or counting\footnote{By counting the number of $4$-cycles in the induced graph on subsets of the three parts we can find out the exact number of false $4$-cycles and then subtract it from the total number.} versions or in other models \cite{ACKL20,BDG07}\footnote{E.g. in databases a cycle can be forced \emph{by definition} to use one node from each of the three parts and one dummy node.}.
Alas, such simple tricks do not work in the most basic case. As discussed in the introduction, this is not a mere technicality but \emph{the} obstacle for gap amplification results.

To overcome this, one may try to remove all the $4$-cycles from the graph before applying the reduction, perhaps by finding a set of edges that intersects all of them, checking whether there is a triangle that uses one of these edges, and then removing these edges from the graph. Indeed, this would leave the graph without any $4$-cycles. However, even if we can efficiently check for each of these edges whether it is in a triangle, finding $n^{2.5}$ 4-cycles is impossible in subquadratic time. 


\paragraph{Hitting Cycles Faster Than Triangles: Random Slicing}
At a high-level, our first main idea is simple: since a $4$-cycle uses one more node than a triangle, a random subsampling reduces the number of $4$-cycles \emph{at a higher rate} than it reduces the number of triangles. Indeed, suppose that we subsample nodes, leaving each node in the graph with probability $p$. A triangle survives with probability $p^3$ while a $4$-cycle only survives with a smaller probability $p^4$.
To implement this idea we use the following \emph{random slicing} approach.

Roughly speaking, instead of reducing Triangle to 4-Cycle in the original tripartite graph, we break the graph into triangle-disjoint tripartite subgraphs,\footnote{By triangle disjoint we mean that each triangle appears in exactly one of these tripartite subgraphs.} which we refer to as \emph{slices}, and reduce Triangle to 4-Cycle in each of these slices. This way, we would only need to remove $4$-cycles that are fully contained in the slices, and not $4$-cycles \emph{across} the slices. In more detail, we partition each of $A,B$ and $C$ randomly into $n^{1/2-\alpha}$ sets, each of size $n^{1/2+\alpha}$ (where each node joins one of the sets uniformly at random). In order to solve Triangle in the original graph, it suffices to solve Triangle in each $(A_j,B_k,C_{\ell})$ slice, for $j,k,\ell\in [n^{1/2-\alpha}]$. For each such slice, we want to reduce Triangle to 4-Cycle by first removing all the $4$-cycles in the slice, and then subdividing the edges between $B_k$ and $C_{\ell}$. This time, the expected total number of $4$-cycles in all the slices is smaller than $n^{2.5}$, for $\alpha<1/2$. This is because each slice is expected to have $n^{1/2+4\alpha}$ $4$-cycles, and in total, over all slices, we have $n^{3/2-3\alpha+1/2+4\alpha}=n^{2+\alpha}$ $4$-cycles in expectation. 

Unfortunately, $n^{2+\alpha}$ $4$-cycles is still too much.
If the number of $4$-cycles is super-quadratic, then finding and removing all of them in subquadratic time is hopeless.
In other words, while we \emph{can} hit $4$-cycles at a higher rate than we hit triangles with random slicing, this higher rate is not fast enough to get the number of $4$-cycles down from the worst-case $n^{2.5}$ to subquadratic without essentially deleting all edges.  

Nevertheless, observe that if we started with fewer than $n^{2.5}$ 4-cycles in the original graph, then the expected number of $4$-cycles over all the slices following the random slicing \emph{would be} subquadratic. 
Can we prove that graphs with too many $4$-cycles are not actually hard for Triangle?\footnote{It is natural but perhaps a bit surprising that our goal switched from proving the hardness of Triangle in $4$-cycle free graphs to showing the easiness of Triangle in graphs with a maximal number of $4$-cycles.} 

\paragraph{Structure vs. Randomness: Dense-Piece Removal} Our next main observation is that random $\sqrt{n}$-regular graphs only have $O(n^2)$ $4$-cycles.
So if a graph has the worst-case $n^{2.5}$ number of $4$-cycles, then it must have a lot of structure that one could potentially exploit for solving Triangle faster.
A priori, this may not sound like a promising approach because we know that Triangle is very easy in random graphs,\footnote{In a random $\sqrt{n}$-regular graph, any edge is in a triangle with constant probability, so we can find a triangle in $O(\sqrt{n})$ expected time.} and its hardness arises from structure that existing algorithms cannot exploit.
Fortunately, we identify a connection between the existence of many $4$-cycles and the existence of dense subgraphs that we do know how to exploit algorithmically.
This is based on the fact that the savings from using fast matrix multiplication for Triangle are greater in denser graphs.
This is a novel use of the structure vs. randomness paradigm~\cite{Tao07} in the context of Triangle.\footnote{The other two examples that come to mind are the mildly subcubic combinatorial algorithm of Bansal and Williams \cite{BW12} using the Freize-Kannan regularity lemma, and the distributed algorithms (see \cite{CPSZ21}) that exploit an expander decomposition of the graph.}

In a bit more detail, we show that in subquadratic time we can find a set of edges $\tilde E$, with an answer to each edge in the graph to whether it participates in a triangle containing an edge in $\tilde E$, such that the graph induced by $E\setminus \tilde E$ has fewer than $n^{2+\gamma}$ $4$-cycles, for some $0\leq \gamma < 1/2$ to be chosen later. 
This is based on the connection between $4$-cycles and dense subgraphs. 
 One can show that an $n$-node graph with maximum degree $\sqrt{n}$ and at least $n^{2+\gamma}$ $4$-cycles must contain at least $n^{1+\gamma}$ dense subgraphs, each with $2\sqrt{n}$ nodes and roughly $n^{1/2+\gamma}$ edges.\footnote{This is because there are $n^{3/2}$ edges, and each edge participates in at most $n$ $4$-cycles.} We refer to such  subgraphs as \emph{dense pieces}. In particular, each of these dense pieces lies between two neighborhoods $N(u)$ and $N(v)$ where $\{u,v\}$ is an edge. We can use this property to find a dense piece efficiently: we sample $\approx n^{1/2-\gamma}$ edges $\{u,v\}$, and for each of them we sample $\approx n^{1/2-\gamma}$ pairs of nodes between $N(u)$ and $N(v)$ to estimate the number of edges in $N(u)\times N(v)$. With high probability, for one of the sampled  edges $\{u,v\}$, there are at least $\approx n^{1/2+\gamma}$ edges between $N(u)$ and $N(v)$, and it will be detected when we estimate the number of edges between $N(u)$ and $N(v)$. After finding the dense piece $N(u)\cup N(v)$, we use a matrix multiplication approach, together with a high-degree low-degree analysis, to efficiently check whether there is a triangle that uses an edge from the dense piece. Hence, by removing all the dense pieces gradually, where in each step we find a new dense piece, check whether there is a triangle that uses an edge from the dense piece, and then remove it, we obtain a graph with fewer than $n^{2+\gamma}$ $4$-cycles. Since in each step we remove a dense piece of $\Omega(n^{1/2+\gamma})$ edges, and the number of edges is $O(n^{3/2})$, the number of steps is bounded by $O(n^{1-\gamma})$. For an appropriate choice of $\gamma=\frac{\omega-1}{4}+\eps'<0.345$, for an arbitrarily small constant $\eps'>0$, we show that the total running time for removing all the dense pieces is subquadratic.

Hence, we first remove all dense pieces in subquadratic time, which leaves only $n^{2+\gamma}$ 4-cycles to begin with. Then, we apply the random slicing, and reduce Triangle to 4-Cycle in each of the $(A_j,B_k,C_{\ell})$ slices where we know that the total number of false $4$-cycles is subquadratic. 
This brings us to the final task of removing all of the remaining $4$-cycles in subquadratic time.
Let us remark at this point that for Theorem~\ref{thm:main1} and its applications where we can tolerate the existence of few cycles, this extra step is not necessary.

\paragraph{Output-Sensitive False Cycle Removal} Following the dense-piece removal and the random slicing, the total number of $4$-cycles in all the slices is $n^{3/2+\gamma+\alpha}$, which for some choice of $\alpha<0.155$, is subquadratic. 
It may still seem challenging to list all of them efficiently, even when their number is subquadratic: we do not even know how to find \emph{one} $4$-cycle in a general $\sqrt{n}$-degree graph in subquadratic time.

By exploiting a special property of $4$-cycles in tripartite graphs, together with the small degree property of each slice, we design an algorithm that lists all false cycles in time that is linear in their number. The crux of our idea is that all the $4$-cycles in a slice can be found by looking \emph{only at one part of the slice at a time}. For instance, to list all the $4$-cycles that use two nodes from $A_j$, it suffices to list all the $\{a,u,a'\}$ two-paths in the slice, where $\{a,a'\}\subseteq A_j$. For this, we can list all the two-paths between $A_j$ and $B_k$, and all the two-paths between $A_j$ and $C_{\ell}$, and then find all the $4$-cycles that use a pair $\{a,a'\}\subseteq A_j$ (by looking at all the two-paths that the pair $\{a,a'\}$ participates in). At first sight, since we have $n^{3/2-3\alpha}$ slices, each containing $n^{1/2+3\alpha}$ pairs $\{a,a'\}$ that can be connected by a two-path, this approach may seem to take quadratic time inevitably. Yet, with an additional trick, and a more sophisticated \emph{global} analysis that takes into account all the slices at once, we are able to charge the running time to the total number of $4$-cycles in all the slices, which is subquadratic.   

As a final clean-up step before subdviding the edges between $B_k$ and $C_\ell$ and making a call to the $4$-Cycle algorithm, we delete an edge from each of the $n^{3/2+\gamma+\alpha}$ $4$-cycles that were found.
But first, we must check whether any of them is in a triangle, and we cannot afford to spend the trivial $\sqrt{n}$ time for each.
Fortunately, we only need to look for a triangle \emph{in the slice} where nodes only have degree $n^{\alpha}$, making the total time $n^{3/2+\gamma+2\alpha}$ still subquadratic for $\alpha<0.0775$.

\medskip
Thus, in subquadratic time we can make sure that all calls that we make to the $4$-Cycle algorithm are made on graphs without false $4$-cycles.
To conclude, we point out that the total sizes of all the $4$-Cycle instances that our reduction produces is also subquadratic, so if we could solve $4$-Cycle in linear time (or even $m^{1.01}$) we would get a subquadratic algorithm for Triangle.
Indeed, the number of slices is $n^{3/2-3\alpha}$ and each has $n^{1/2+2\alpha}$ edges.

\subsection{The Theorems and Corollaries}
\label{sec:k>4}

First, let us clarify the connection between the above reduction and our Theorems for the $k=4$ case.
The slices are precisely the $G_i$ subgraphs with few cycles that our theorems produce, and the set of edges $E'$ are those edges that we identify (and remove) in the dense-piece removal process as participating in a triangle.
For Theorem~\ref{thm:main1} the final process of removing all false $4$-cycles is not necessary; the number of remaining $4$-cycles in the slices is small enough.
For Theorem~\ref{thm:main2} we do list all remaining $4$-cycles and remove an edge from each, while placing it in $E'$ if it participates in a triangle. The slices that result after this clean-up are the $4$-cycle free $G_i$ subgraphs that we return.

\paragraph{The $k>4$ Case} To remove most $k'$-cycles for all $4\leq k'\leq k$ we follow a similar route.
Even though the number of $k'$-cycles in a worst-case (or random) graph becomes larger as $k'$ grows, the random slicing method also becomes more effective at reducing their number.
The intuition is that a $k'>4$ cycle uses even more nodes than a triangle does, and so it is even less likely to survive a random subsampling; i.e. longer cycles can be hit at an even higher rate.
This leads to the same situation where we would be done if the number of $k'$-cycles was lower than the worst case, e.g. if the graph was random.
With more careful combinatorial arguments we manage to obtain a similar structure vs. randomness result: if a graph has too many $k'$-cycles then it must have a dense subgraph, and moreover such dense subgraphs can be found efficiently.
Once we have that, reducing the number of $k'$-cycles by removing the dense pieces proceeds in exactly the same way as in the $k=4$ case. 

Roughly speaking, we show that if a graph with maximum degree $\sqrt{n}$ has at least  $n^{k'/2+\gamma}$ $k'$-cycles, then it must contain many $O(\sqrt{n})$-node dense subgraphs (pieces), each with at least $\Omega(n^{1/2+\gamma})$ edges. In particular, each of these dense pieces lies between two neighborhoods of a pair of nodes $\{u,w\}$ that are connected by a simple $(k'-2)$-path (a path of $k'-2$ nodes, including $u$ and $w$). In more detail, for a simple path of $k'-2$ nodes, $\{u,u_1,\cdots, u_{k'-4},w\}$, we say that it is $\gamma$-dense if the number of $k'$-cycles that use it is $\Omega(n^{1/2+\gamma})$. Thus, a $\gamma$-dense path implies that the number of edges between $N(u)$ and $N(w)$ is $\Omega(n^{1/2+\gamma})$. We show that if a graph with maximum degree $\sqrt{n}$ has at least $n^{k'/2+\gamma}$ $k'$-cycles, then it must contain at least $\frac{1}{2}n^{k'/2-1+\gamma}$ simple $(k'-2)$-paths that are $\gamma$-dense. Since there are at most $n^{k'/2-1/2}$ $(k'-2)$-paths in the graph, we can use this property to find a dense piece efficiently, by sampling paths and estimating the density between the neighborhoods of the extreme nodes. Similar path counting arguments were also employed in the cell-probe lower bound of Sommer, Verbin, and Yu~\cite{sommer2009distance}, but the overall argument and set-up is completely different. 

This suffices for proving Theorem~\ref{thm:main1}. Substantially new ideas are required if one wishes to remove \emph{all} $k'$-cycles, extending Theorem~\ref{thm:main2} to $k>4$, because our output-sensitive enumeration strategy no longer applies.

\paragraph{The Applications} The corollaries follow from Theorems~\ref{thm:main1} and~\ref{thm:main2} in rather standard ways, as suggested in the introduction. See Sections~\ref{sec:consequenceskCycles} and~\ref{subsec:FourCyclesConsequences} for the full details.

\paragraph{A road-map for the technical parts:} In Section~\ref{sec:RemoveMostCyles} we prove Theorem~\ref{thm:main1}, as well as its hardness consequences for distance oracles, dynamic APSP, and $k$-cycle enumeration. In Section~\ref{sec:RemoveAllFourCycles}, we prove Theorem~\ref{thm:main2}, as well as its hardness consequences for triangle finding in $4$-cycle free graphs, $4$-cycle finding, and girth approximation.



\section{Further Related Work}
Many previous works derive consequences from high-girth graphs for distance computation problems.
For example, in lower bounds for graph sparsification (spanners) \cite{PS89,Alt+93} or compression (distance oracles) \cite{bourgain1985lipschitz,matouvsek1996distortion,thorup2005approximate,sommer2009distance}, and for the number of rounds in a distributed setting \cite{GKP20,DruckerKO13} the lower bound constructions are built on constructions of a high-girth graph, either explicit (see \cite{hoory2002graphs}) or hypothetical under the Girth Conjecture~\cite{Erdos65}. 
Unfortunately, no one has managed to make such an approach work for conditional lower bounds in P, because such constructions are \emph{too structured} to be worst-case graphs and cannot encode a hard worst-case instance of another problem such as 3SUM.
Our approach is diametrically opposed: we start from a worst-case graph and \emph{efficiently} turn it into an (almost) high-girth graph by our short cycle removal technique (albeit with worse parameters than the best explicit constructions).

Countless papers in fine-grained complexity and graph algorithms study triangles and short cycles.
Let us mention a few that are more relevant for this work.
Roditty and Vassilevska Williams \cite{roditty2012subquadratic} proved a conditional lower bound for a particular approach for approximating the girth of a graph, but left proving hardness (for any algorithm) as a main open question.
Dahlgaard, Knudsen, and St{\"{o}}ckel \cite{DKS17} prove the hardness of $k$-cycle detection for all $k\geq 5$ assuming the hardness of the $k=3$ and $k=4$ cases.
Dudek and Gawrychowski prove that counting $4$-cycles is equivalent to computing the quartet distance on trees \cite{DG19} and to counting $4$-patterns in permutations \cite{DG20}.
Unlike in undirected graphs where the $k$-Cycle detection problem tends to become easier as $k$ grows (because the graph gets sparser), this is not the case in directed graphs where it is conjectured that $n^{2-o(1)}$ is required for large enough $k\geq 3$, and this conjecture is implied by other conjectures on the $k$-Clique problem~\cite{LincolnWW18,AgarwalR18}.

Besides the two general techniques mentioned above for hardness of approximation in P, and their applications, there are also results that follow from problem-specific tweaks to the exact-lower-bound constructions.
In the context of distance computations in graphs, some examples are for APSP \cite{DHZ00,Aingworthetal}, Diameter and related problems (without the use of graph-products) \cite{RV13,ChechikLRSTW14,AVW16}, for the Girth in directed graphs \cite{DV20} (see \cite{PachockiRSTW18,chechik2020constant} for recent upper bounds for this problem), and for dynamic near-additive spanners \cite{BergamaschiHGWW21}.
A more general result talks about the possibility of avoiding the $\log{W}$ factor that comes from the standard \emph{scaling} trick in $(1+\eps)$-approximations for a problem with weights up to $W$ by reduction to an unweighted problem~\cite{BringmannKW19}.
Moreover, there is a connection between \emph{deterministic} approximation algorithms and circuit complexity that has lead to strong inapproximability results~\cite{AB17approx,AR18,Chen+19} but the hardness assumptions underlying such barriers are known to be breakable with randomized algorithms.

Finally, we mention that we are not aware of previous papers studying the complexity of problems when the input is restricted to have a high girth, but such questions were already studied in the context of graph spanners~\cite{BHKP10,ABP18}.

%
%
%
%
%
%
%
%
%

\section{Preliminaries}

Most of our lower bounds rely on the following theorem, which establishes hardness of a triangle finding problem based on either of the standard conjectures about $3$-SUM or All Pairs Shortest Paths (for background on these conjectures see e.g.~\cite{Vass15,virginiaICM}).

\begin{theorem}[All-Edges-Triangle is 3-SUM and APSP hard~\cite{VX20}]\label{thm:3sumReduction}
Let $\eps>0$. Assuming the $3$-SUM conjecture or the APSP conjecture, no $O(n^{2-\eps})$-time algorithm can answer for each edge whether it participates in a triangle in a given $n$-node graph with maximum degree at most $\sqrt{n}$. 
\end{theorem}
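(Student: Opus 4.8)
The plan is to prove the two halves of the statement separately: a near-linear-time reduction from $3$-SUM to All-Edges-Triangle on $\sqrt{n}$-degree graphs, adapting the hashing reduction of P\u{a}tra\c{s}cu~\cite{Pat10} and Kopelowitz, Pettie, and Porat~\cite{KPP16}, and a near-linear-time reduction from APSP to the same problem. Both reductions must meet two demands at once, and it is their interaction that is the real content. First, the produced $n$-vertex graph must have maximum degree $\Theta(\sqrt{n})$: this is exactly the density at which the classical $O(\min\{m^{2\omega/(\omega+1)}, n^\omega\})$ triangle bound of~\cite{AYZ97} degenerates (at $m=\Theta(n^{1.5})$ it is $\Theta(n^2)$), so an $n^{2-o(1)}$ barrier is the natural and strongest target. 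Second, the reduction must be \emph{output-preserving}: All-Edges-Triangle returns one bit per edge, and the source problems have many-bit outputs as well ($3$-SUM taken in its ``for each element, is it in a solution?'' form, and the matrix form of APSP, namely the $(\min,+)$-product, with its $N^2$-bit output), so the per-edge answers on the sparse graph must encode enough to recover a genuine witness, not merely decide existence. The vertex counts are calibrated so that refuting the $n^{2-o(1)}$ barrier refutes the conjectures: $n=N^{1+o(1)}$ for $3$-SUM on $N$ numbers (giving $N^{2-o(1)}$) and $n=N^{3/2+o(1)}$ for APSP on $N\times N$ matrices (giving $N^{3-o(1)}$).

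For the $3$-SUM side I would run P\u{a}tra\c{s}cu-style hashing. After the standard preprocessing that makes the $N$ input numbers ``well spread'', pick an almost-linear hash $h:\mathbb{Z}\to\mathbb{Z}_R$, so that $h(x)+h(y)-h(x+y)$ takes one of only $O(1)$ values and each of the $R$ buckets receives $\tilde{O}(N/R)$ elements with high probability. Form a tripartite graph whose vertices on each side are pairs (bucket, within-bucket offset); a triangle on buckets $(b_A,b_B,b_C)$ realized by actual numbers is forced, by almost-linearity, to have $b_A+b_B+b_C$ in an $O(1)$-size set of residues, and conversely every $3$-SUM solution appears as such a triangle once we keep only edges backed by real numbers. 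Choosing $R=\Theta(\sqrt{N})$ balances the two vertex coordinates: each side has $\Theta(\sqrt{N})$ buckets and $\tilde{O}(\sqrt{N})$ offsets, so $n=\tilde{O}(N)$, and because an offset vertex is adjacent only to vertices whose bucket is compatible with it modulo $R$, its degree is $\tilde{O}(\sqrt{N})=\tilde{O}(\sqrt{n})$. The per-edge triangle-participation bits then directly report, for each realized pair of numbers, whether it extends to a solution --- already the All-Edges output on this graph. A routine final layer that splits the polylogarithmic slack into clean $\sqrt{n}$ bounds (and derandomizes the hash by the usual two-level pairwise-independent construction, or else one simply invokes the randomized form of the $3$-SUM conjecture) finishes this direction.

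For the APSP side I would begin from All-Pairs-Negative-Triangle on a weighted tripartite graph $X\cup Y\cup Z$ with $|X|=|Y|=|Z|=N$ and integer weights bounded by $M=\mathrm{poly}(N)$ --- output, for each $(i,j)\in X\times Y$, whether some $k\in Z$ makes the triangle negative --- which is subcubically equivalent to APSP and, like All-Edges-Triangle, has a many-bit output, so the reduction must again be near-linear and answer-preserving. Weights are removed while controlling degree by writing each weight in base $\Delta=\Theta(\sqrt{N})$, so it has $O(1)$ digits; the comparison $w(i,k)+w(k,j)\le -w(i,j)-1$ is then resolved digit by digit, with carries, by a constant-depth unweighted gadget in which the vertex of an original index touches only the $O(\Delta)$ gadget vertices recording one digit/carry value, giving degree $O(\Delta)=O(\sqrt{N})$. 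The ``$\le$ versus $=$'' obstacle is handled by an $O(\log M)$-round thresholding (or an equivalent layered construction), and the $N^2$ pairs $(i,j)$ --- which cannot all be adjacent in a degree-$\sqrt{n}$ graph --- are themselves distributed by decomposing the index sets into $\sqrt{N}$ blocks of size $\sqrt{N}$ and routing $(i,j)$ through block-level vertices of degree $O(\sqrt{N})$; summed over the $O(1)$ digits and $O(\log M)$ rounds this yields $n=N^{3/2+o(1)}$ vertices, degree $O(\sqrt{n})$, and a per-edge answer from which the whole negative-triangle bit-vector, equivalently the $(\min,+)$-product and hence all of APSP, is read off.

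The genuine obstacle --- and the reason this statement is worth isolating rather than quoting as folklore --- is the tension between the two demands above. Degree $\sqrt{n}$ wants ``thin'' gadgets, while output-preservation, especially reconstructing the $(\min,+)$-product from boolean triangle-participation bits, wants ``wide'' gadgets that expose every candidate value and every index pair. Reconciling them is exactly what the base-$\sqrt{N}$ digit decomposition and the $\sqrt{N}$-block decomposition of the index space buy us: each original solution or value is funneled through a bounded number of edges whose answer bits can be read, while no vertex ever needs degree above $\sqrt{n}$. Checking that the $O(1)$ admissible residues, the $O(1)$ digits, and the $O(\log M)$ threshold rounds are all absorbed into the $n^{1+o(1)}$ and $n^{3/2+o(1)}$ vertex counts, and that the whole pipeline runs in near-linear time so that an $O(n^{2-\eps})$ All-Edges-Triangle algorithm truly refutes the conjectures, is then the remaining careful-but-routine work.
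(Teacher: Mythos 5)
The paper does not actually prove this statement; it is imported verbatim (slightly rephrased) from Corollary~3.9 of Vassilevska Williams and Xu [VX20], whose proof routes \emph{both} conjectures through a single pipeline: $3$-SUM and APSP are each reduced to the Exact-Triangle (zero-weight-triangle) problem, and Exact-Triangle is then reduced to all-edges triangle in sparse graphs by one hashing argument in the style of P\u{a}tra\c{s}cu/KPP, applied to the \emph{weights}. Your $3$-SUM half follows that standard hashing route and is correct in outline, but you gloss over the soundness issue that drives the whole design: an almost-linear hash only guarantees that a genuine solution yields a hash-consistent triple, not the converse, so a triangle in the constructed graph certifies hash-consistency rather than $x+y+z=0$. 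The known reductions live with this by bounding the expected number of spurious triangles and adding a verification pass; for the all-edges version specifically you must also explain how to verify the edges reported positive without paying $\deg \approx \sqrt{n}$ per positive edge (naively $n^{2}$ total), which is exactly the step your sentence ``the per-edge bits directly report whether the pair extends to a solution'' assumes away.

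The APSP half has a more serious gap. You propose to eliminate weights by a base-$\Theta(\sqrt{N})$ digit decomposition, resolving $w(i,k)+w(k,j)\le -w(i,j)-1$ ``digit by digit, with carries, by a constant-depth unweighted gadget'' plus an $O(\log M)$-round thresholding in a ``layered construction.'' This cannot produce a \emph{triangle} instance: a triangle has exactly three vertices and three edges, so the only conditions it can check are those decomposable into three pairwise tests on vertex identities. A three-way arithmetic inequality with inter-digit carries is not pairwise-decomposable unless each vertex identity encodes the full weight value, which multiplies the vertex count by the weight range $W=\mathrm{poly}(N)$ --- precisely the blowup that weight-hashing (with its attendant false positives, hence the retreat from detection to listing/all-edges) exists to avoid. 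Any layered or multi-round gadget detects a longer cycle or a different pattern, not a triangle, and so falls outside the statement being proved. The repair is to abandon the direct gadget and follow [VX20]: reduce APSP to Exact-Triangle (where the condition is an \emph{equality}, which almost-linear hashing approximately preserves) and reuse the same hashing reduction as in the $3$-SUM half.
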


For some background on this theorem, we mention that reductions from $3$-SUM to triangle listing were initiated by P{\u{a}}tra{\c{s}}cu~\cite{Pat10} and further refined in~\cite{KPP16}. Recently, Vassilevska Williams and Xu~\cite{VX20} showed that the $3$-SUM conjecture can be replaced by the APSP conjecture, obtaining the same result under either of these conjectures. They also showed a variant of this lower bound which replaces triangle listing by asking for every edge whether it is part of a triangle~\cite[Corollary 3.9]{VX20}; this variant is more useful in our context and stated above (slightly rephrased).

\section{Removing Most $k$-Cycles}\label{sec:RemoveMostCyles}

In this section we prove Theorem~\ref{thm:main1}.

\ThmRemovingCycles*

The proof of Theorem~\ref{thm:main1} is provided in Section~\ref{sec:hittingCycles}. We start with the dense-piece removal step, which is described in  Section~\ref{sec:DensePiece}. Then, in Section~\ref{sec:hittingCycles}, we use a randomized slicing together with the dense-piece removal step to deduce the theorem. In Section~\ref{sec:consequenceskCycles}, we prove hardness results that are  consequences of the theorem.

\subsection{Dense Piece Removal}\label{sec:DensePiece}


In this section we prove the following lemma. 

\begin{lemma}\label{lem:hardnessNoCycles}
Let $\gamma=(\omega-1)/4+\eps$ for an $\eps\in (0,\frac{3-\omega}{4})$, and let $k\geq 4$. Given a graph $G=(V,E)$ with maximum degree at most $\sqrt{n}$, there is an $O(n^{2-\eps})$-time algorithm that returns a subset of the edges $\tilde{E} \subseteq E$ and reports all the edges in $E$ that are in a triangle using an edge from $\tilde{E}$, such that the graph $\tilde{G}=(V,E\setminus \tilde{E})$ has at most $O(n^{k/2+\gamma})$ $k$-cycles.
\end{lemma}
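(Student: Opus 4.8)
The plan is to iteratively detect and remove "dense pieces" — bipartite-like chunks of the graph sitting between the neighborhoods $N(u)$ and $N(w)$ of the two endpoints of some $(k-2)$-vertex simple path — until no such dense piece remains, and to argue that the residual graph then has few $k$-cycles. The first step is a combinatorial lemma: if $G$ has maximum degree $\sqrt{n}$ and at least $c\cdot n^{k/2+\gamma}$ $k$-cycles, then there exist at least $\Omega(n^{k/2-1+\gamma})$ simple $(k-2)$-vertex paths $u=u_0,u_1,\dots,u_{k-4},w$ that are "$\gamma$-dense", meaning that the number of $k$-cycles through this path (equivalently, the number of edges between $N(u)$ and $N(w)$ that complete it into a $k$-cycle) is $\Omega(n^{1/2+\gamma})$. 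This follows by a counting/averaging argument: there are at most $n^{k/2-1/2}$ simple $(k-2)$-paths total, each extends to at most $n$ $k$-cycles (by the degree bound, $|N(u)|,|N(w)|\le\sqrt n$, so at most $n$ choices of the two remaining edges), so if the total count of $k$-cycles is $\Omega(n^{k/2+\gamma})$, removing the contribution of the non-dense paths (each contributing $O(n^{1/2+\gamma})$) still leaves $\Omega(n^{k/2+\gamma})$ $k$-cycles coming from dense paths, hence $\Omega(n^{k/2-1+\gamma})$ such paths.

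Next comes the algorithmic step: \emph{finding} a dense piece quickly, by sampling. Since a $\Omega(n^{-1+\gamma})$-fraction of all $(k-2)$-paths are $\gamma$-dense, sampling roughly $\tilde O(n^{1-\gamma})$ random $(k-2)$-paths (each sampled in $O(1)$ amortized time by a random walk from a random vertex, re-sampling if it is not simple) hits a dense path with high probability; for the sampled path we then estimate $|E(N(u),N(w))|$ by sampling $\tilde O(n^{1-\gamma})$ random pairs from $N(u)\times N(w)$ and checking adjacency, which distinguishes the dense case ($\Omega(n^{1/2+\gamma})$ edges, i.e.\ an $\Omega(n^{-1/2+\gamma})$-density) from a substantially sparser one. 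Once a dense piece between $N(u)$ and $N(w)$ is identified, we set $\tilde E$ to include its edges, and we must report every edge of $G$ that lies in a triangle through an edge of this piece; this is the high-degree/low-degree + matrix multiplication step — partition vertices into those of degree $> n^{1/2}$ (there are $O(n)$ such, wait, at most $n^{1/2}$ of them since $\sum \deg \le n^{3/2}$)... more precisely, for triangles one checks low-degree endpoints by brute force over their $\le\sqrt n$ neighbors and handles the remaining dense bipartite subgraph (on $O(\sqrt n)$ vertices) via a Boolean matrix product, which costs $O(n^{\omega/2})$ per piece. After processing we delete the piece's edges and repeat. Each removed piece has $\Omega(n^{1/2+\gamma})$ edges and $G$ has $O(n^{3/2})$ edges, so the loop runs $O(n^{1-\gamma})$ times.

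Finally, the running-time accounting: each iteration costs $\tilde O(n^{1-\gamma})$ for sampling/estimation plus $O(n^{\omega/2})$ for the matrix-multiplication triangle check (plus lower-order terms for reporting), so the total is $O(n^{1-\gamma})\cdot O(n^{\omega/2}) = O(n^{1-\gamma+\omega/2})$; substituting $\gamma=(\omega-1)/4+\eps$ gives exponent $1+\omega/2-(\omega-1)/4-\eps = 1 + (2\omega+1)/4 - \eps = (5+2\omega)/4 - \eps$, and one checks $(5+2\omega)/4 < 2$ exactly when $\omega < 3/2$ — so the naive product bound is \emph{too weak} and this is where the main work lies: one must instead observe that the matrix-multiplication step only needs to be paid when a genuinely dense piece is found and removed, and balance the per-iteration cost against the number of edges actually deleted, or batch several pieces, or (as the overview suggests) exploit that $\gamma < (\omega-1)/4 + \text{(small)}$ is chosen precisely so that $n^{1-\gamma}$ dense-piece removals each of matrix-multiplication cost on an $O(\sqrt n)$-vertex dense subgraph with $\Omega(n^{1/2+\gamma})$ edges can be done via rectangular multiplication in amortized subquadratic total time. \textbf{I expect the main obstacle to be exactly this time analysis: showing that the cumulative cost of the $O(n^{1-\gamma})$ matrix-multiplication triangle checks on the successive dense pieces is $O(n^{2-\eps})$, which forces the specific threshold $\gamma=(\omega-1)/4+\eps$ and requires being careful about whether we use square or rectangular matrix multiplication and how the "mass" of each dense piece (edges vs.\ vertices) enters the bound.} Termination of the main loop with fewer than $O(n^{k/2+\gamma})$ $k$-cycles in $\tilde G$ is then immediate from the contrapositive of the combinatorial lemma: once no $\gamma$-dense $(k-2)$-path remains, every $(k-2)$-path supports $O(n^{1/2+\gamma})$ $k$-cycles, so summing over the $\le n^{k/2-1/2}$ paths gives $O(n^{k/2+\gamma})$ in total.
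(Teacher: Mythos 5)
Your overall architecture matches the paper's: the counting lemma (many $\gamma$-dense $(k-2)$-paths whenever there are $\ge n^{k/2+\gamma}$ $k$-cycles), the sampling-based search for a dense piece between $N(u)$ and $N(w)$, the iterative removal with an $O(n^{1-\gamma})$ bound on the number of iterations, and the contrapositive argument for the final $k$-cycle count are all exactly the paper's Lemmas on dense paths and dense-piece finding. However, there is a genuine gap at precisely the point you flag as ``the main obstacle'': the per-piece subroutine and its cost. The task is not a triangle check inside the dense piece; one must report \emph{every} edge of $G$ (anywhere in the graph) that lies in a triangle using an edge of $N(u)\times N(w)$, and the third vertex of such a triangle can be any of the $n$ vertices. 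A single Boolean product on the $O(\sqrt n)$ vertices of the piece does not see those triangles, so the claimed $O(n^{\omega/2})$ per-piece cost is not justified, and your degree split at threshold $n^{1/2}$ is vacuous since the graph has maximum degree $\sqrt n$. The paper closes this with a different split: classify vertices by their number of neighbors \emph{inside $X\cup Y$} with threshold $n^{1/2-\beta}$; low vertices are handled by bucketed enumeration of their neighbor pairs in $X\cup Y$ in time $O(n^{3/2-\beta})$, while the at most $O(n^{1/2+\beta})$ high vertices enter rectangular Boolean products of shape $n^{1/2}\times n^{1/2+\beta}$ times $n^{1/2+\beta}\times n^{1/2}$ (with three cases according to whether the queried edge lies in $X\times Y$, $V^h\times X$, or $V^h\times Y$), costing $O(n^{\omega/2+\beta})$; balancing at $\beta=(3-\omega)/4$ gives $O(n^{(3+\omega)/4})$ per piece, and then the total is simply $O(n^{1-\gamma}\cdot n^{(3+\omega)/4})=O(n^{2-\eps})$ for $\gamma=(\omega-1)/4+\eps$ --- no amortization, batching, or charging against deleted edges is needed.

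Two further numerical slips are worth fixing. First, the fraction of $(k-2)$-paths that are $\gamma$-dense is $\Omega(n^{\gamma-1/2})$ (at least $\tfrac12 n^{k/2-1+\gamma}$ dense paths out of at most $n^{(k-1)/2}$ paths), not $\Omega(n^{\gamma-1})$; correspondingly $\tilde O(n^{1/2-\gamma})$ sampled paths and $\tilde O(n^{1/2-\gamma})$ sampled pairs per path suffice, giving $\tilde O(n^{1-2\gamma})$ per iteration. With your inflated sample sizes the sampling alone costs $\tilde O(n^{2-2\gamma})$ per iteration, i.e.\ $\tilde O(n^{3-3\gamma})$ overall, which exceeds $n^2$ when $\omega$ is close to $2$. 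Second, your final exponent arithmetic is off: $1+\omega/2-\gamma$ with $\gamma=(\omega-1)/4+\eps$ equals $(\omega+5)/4-\eps$, not $(2\omega+5)/4-\eps$, and is below $2$ for all $\omega<3$ --- so the conclusion you drew from it (that the naive bound fails unless $\omega<3/2$) is based on a miscalculation; the real reason the argument is incomplete is that the $O(n^{\omega/2})$ per-piece figure itself does not correspond to the task that must be solved.
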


That is, Lemma~\ref{lem:hardnessNoCycles} implies that in order to solve All-Edge Triangle in $G$ in subquadratic time, it suffices to solve All-Edge Triangle in the graph $\tilde{G}=(V,E\setminus \tilde{E})$ in subquadratic time. This is because after reporting all the edges in the graph that are in a triangle using an edge from $\tilde{E}$, it is safe to remove $\tilde{E}$ from the graph (without unintentionally removing triangles with unreported edges) and solve All-Edge Triangle in the obtained graph $\tilde{G}$. The advantage of  reducing the problem to solving All-Edge Triangle in $\tilde{G}$ is that $\tilde{G}$ is guaranteed to have significantly less $k$-cycles compared to a worst-case instance.

\paragraph{A road-map for the proof of Lemma~\ref{lem:hardnessNoCycles}:} We start with the useful definition of a $\gamma$-dense piece and a $\gamma$-dense path (Definition~\ref{def:densePiece}). In Lemma~\ref{lem:ManyDensePairs}, we show that a graph that has many $k$-cycles must contain many dense paths, a property that we use in Lemma~\ref{lem:find-dense-piece} to show that a dense subgraph can be found efficiently. In Lemma~\ref{lem:check-triangle-piece}, we show that given a $\sqrt{n}$-node subgraph, we can check for all edges $e$ in the graph, whether there is a triangle that uses $e$ and an edge from this subgraph efficiently. After the proof of Lemma~\ref{lem:check-triangle-piece}, we put everything together and present the formal proof of Lemma~\ref{lem:hardnessNoCycles}. Finally, in Theorem~\ref{thm:All-Edge-Few-Triangles}, we use the same ideas to show that All-Edge Triangle is 3SUM and APSP hard even when the graph contains a subquadratic number of triangles, a property that we need in one of our applications in Section~\ref{sec:consequenceskCycles} (namely, the lower bound for $k$-cycle enumeration).

\begin{definition}[\textbf{$\gamma$-Dense Pieces and $\gamma$-Dense Paths}]\label{def:densePiece}
Given an $n$-node graph with maximum degree at most $\sqrt{n}$, we say that a set of nodes of size at most $2\sqrt{n}$ is a $\gamma$-dense piece if the subgraph induced by these nodes has at least  $n^{1/2+\gamma}/2$ edges. Furthermore, we say that a simple $(k-2)$-path, $\{u,u_1,\cdots, u_{k-4},w\}$, is $\gamma$-dense if the number of $k$-cycles that use it is at least  $n^{1/2+\gamma}/2$.
\end{definition}

Hence, a $\gamma$-dense $(k-2)$-path, $\{u,u_1,\cdots, u_{k-4},w\}$, implies that there are $n^{1/2+\gamma}/2$ edges between $N(u)$ and $N(w)$, which implies that $N(u)\cup N(w)$ is a $\gamma$-dense piece. In the following lemma, we show that if there are many $k$-cycles in the graph then there are many $\gamma$-dense $(k-2)$-paths, which implies that there are many $\gamma$-dense pieces. 

\begin{lemma}\label{lem:ManyDensePairs}
Let $k\geq 4$. For every $0\leq \gamma< 1/2$, every $n$-node graph with maximum degree at most $\sqrt{n}$ that has at least $n^{k/2+\gamma}$ $k$-cycles must contain at least $\frac{1}{2}n^{k/2-1+\gamma}$ simple $\gamma$-dense $(k-2)$-paths $\{u,u_1,\cdots, u_{k-4},w\}$.
\begin{proof}
Let $c$ be the number of $k$-cycles in the graph, and for a simple path $p$ of $k-2$ nodes, let $c_p$ be the number of $k$-cycles that use $p$ as a subpath. Observe that 

$$c \leq \sum_{(k-2)\text{-paths } p} c_p$$

Furthermore, the number of $(k-2)$-paths in the graph is at most $n\cdot (\sqrt{n})^{k-3}=n^{\frac{k-1}{2}}$, because there are $n$ ways to pick the first node in the path, and $(\sqrt{n})^{k-3}$ ways to extend this node to a $(k-2)$-path. Moreover, observe that each $(k-2)$-path, $\{u,u_1,\cdots, u_{k-4},w\}$, participates in at most $n$ $k$-cycles, because there are at most $n$ edges between $N(u)$ and $N(w)$. 
Hence, since each $(k-2)$-path that is not $\gamma$-dense participates in at most $\frac{1}{2}n^{1/2+\gamma}$ $k$-cycles, if we have fewer than $\frac{1}{2}n^{k/2-1+\gamma}$ $\gamma$-dense $(k-2)$-paths, this implies that the number of $k$ cycles is bounded by 

$$c < \frac{1}{2}n\cdot n^{k/2-1+\gamma} + \frac{1}{2}n^{1/2+\gamma}\cdot n^{\frac{k-1}{2}}=n^{k/2+\gamma},$$
which is a contradiction. 
\end{proof}
\end{lemma}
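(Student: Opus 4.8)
The plan is to prove Lemma~\ref{lem:ManyDensePairs} by a double-counting argument relating the number of $k$-cycles to the number of their $(k-2)$-subpaths. Write $c$ for the number of $k$-cycles in $G$, and for a simple $(k-2)$-path $p$ let $c_p$ be the number of $k$-cycles containing $p$ as a contiguous subpath. The key point is that every $k$-cycle is captured by each of its $(k-2)$-subpaths, so $c \le \sum_p c_p$, the sum ranging over all simple $(k-2)$-paths of $G$ (indeed each $k$-cycle contributes exactly $k$ such subpaths, one for every pair of consecutive vertices it omits, and all of them are simple since the cycle is).

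Next I would collect the two ingredients needed to bound the right-hand side. First, using the maximum degree bound, the number of simple $(k-2)$-paths is at most $n\cdot(\sqrt n)^{k-3}=n^{(k-1)/2}$: there are $n$ choices for the first vertex and at most $\sqrt n$ extensions at each of the remaining $k-3$ steps. Second, for a path $p=\{u,u_1,\dots,u_{k-4},w\}$, completing $p$ to a $k$-cycle amounts to picking an edge between $N(u)$ and $N(w)$; since $|N(u)|\le\sqrt n$ there are at most $n$ such edges, so $c_p\le n$ always, and by Definition~\ref{def:densePiece} we have $c_p< \tfrac12 n^{1/2+\gamma}$ whenever $p$ is \emph{not} $\gamma$-dense.

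Finally, assume for contradiction that $G$ has fewer than $\tfrac12 n^{k/2-1+\gamma}$ $\gamma$-dense simple $(k-2)$-paths. Splitting $\sum_p c_p$ according to whether $p$ is dense and plugging in the bounds above gives
$$c \;\le\; \sum_p c_p \;<\; \tfrac12 n^{k/2-1+\gamma}\cdot n \;+\; n^{(k-1)/2}\cdot \tfrac12 n^{1/2+\gamma} \;=\; n^{k/2+\gamma},$$
contradicting $c\ge n^{k/2+\gamma}$. I do not expect a genuine obstacle here: the argument is a clean averaging step, and the only things to be careful about are the degree-based count of $(k-2)$-paths and noticing that the two summands above are each at most $\tfrac12 n^{k/2+\gamma}$ — which is precisely why the threshold in the definition of a $\gamma$-dense piece/path is $n^{1/2+\gamma}/2$ rather than $n^{1/2+\gamma}$. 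It is also worth sanity-checking the degenerate case $k=4$, where a ``$(k-2)$-path'' is just an edge $\{u,w\}$ and $c_p$ counts edges between $N(u)$ and $N(w)$; the same bounds apply verbatim.
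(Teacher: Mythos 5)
Your proof is correct and follows essentially the same route as the paper's: the same double count $c \le \sum_p c_p$, the same bound of $n^{(k-1)/2}$ on the number of simple $(k-2)$-paths via the degree bound, the same bounds $c_p \le n$ in general and $c_p < \tfrac12 n^{1/2+\gamma}$ for non-dense paths, and the same concluding inequality yielding the contradiction. The extra observations (each $k$-cycle contributes exactly $k$ such subpaths, and the $k=4$ sanity check) are harmless refinements of the paper's argument.
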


The remainder of this section is devoted to showing that there is a subquadratic-time algorithm that removes all dense pieces, leaving the obtained graph with fewer than $n^{k/2+\gamma}$ $k$-cycles by Lemma~\ref{lem:ManyDensePairs}. We start with the following proposition that follows by a standard Chernoff argument.

\begin{proposition}\label{prop:EstimateEdges}
Let $X$ and $Y$ be two $\sqrt{n}$-size sets of nodes (not necessarily different). Sample $S=200n^{1/2-\gamma}\log n$ pairs in $X\times Y$ independently and uniformly at random. It holds that:
\begin{enumerate}
    \item If the number of edges between $X$ and $Y$ is at least $n^{1/2+\gamma}/2$, then at least $50\log n$ of the sampled pairs are edges, with probability at least $1-1/n^{10}$.
    \item If the number of edges between $X$ and $Y$ is smaller than $n^{1/2+\gamma}/200$, then fewer than $50\log n$ of the sampled pairs are edges, with probability at least $1-1/n^{10}$.
\end{enumerate}
\end{proposition}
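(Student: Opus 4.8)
The plan is a routine application of Chernoff bounds, with the constants in the statement chosen precisely so that the tail bounds beat $n^{-10}$. Write $t$ for the number of edges between $X$ and $Y$. Each of the $S = 200\,n^{1/2-\gamma}\log n$ sampled pairs is drawn uniformly and independently from the $|X\times Y| = n$ possible pairs, so the number $Z$ of sampled pairs that are edges is distributed as $\mathrm{Bin}(S, t/n)$, and thus $\mu := \mathbb{E}[Z] = St/n = 200\,t\log n / n^{1/2+\gamma}$. The two cases are obtained by plugging the two hypotheses on $t$ into this identity.

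For part~(1), the hypothesis $t \ge n^{1/2+\gamma}/2$ gives $\mu \ge 100\log n$, and since $50\log n \le \mu/2$ we have $\Pr[Z < 50\log n] \le \Pr[Z < \mu/2]$. Applying the lower-tail Chernoff bound $\Pr[Z \le (1-\delta)\mu] \le \exp(-\delta^2\mu/2)$ with $\delta = 1/2$ gives $\Pr[Z < \mu/2] \le \exp(-\mu/8) \le \exp(-12.5\log n) \le n^{-10}$ (using $\log n \ge \ln n$; if $\log$ denotes $\log_2$ the estimate only improves).

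For part~(2), the hypothesis $t < n^{1/2+\gamma}/200$ gives $\mu < \log n$. We are now in the large-deviation regime: the threshold $\lambda := 50\log n$ satisfies $\lambda > \mu$, so the multiplicative upper-tail bound $\Pr[Z \ge \lambda] \le (e\mu/\lambda)^{\lambda}$ applies, and since $(e\mu/\lambda)^{\lambda}$ is increasing in $\mu$ on $(0,\lambda)$ we may replace $\mu$ by $\log n$ to obtain $\Pr[Z \ge 50\log n] \le (e/50)^{50\log n} \le n^{-10}$, since $e/50 < 1/18$ and $50\log_2 18 \gg 10$. (If $\mu = 0$ the event has probability zero, so this case is trivial.) The two statements refer to disjoint hypotheses on $t$, so no union bound is needed.

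There is no conceptual obstacle here; the only thing to watch is the bookkeeping — confirming that the slack between $100\log n$ (resp.\ $\log n$) and the threshold $50\log n$ is large enough for both Chernoff tails to fall below $n^{-10}$, and that in part~(2) the bound is taken in the $\lambda \gg \mu$ regime rather than in the small-deviation regime where a multiplicative-slack Chernoff bound would give a weaker estimate.
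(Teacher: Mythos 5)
Your proof is correct and is exactly the ``standard Chernoff argument'' that the paper invokes without spelling out: a lower-tail bound with constant relative slack in case (1), and the Poisson-regime upper-tail bound $\Pr[Z\ge\lambda]\le(e\mu/\lambda)^{\lambda}$ in case (2), with the constants $200$, $50$, and $1/200$ leaving ample room below $n^{-10}$. No issues to report.
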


In the following lemma, we show that if the graph has at least $n^{k/2+\gamma}$ $k$-cycles, then a dense subgraph can be found efficiently.

\begin{lemma}\label{lem:find-dense-piece}
Let $k\geq 4$. Given an $n$-node graph of maximum degree at most $\sqrt{n}$ that contains at least $n^{k/2+\gamma}$ $k$-cycles, there is an $O(n^{1-2\gamma}\log^2n)$-time algorithm that finds a pair of nodes $\{u,w\}$, such that the number of edge between $N(u)$ and $N(w)$ is at least $n^{1/2+\gamma}/200$, with high probability. 
\begin{proof}
First, we show that we can sample a $\gamma$-dense $(k-2)$-path efficiently. Observe that in $O(1)$-time, we can sample a $(k-2)$-path, such that each simple path  $\{u,u_1,\cdots, u_{k-4},w\}$ is sampled with probability at least $1/n^{(k-1)/2}$. This can be done by first sampling a starting node, and in each step we sample a node from the neighborhood of the previously sampled node, until we sample a $(k-2)$-path. Hence, by Lemma~\ref{lem:ManyDensePairs}, the probability mass of the simple $(k-2)$-paths that are $\gamma$-dense is at least $n^{\frac{k}{2}-1+\gamma}/n^{(k-1)/2}=n^{\gamma-1/2}$. Therefore, by sampling $100n^{1/2-\gamma}\log n$ such $(k-2)$-paths, one of them is a simple $\gamma$-dense path with high probability. Moreover, for each sampled $(k-2)$-path, $\{u,u_1,\cdots, u_{k-4},w\}$, we sample $200n^{1/2-\gamma}\log n$ pairs in $N(u)\times N(w)$, and check how many of the sampled pairs are edges. If the number is at least $50\log n$, we output the pair of nodes $\{u,w\}$. By Proposition~\ref{prop:EstimateEdges}, the algorithm finds at least one pair $\{u,w\}$ with the desired property with high probability. Furthermore, by Proposition~\ref{prop:EstimateEdges}, for any pair $\{u,w\}$ that the algorithm outputs there are at least $n^{1/2+\gamma}/200$ edges between $N(u)$ and $N(w)$, with high probability.
\end{proof}
\end{lemma}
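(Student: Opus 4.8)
The plan is to combine the structural statement of Lemma~\ref{lem:ManyDensePairs} with the Chernoff-type estimation of Proposition~\ref{prop:EstimateEdges}. The first step is to observe that we can \emph{sample a uniformly-random-ish simple $(k-2)$-path} in constant time: pick a starting vertex $u$ uniformly at random, and then repeatedly pick a uniformly random neighbor of the current endpoint, $k-3$ times. Since every vertex has degree at most $\sqrt n$, each particular simple path $\{u,u_1,\dots,u_{k-4},w\}$ is produced with probability at least $\frac{1}{n}\cdot\frac{1}{(\sqrt n)^{k-3}}=n^{-(k-1)/2}$. (We simply discard and retry, or ignore, the samples that turn out to be non-simple walks; this only helps.) By Lemma~\ref{lem:ManyDensePairs}, under the hypothesis that $G$ has at least $n^{k/2+\gamma}$ $k$-cycles there are at least $\frac12 n^{k/2-1+\gamma}$ $\gamma$-dense simple $(k-2)$-paths, so the probability that a single sampled path is $\gamma$-dense is at least $\frac12 n^{k/2-1+\gamma}\cdot n^{-(k-1)/2}=\frac12 n^{\gamma-1/2}$.

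Next I would \emph{amplify}: draw $100 n^{1/2-\gamma}\log n$ independent sampled paths. The probability that none of them is $\gamma$-dense is at most $(1-\tfrac12 n^{\gamma-1/2})^{100 n^{1/2-\gamma}\log n}\le e^{-50\log n}=n^{-50}$, so with high probability at least one sampled path $p=\{u,u_1,\dots,u_{k-4},w\}$ is $\gamma$-dense. By Definition~\ref{def:densePiece}, $p$ being $\gamma$-dense means at least $n^{1/2+\gamma}/2$ $k$-cycles pass through $p$; each such $k$-cycle completes $p$ by a path $w\to x\to\cdots\to u$ whose first edge leaves $w$ into $N(w)$ and whose last edge enters $u$ from $N(u)$ — more directly, since a $k$-cycle through the $(k-2)$-path $p$ is exactly $p$ together with a vertex $z$ adjacent to both endpoints... wait, that's for $k-1$; for general $k$ the completion is a $2$-path $w - z - u$, giving a length-$k$ cycle, so the number of such completions is exactly the number of common neighbors summed appropriately, i.e. at least $n^{1/2+\gamma}/2$ distinct edges in $N(u)\times N(w)$ (each completing edge $\{w,z\}$ or $\{z,u\}$... ). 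The clean way to phrase it: the number of $k$-cycles using $p$ as a subpath is at most the number of edges between $N(u)$ and $N(w)$, hence $\gamma$-denseness of $p$ forces at least $n^{1/2+\gamma}/2$ edges between $N(u)$ and $N(w)$.

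For each of the $O(n^{1/2-\gamma}\log n)$ sampled paths, run the estimator of Proposition~\ref{prop:EstimateEdges} on $X=N(u)$, $Y=N(w)$: sample $S=200 n^{1/2-\gamma}\log n$ pairs and count how many are edges (each such check is $O(1)$ with adjacency-matrix access, and both $|N(u)|,|N(w)|\le\sqrt n$, so we can pad to exactly $\sqrt n$ as in the proposition). Output $\{u,w\}$ the first time the count reaches $50\log n$. By part~1 of Proposition~\ref{prop:EstimateEdges}, the $\gamma$-dense path's endpoints pass this test with probability $\ge 1-n^{-10}$; by part~2, any pair we do output has at least $n^{1/2+\gamma}/200$ edges between the two neighborhoods, except with probability $n^{-10}$. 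Union-bounding over the $O(n^{1/2-\gamma}\log n)$ trials keeps the total failure probability at $n^{-\Omega(1)}$. The running time is $O(n^{1/2-\gamma}\log n)$ paths times $O(n^{1/2-\gamma}\log n)$ edge-checks each, i.e. $O(n^{1-2\gamma}\log^2 n)$, as claimed. The main thing to get right — and the one place a reader should pause — is the counting step relating $\gamma$-dense $(k-2)$-paths to the edge count between the two neighborhoods: one must check that a $(k-2)$-path extends to a $k$-cycle precisely by choosing an edge in $N(u)\times N(w)$ that closes it (for $k=4$ this is the $B$–$C$ edge, and for larger $k$ it is the two-path $w$–$z$–$u$ whose middle vertex ranges over common neighbors — both bounded by the number of $N(u)$–$N(w)$ edges), and that we have not double-counted or undercounted, so that $\gamma$-denseness genuinely certifies $\ge n^{1/2+\gamma}/2$ such edges.
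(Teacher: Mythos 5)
Your proposal is correct and follows essentially the same approach as the paper's proof: sample $O(n^{1/2-\gamma}\log n)$ random $(k-2)$-paths (each specific simple path hit with probability at least $n^{-(k-1)/2}$), invoke Lemma~\ref{lem:ManyDensePairs} so that w.h.p.\ one sampled path is $\gamma$-dense, and test each endpoint pair with the edge estimator of Proposition~\ref{prop:EstimateEdges}, giving the stated $O(n^{1-2\gamma}\log^2 n)$ time. One small clean-up: the completion of a $(k-2)$-path to a $k$-cycle uses \emph{two} extra vertices $x\in N(w)$, $y\in N(u)$ joined by an edge (not a single common neighbor $z$), but the inequality you ultimately rely on --- the number of $k$-cycles through $p$ is at most the number of edges between $N(u)$ and $N(w)$ --- is exactly the right one, so your argument goes through as in the paper.
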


Next, we show that given two $\sqrt{n}$-size sets of nodes $X$ and $Y$ (not necessarily disjoint), there is an efficient algorithm that reports all the edges in the graph that are in a triangle using an edge from $X\times Y$.

\begin{lemma}\label{lem:check-triangle-piece}
Given an $n$-node graph with maximum degree at most $\sqrt{n}$, and two $\sqrt{n}$-size sets of nodes $X$ and $Y$, there is an $O(n^{\frac{3+\omega}{4}})$-time algorithm that finds all the edges in the graph that are in a triangle using an edge from $X\times Y$.
\begin{proof}
Let $\beta>0$ be a constant to be chosen later, $V^h$ be the set of nodes that have at least $n^{1/2-\beta}$ neighbors in $X\cup Y$, and $V^{\ell}$ be the set of nodes that have fewer than $n^{1/2-\beta}$ neighbors in $X\cup Y$. Observe that for any edge $e$ that is in a triangle that uses an edge from $X\times Y$, it holds that either the triangle uses a node from $V^{\ell}$ (in which case either $e\in V^\ell\times (X\cup Y)$ or $e\in X\times Y$ and the third node is in $V^{\ell}$), or it uses a node from $V^H$ (in which case either $e\in V^H\times (X\cup Y)$ or $e\in X\times Y$ and the third node is in $V^H$). We start with the low-degree case, in which we find all the triangles that use a node from $V^{\ell}$.

\paragraph{Low-degree nodes:}
First, observe that we can find the set of nodes $V^{\ell}$ in linear time, as follows. We go over the nodes in $X\cup Y$, and for each of them we mark its neighbors. Then, we go over all the nodes in the graph and take those that were marked fewer than $n^{1/2-\beta}$ times to $V^{\ell}$.

To find the triangles involving nodes in $V^{\ell}$, we go over all the nodes in $V^{\ell}$ and for each of them we go over all pairs of neighbors in $X\cup Y$, and for each such pair we check whether it is an edge. To analyse the time complexity of this step, we bucket the set of nodes in $V^{\ell}$ by their degrees in $X\cup Y$, where the $i$'th bucket contains every node $v \in V^{\ell}$ of degree $|N(v) \cap (X\cup Y)| \in [2^i,2^{i+1})$, for $0\leq i\leq \log (n^{1/2-\beta})$. Observe that the time it takes to process the $i$'th bucket is $O(\frac{n}{2^i}\cdot (2^{i+1})^2)$. This is because there are $O(n/2^i)$ nodes with degrees in $[2^i,2^{i+1})$, since the number of edges incident to $X \cup Y$ is $O(n)$. Hence, in total for all buckets, this takes time
$$O\Big(\sum_{i}\frac{n}{2^i}\cdot (2^{i+1})^2\Big)=O(n\cdot n^{1/2-\beta})=O(n^{3/2-\beta}).$$

\paragraph{High-degree nodes:} Observe that the set $V^h$ can be found in linear time in a similar way that we used to find the set $V^{\ell}$. Furthermore, the size of $V^h$ is at most $O(n^{1/2+\beta})$ since the total number of edges incident to nodes in $X\cup Y$ is at most $O(n)$. It remains to find all the edges $e$ that are in a triangle that uses a node from $V^h$ and an edge from $X\times Y$. Hence, either $e\in X\times Y$ (in which case we want to check whether it is in a triangle that uses a node from $V^h$), $e\in V^h\times X$ (in which case we want to check whether it is in a triangle that uses a node from $Y$), or $e\in V^h\times Y$ (in which case we want to check whether it is in a triangle that uses a node from $X$). To find these edges, we use a matrix multiplication approach, as follows. 

\paragraph{$e\in X\times Y$:} To find all the edges in $X\times Y$ that are in a triangle that uses a node from $V^h$, we use a matrix multiplication algorithm. Consider the Boolean matrices $X\times V^h$ and $V^h\times Y$, where 1-entries indicate edges. By multiplying the two matrices, we get all the pairs $u\in X, w\in Y$ for which there is a 2-path between $u$ and $w$ through $V^h$. Hence, by going over all the edges in $X\times Y$, we can check for each of them whether it participates in a triangle that uses a node from $V^h$. Multiplying an $n^{1/2}\times n^{1/2+{\beta}}$ matrix by an $n^{1/2+\beta}\times n^{1/2}$ matrix takes time $O(n^{w/2}\cdot n^{\beta})$, because we can split it into $n^{\beta}$ many matrix products on $n^{1/2} \times n^{1/2}$ square matrices.\footnote{This step could be improved using rectangular matrix multiplication, which for the current value of $\omega$ would yield better constants in our lower bounds. Since in the limit for $\omega = 2 + o(1)$ our lower bounds are unaffected, we omit the details.} Furthermore, going over all the edges in $X\times Y$ takes $O(n)$ time. Hence, finding all the edges in $X\times Y$ that are in a triangle that uses a node from $V^h$ takes time $O(n^{w/2+\beta})$. 

\paragraph{$e\in V^h\times X$ or $e\in V^h\times Y$:} To find the edges in $V^h\times X$ that are in a triangle that uses a node from $Y$, we use a similar matrix multiplication algorithm. This time, consider the matrices $V^h\times Y$ and $Y\times X$, where 1-entries indicate edges. By multiplying the two matrices, we get all the pairs $u\in V^h, x\in X$ for which there is a 2-path between $u$ and $x$ through $Y$. Hence, by going over all the edges in $V^h\times X$ (which takes $O(n^{1+\beta})$ time), we can check for each of them whether it participates in a triangle that uses a node from $Y$. Multiplying an $n^{1/2}\times n^{1/2+{\beta}}$ matrix by an $n^{1/2}\times n^{1/2}$ matrix takes $O(n^{w/2}\cdot n^{\beta})$ time as well. 

Finding the edges in in $V^h\times Y$ that are in a triangle that uses a node from $X$ is done in a similar way. Hence, in total, the high degree case takes $O(n^{\omega/2+\beta})$ time. 

\paragraph{Putting everything together}
To optimize the time complexity in total for the high-degree and the low-degree cases, we set $\beta=(3-\omega)/4$, which implies a total running time of $O(n^{(3+\omega)/4})$, as desired.
\end{proof}
\end{lemma}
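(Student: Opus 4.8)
The plan is a high-degree/low-degree dichotomy on the \emph{apex} of the triangle, i.e.\ on the vertex of the triangle that does not lie on the chosen $X\times Y$ edge, where ``high'' and ``low'' refer to the number of neighbours inside $X\cup Y$. Fix a parameter $\beta>0$ to be tuned at the end and call a vertex $v$ \emph{heavy} if $|N(v)\cap(X\cup Y)|\ge n^{1/2-\beta}$ and \emph{light} otherwise; write $V^h$ and $V^\ell$ for the two classes. Both are computed in $O(n)$ time by scanning the neighbourhoods of the $O(\sqrt n)$ vertices of $X\cup Y$ and counting how often each vertex gets marked. Because the maximum degree is $\sqrt n$, the number of edges incident to $X\cup Y$ is $O(\sqrt n\cdot\sqrt n)=O(n)$, which gives $|V^h|=O(n^{1/2+\beta})$. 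For a triangle on vertices $u\in X$, $w\in Y$ with $\{u,w\}\in E$ and apex $v$, the edge we are asked to report is one of $\{u,w\}$, $\{u,v\}$, $\{v,w\}$; I will always emit all three of these whenever such a triangle is found, so correctness only requires that \emph{every} triangle containing an $X\times Y$ edge be discovered at least once --- which holds because its apex is either light or heavy.

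\textbf{Light apex.} Here I enumerate directly: for each $v\in V^\ell$ I run over all pairs of its neighbours in $X\cup Y$, and whenever such a pair $\{x,y\}$ can be labelled with $x\in X$, $y\in Y$ and is an edge of $G$, I output the triangle $\{v,x,y\}$ together with its three edges. To bound the cost I bucket $V^\ell$ by degree into $X\cup Y$: the bucket of vertices of such degree in $[2^i,2^{i+1})$ has $O(n/2^i)$ members (again since only $O(n)$ edges meet $X\cup Y$) and costs $O((2^{i+1})^2)$ each, hence $O(n\cdot 2^i)$ per bucket; summing the geometric series up to $2^i\le n^{1/2-\beta}$ yields $O(n^{3/2-\beta})$ in total.

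\textbf{Heavy apex.} Now the apex ranges over the $O(n^{1/2+\beta})$ vertices of $V^h$, and I recover the edges to be reported by Boolean matrix multiplication, split into three sub-cases by the bipartite part that contains the edge. For $e\in X\times Y$: multiply the $|X|\times|V^h|$ adjacency matrix by the $|V^h|\times|Y|$ one to learn for each pair in $X\times Y$ whether a $2$-path through $V^h$ exists, then scan the $O(n)$ edges of $X\times Y$. For $e\in X\times V^h$ (triangle $u$--$w$--$v$ with $u\in X$, $w\in Y$, $v\in V^h$, edge $\{u,v\}$ to be reported): multiply the $|X|\times|Y|$ adjacency matrix by the $|Y|\times|V^h|$ one, detecting the $2$-path $u$--$w$--$v$ through $Y$, then scan the $O(n)$ edges of $X\times V^h$; the case $e\in Y\times V^h$ is identical with the roles of $X$ and $Y$ swapped. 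Each product involves a $\sqrt n\times n^{1/2+\beta}$ matrix and decomposes into $n^\beta$ products of $\sqrt n\times\sqrt n$ squares, costing $O(n^{\beta+\omega/2})$, while the edge scans and output handling are $O(n^{1+\beta})=O(n^{\beta+\omega/2})$ since $\omega\ge 2$.

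The total is $O(n^{3/2-\beta}+n^{\omega/2+\beta})$, and balancing the two terms with $\beta=(3-\omega)/4$ gives $3/2-\beta=(3+\omega)/4$, the claimed bound. I do not expect a real obstacle here; the only point needing care is the matrix-multiplication bookkeeping --- choosing, in each sub-case, the correct ``middle'' vertex set ($V^h$ or $Y$) so that a detected $2$-path together with the scanned edge really forms a triangle carrying an $X\times Y$ edge, and checking that the three sub-cases together cover every edge that a heavy-apex triangle can contain (they do, since each such edge has at least one endpoint in $X\cup Y$).
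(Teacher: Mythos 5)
Your proposal is correct and follows essentially the same route as the paper's proof: the same heavy/light split by number of neighbours in $X\cup Y$, the same bucketed enumeration giving $O(n^{3/2-\beta})$ for light apexes, the same three Boolean matrix products (yours are transposes of the paper's) costing $O(n^{\omega/2+\beta})$ for heavy apexes, and the same balancing $\beta=(3-\omega)/4$. No gaps.
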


Now we are ready to prove Lemma~\ref{lem:hardnessNoCycles}.

\begin{proof}[Proof of Lemma~\ref{lem:hardnessNoCycles}]
We iteratively run the following algorithm that has a 3-step structure: (1) Find a pair  $\{u,w\}$ with at least $n^{1/2+\gamma}/200$ edges between $N(u)$ and $N(w)$ by using the algorithm from Lemma~\ref{lem:find-dense-piece} (if the algorithm from Lemma~\ref{lem:find-dense-piece} fails to find such a pair, we know that the graph has fewer than $n^{k/2+\gamma}$ $k$-cycles, and we stop), (2) report all the edges that are in a triangle that uses an edge from $N(u)\times N(w)$ by using the algorithm from Lemma~\ref{lem:check-triangle-piece}, and (3) remove all the edges between $N(u)$ and $N(w)$ from the graph. Since the number of edges in the graph is at most $n^{3/2}$, and in each step we remove at least $n^{1/2+\gamma}/200$ edges, the algorithm has at most $O(n^{1-\gamma})$ iterations. Furthermore, in each iteration, step (1) takes $O(n^{1-2\gamma}\log^2n)$ time, step (2) takes $O(n^{(3+\omega)/4})$ time, and step (3) takes linear time. Hence, in total, the running time is $O(n^{1-\gamma}\cdot n^{(3+\omega)/4})$. For $\gamma=(\omega-1)/4+\eps$, where $\eps\in (0,\frac{3-\omega}{4})$, this is $O(n^{2-\eps})$, as desired (the upper bound on $\eps$ is needed so that we have $\gamma<1/2$).
\end{proof}

Finally, we finish this section with the following remark and theorem on the number of triangles in the All-Edge Triangle problem.

\paragraph{A remark on the All-Edge Triangle problem:} One of our lower bound results in Section~\ref{sec:consequenceskCycles} requires the total number of triangles in the All-Edge Triangle instance to be subquadratic, specifically the lower bound for $4$-cycle enumeration in Theorem~\ref{thm:4-cycle-enumeration}. Interestingly, we can use the same ideas that we presented in this section to show that the All-Edge Triangle problem is still (3-SUM and APSP) hard even when the number of triangles is $O(n^{3/2+\gamma})$, for $\gamma=\frac{\omega-1}{4}+\eps$.

\begin{theorem}\label{thm:All-Edge-Few-Triangles}
Let $\eps\in (0,\frac{3-\omega}{4})$ and $\gamma=\frac{\omega-1}{4}+\eps$. Assuming the $3$-SUM conjecture or the APSP conjecture, no $O(n^{2-\eps})$-time algorithm can answer for each edge whether it participates in a triangle in a given $n$-node graph with maximum degree at most $\sqrt{n}$, and $O(n^{3/2+\gamma})$ triangles. 
\begin{proof}
By Theorem~\ref{thm:3sumReduction}, assuming the $3$-SUM conjecture or the APSP conjecture, no $O(n^{2-\eps})$-time algorithm can answer for each edge whether it participates in a triangle in a given $n$-node graph with maximum degree at most $\sqrt{n}$. Hence, it suffices to show that we can reduce the number of triangles in the input graph to $n^{3/2+\gamma}$ in subquadratic time. 

For this, we use the same dense piece removal trick that we introduced in this section. Observe that an $n$-node graph with at least $n^{3/2+\gamma}$ triangles must contain at least $\frac{1}{2}n^{1/2+\gamma}$ nodes that each participate in at least $\frac{1}{2}n^{1/2+\gamma}$ triangles. Otherwise, since each node can be in at most $n$ triangles, the number of triangles would be smaller than $\frac{1}{2}n\cdot n^{1/2+\gamma}+\frac{1}{2}n^{1/2+\gamma}\cdot n=n^{3/2+\gamma}$, which is a contradiction. Hence, there are at least $\frac{1}{2}n^{1/2+\gamma}$ nodes $v$ for which the number of edges between the nodes in $N(v)$ is at least $\frac{1}{2}n^{1/2+\gamma}$.

Therefore, as long as the number of triangles is at least $n^{3/2+\gamma}$, we can find a $\sqrt{n}$-node subgraph with $\Omega(n^{1/2+\gamma})$ edges efficiently: sample $100n^{1/2-\gamma}\log n$ nodes $v$, and for each of them sample $200n^{1/2-\gamma}\log n$ pairs in $N(v)\times N(v)$ and check how many of the sampled pairs are edges. If the number of edges is at least $50\log n$, we output $N(v)$. By a similar analysis to the one in Lemma~\ref{lem:find-dense-piece}, as long as long as the number of triangles is at least $n^{3/2+\gamma}$, this algorithm finds a $\sqrt{n}$-node subgraph with $\Omega(n^{1/2+\gamma})$ edges, with high probability. Moreover, by Lemma~\ref{lem:check-triangle-piece}, we report all the edges in the graph that participate in a triangle that uses an edge from the subgraph in  $O(n^{\frac{3+\omega}{4}})$ time. 

Therefore, by iteratively finding a dense subgraph, checking for each edge in the subgraph whether there is a triangle that uses it, and removing these edges from the graph, we obtain a graph with fewer than $n^{3/2+\gamma}$ triangles. By a similar analysis to the one provided in the proof of Lemma~\ref{lem:hardnessNoCycles}, this takes subquadratic time, as desired.
\end{proof}
\end{theorem}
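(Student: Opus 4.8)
The plan is to mimic the dense-piece removal behind Lemma~\ref{lem:hardnessNoCycles}, but with the closed neighborhood of a single vertex playing the role of the bipartite piece $N(u)\cup N(w)$. By Theorem~\ref{thm:3sumReduction} it suffices to show that, in subquadratic time, any $\sqrt n$-degree instance of All-Edges-Triangle can be turned into an \emph{equivalent} instance with only $O(n^{3/2+\gamma})$ triangles, where ``equivalent'' means that we have already correctly answered some of the edges and that no triangle through an as-yet-unanswered edge is destroyed.

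First I would isolate the structural fact that plays the role of Lemma~\ref{lem:ManyDensePairs}: if an $n$-node graph of maximum degree at most $\sqrt n$ has at least $n^{3/2+\gamma}$ triangles, then at least $\tfrac12 n^{1/2+\gamma}$ vertices $v$ are \emph{heavy}, meaning they lie in at least $\tfrac12 n^{1/2+\gamma}$ triangles. Indeed each vertex is in at most $n$ triangles (its neighborhood has at most $n$ internal edges, since degrees are at most $\sqrt n$), so if fewer than $\tfrac12 n^{1/2+\gamma}$ vertices were heavy the triangle count would be below $\tfrac12 n\cdot n^{1/2+\gamma}+\tfrac12 n^{1/2+\gamma}\cdot n = n^{3/2+\gamma}$, a contradiction. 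A heavy $v$ is precisely one whose neighborhood $N(v)$, of size at most $\sqrt n$, spans $\Omega(n^{1/2+\gamma})$ edges, i.e.\ a $\gamma$-dense piece in the sense of Definition~\ref{def:densePiece}.

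Next, as long as there remain at least $n^{3/2+\gamma}$ triangles, I would locate such a dense neighborhood efficiently: sample $\Theta(n^{1/2-\gamma}\log n)$ vertices $v$, and for each sampled $v$ sample $\Theta(n^{1/2-\gamma}\log n)$ pairs inside $N(v)$ and count how many are edges, exactly as in Proposition~\ref{prop:EstimateEdges} and Lemma~\ref{lem:find-dense-piece}. With high probability some sampled $v$ is heavy and is flagged (edge count $\geq 50\log n$), and any flagged $v$ has $\Omega(n^{1/2+\gamma})$ edges inside $N(v)$. I then invoke Lemma~\ref{lem:check-triangle-piece} with $X=Y=N(v)$ to report, in $O(n^{(3+\omega)/4})$ time, every edge of the whole graph that lies in a triangle using an edge of $N(v)\times N(v)$ — in particular every triangle through $v$ is accounted for — and delete all edges inside $N(v)$, then repeat.

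For the bookkeeping: each iteration removes $\Omega(n^{1/2+\gamma})$ edges out of $O(n^{3/2})$, so the loop runs $O(n^{1-\gamma})$ times, and each iteration costs $O(n^{1-2\gamma}\log^2 n + n^{(3+\omega)/4})$; the total is $O(n^{1-\gamma}\cdot n^{(3+\omega)/4})$, which for $\gamma=(\omega-1)/4+\eps$ is $O(n^{2-\eps})$, with the hypothesis $\eps<(3-\omega)/4$ ensuring $\gamma<1/2$. When the search fails we know fewer than $n^{3/2+\gamma}$ triangles remain, so an $O(n^{2-\eps})$ algorithm for the few-triangles version would answer every edge of the original instance in subquadratic time, contradicting the $3$-SUM and APSP conjectures via Theorem~\ref{thm:3sumReduction}. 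The only points that need care — and they are routine given the lemmas already at hand — are the two-sided accounting of per-iteration time against the edge budget, and a union bound over the $O(n^{1-\gamma})$ sampling rounds so that all density estimates are simultaneously correct with high probability; the genuinely new ingredient over Lemma~\ref{lem:hardnessNoCycles} is merely the simpler counting argument that replaces Lemma~\ref{lem:ManyDensePairs}.
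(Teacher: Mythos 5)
Your proposal is correct and follows essentially the same route as the paper's proof: the counting argument showing that $\geq n^{3/2+\gamma}$ triangles force $\tfrac12 n^{1/2+\gamma}$ heavy vertices whose neighborhoods are $\gamma$-dense pieces, the sampling procedure à la Lemma~\ref{lem:find-dense-piece} with $X=Y=N(v)$, the call to Lemma~\ref{lem:check-triangle-piece}, and the $O(n^{1-\gamma})$-iteration time analysis are all exactly the paper's argument. No gaps to report.
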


\subsection{Hitting $k$-Cycles Faster than Triangles: A Proof of Theorem~\ref{thm:main1}}\label{sec:hittingCycles}

\begin{proof}[Proof of Theorem~\ref{thm:main1}]
Let $\gamma=(\omega-1)/4+\eps$, where $\eps\in (0,\frac{3-\omega}{4})$. Recall that we can assume without loss of generality that the input graph is tripartite. Let $A,B$, and $C$ be the three parts. First, we run the algorithm from the dense piece removal step (Lemma~\ref{lem:hardnessNoCycles}) for every $4\leq k'\leq k$, and we set $E'$ to be the set of reported edges that are in a triangle that uses an edge from a dense piece $\tilde{E}$. This takes $O(kn^{2-\eps})=O(n^{2-\eps})$ time. Furthermore, the obtained graph has fewer than $n^{k'/2+\gamma}$ $k'$-cycles for every $4\leq k'\leq k$.

We break the obtained graph into tripartite subgraphs, which we refer to as slices, such that each triangle appears in exactly one slice, as follows. We randomly partition each of the sets $A, B$ and $C$ into $n^{1/2-\alpha}$ sets, each of expected size $n^{1/2+\alpha}$, where each node joins each of the sets uniformly at random, and independently of the choices for the other nodes. We denote these sets by $\{A_j\}_{j\in [n^{1/2-\alpha}]}$, $\{B_k\}_{k\in [n^{1/2-\alpha}]}$, and $\{C_{\ell}\}_{\ell\in [n^{1/2-\alpha}]}$. That is\footnote{We use the notation $\dot\cup$ to denote a disjoint union of sets.}, $$A=A_1 \dot\cup \ldots \dot\cup A_{n^{1/2-\alpha}}, \;\; B=B_1 \dot\cup \ldots \dot\cup B_{n^{1/2-\alpha}},\;\; C=C_1 \dot\cup \ldots \dot\cup C_{n^{1/2-\alpha}}.$$

By a standard Chernoff argument, for every slice $(A_j,B_k,C_{\ell})$, the number of nodes is $O(n^{1/2+\alpha})$ and the maximum degree is $O(n^{1/2} / n^{1/2-\alpha}) = O(n^{\alpha})$ with high probability. It remains to show that the expected number of $\{4,..,k\}$-cycles in each slice is $O(n^{\gamma+\alpha k})$. Observe that the probability that a given $k'$-cycle is fully contained in the slice $(A_j,B_k,C_{\ell})$ is $1/(n^{1/2-\alpha})^{k'
}=1/(n^{k'/2-k'\alpha})$. Hence, the expected number of $k'$-cycles that are fully contained in the slice $(A_j,B_k,C_{\ell})$ is $n^{k'/2+\gamma}/(n^{k'/2-k\alpha})=n^{\gamma+k'\alpha}$. Over all $4\leq k'\leq k$, the expected number of $\{4,..,k\}$-cycles is at most $kn^{\gamma+k\alpha}=O(n^{\gamma+k\alpha})$ (since $k$ is constant), as desired.
\end{proof}

\subsection{Consequences of Theorem~\ref{thm:main1}}\label{sec:consequenceskCycles}

We start with a hardness result for $4$-cycle enumeration.

\begin{theorem}\label{thm:4-cycle-enumeration}
For every $\eps>0$ there is a $\delta>0$ such that if there is an algorithm that can preprocess an $m$-edge graph in $O(m^{1+\frac{3-\omega}{2(4-\omega)}-\eps})$ time and then enumerate $4$-cycles with $m^{o(1)}$ delay, then there is an $n^{2-\delta+o(1)}$-time algorithm that given an $n$-node graph with maximum degree at most $\sqrt{n}$ and $O(n^{2-\delta})$ triangles answers for every edge whether it participates in a triangle with probability at least $9/10$. 
\end{theorem}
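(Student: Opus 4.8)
The plan is to reduce the All-Edge-Triangle problem on $\sqrt{n}$-degree graphs with $O(n^{2-\delta})$ triangles (which is $3$-SUM and APSP hard by Theorem~\ref{thm:All-Edge-Few-Triangles}, choosing $\delta$ appropriately, e.g.\ so that $O(n^{3/2+\gamma})\subseteq O(n^{2-\delta})$) to the $4$-cycle enumeration problem, via Theorem~\ref{thm:main1} applied with $k=4$. First I would run the algorithm of Theorem~\ref{thm:main1} with $k=4$ and with $\alpha$ a sufficiently small constant (chosen as a function of $\eps$), obtaining in $O(n^{2-\eps'})$ time a set $E'$ of edges that are certified to be in triangles, together with $s=n^{3/2-3\alpha}$ slices $G_1,\dots,G_s$, each with $O(n^{1/2+\alpha})$ vertices and maximum degree $O(n^{\alpha})$, and with expected total number of $4$-cycles $O(n^{3/2-3\alpha}\cdot n^{(\omega-1)/4+4\alpha+\eps'}) = O(n^{3/2+(\omega-1)/4+\alpha+\eps'})$, which is subquadratic provided $\alpha$ and $\eps'$ are small enough. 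Every remaining triangle lies in some $G_i$, so it suffices to solve All-Edge-Triangle on each $G_i$; after reporting the answers for edges in $E'$, we only need, for each $i$ and each edge of $G_i$, to decide whether it is in a triangle of $G_i$.

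For each slice $G_i$ the reduction is the standard Triangle-to-$4$-Cycle transformation restricted to the slice: view $G_i$ as tripartite with parts $A_j, B_k, C_\ell$; subdivide every $B_k\times C_\ell$ edge with a fresh node, so that any triangle $\{a,b,c\}$ of $G_i$ becomes a $4$-cycle $b$–$a$–$c$–$bc$–$b$, and conversely any $4$-cycle through a subdivision node $bc$ must traverse $b,c$ and a common neighbor $a\in A_j$, i.e.\ witnesses a triangle. The remaining $4$-cycles in the subdivided graph are exactly the "false" $4$-cycles coming from original $4$-cycles of $G_i$ entirely inside $A_j\cup B_k\cup C_\ell$ (two $A$-nodes plus two others). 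The key step is to enumerate and delete these: run the $4$-cycle enumeration algorithm on $G_i$ itself; its total number of $4$-cycles, summed over all $i$, is subquadratic, and each slice has $n^{1/2+2\alpha}$ edges, so the enumeration preprocessing costs $\sum_i (n^{1/2+2\alpha})^{1+\frac{3-\omega}{2(4-\omega)}-\eps}$ which, using $s=n^{3/2-3\alpha}$, is $n^{2-\delta+o(1)}$ for small enough $\alpha$ and the claimed exponent; the $m^{o(1)}$-delay enumeration of the subquadratically-many $4$-cycles then costs $n^{2-\delta+o(1)}$ in total. For each enumerated $4$-cycle of $G_i$ we first check in $O(n^{\alpha})$ time (brute force over the $O(n^{\alpha})$-size neighborhoods in the slice) whether any of its edges lies in a triangle of $G_i$, recording the answer, and then delete one of its edges; this costs $O(n^{\alpha})$ per cycle, hence $n^{2-\delta+o(1)}$ total. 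After this clean-up the subdivided slice has no false $4$-cycles, so one run of the $4$-cycle enumeration algorithm on it detects a remaining $4$-cycle iff the (post-deletion) slice has a triangle — and combining with the recorded answers for deleted edges and for $E'$ gives the All-Edge-Triangle output with the required $9/10$ success probability (boosting the constant-probability guarantees of Theorem~\ref{thm:main1} by a constant number of repetitions if needed).

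The main obstacle is the bookkeeping that makes every stage genuinely subquadratic \emph{simultaneously}: one must choose $\alpha=\alpha(\eps)$ small enough that (i) the total $4$-cycle count across slices is $n^{2-\Omega(1)}$, (ii) the enumeration preprocessing $\sum_i |E(G_i)|^{1+\frac{3-\omega}{2(4-\omega)}-\eps}$ is $n^{2-\Omega(1)}$ — this is where the specific exponent $\frac{3-\omega}{2(4-\omega)}$ is forced, balancing $|E(G_i)|=n^{1/2+2\alpha}$ against $s=n^{3/2-3\alpha}$ and against $\gamma=\frac{\omega-1}{4}$ — and (iii) the per-cycle triangle checks, at cost $O(n^{\alpha})$ each, still sum to $n^{2-\Omega(1)}$. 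A secondary subtlety is correctness: arguing that after deleting one edge per false $4$-cycle no false $4$-cycle survives while no original triangle-edge is silently lost (handled because we record the triangle-membership answer for every deleted edge before deleting it, using only the small-degree structure of the slice), and that the $O(1)$-probability guarantees of Theorem~\ref{thm:main1} can be amplified to $9/10$ and the All-Edge answers recovered for all edges, including those in $E'$ and those deleted during clean-up.
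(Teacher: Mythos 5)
Your overall scaffolding (running Theorem~\ref{thm:main1} with $k=4$, subdividing the $B\times C$ edges inside each slice, and charging enumeration delay to the subquadratic total number of false $4$-cycles) matches the paper, but your insistence on first \emph{removing} all false $4$-cycles creates a quantitative gap that blocks the claimed exponent. Your clean-up phase spends $\Theta(n^{\alpha})$ per enumerated false $4$-cycle on triangle checks, so it is subquadratic only if $\frac{3}{2}+\frac{\omega-1}{4}+2\alpha<2$, i.e.\ $\alpha<\frac{3-\omega}{8}$. On the other hand, the total preprocessing $\sum_i |E(G_i)|^{x}= n^{3/2-3\alpha+x(1/2+2\alpha)}$ is subquadratic only for $x<1+\frac{\alpha}{1/2+2\alpha}$, and with $\alpha<\frac{3-\omega}{8}$ this caps $x$ at $1+\frac{3-\omega}{2(5-\omega)}$, which falls short of the claimed $1+\frac{3-\omega}{2(4-\omega)}$ by a constant independent of $\eps$ (it is exactly the exponent of the $4$-cycle \emph{detection} result, Corollary~\ref{cor:fourFree}). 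To reach $1+\frac{3-\omega}{2(4-\omega)}-\eps$ for small $\eps$ you must take $\alpha$ close to $\frac{3-\omega}{4}$, which makes your per-cycle checks super-quadratic; your own constraints (ii) and (iii) are mutually incompatible for small $\eps$. The paper's proof avoids this tension entirely: it never deletes false $4$-cycles. It subdivides each slice once and runs the enumerator on the subdivided slice; each emitted $4$-cycle is recognized in $O(1)$ time as either a subdivided triangle (it contains a dummy node; mark its edges) or a false $4$-cycle (discard). The wasted delay is charged to the subquadratic total number of false $4$-cycles, and the useful delay to the $O(n^{2-\delta})$ triangles guaranteed by the hypothesis of the theorem---this is precisely why that hypothesis appears in the statement and why no clean-up is needed. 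This permits $\alpha=\frac{3-\omega}{4}-\eps''$, which is what forces the exponent $\frac{3-\omega}{2(4-\omega)}$.

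There is also a secondary correctness gap in the clean-up as you describe it: when you delete an edge $e$ of a false $4$-cycle you record triangle membership only for the edges of that $4$-cycle, but a triangle $T\ni e$ destroyed by the deletion may have its two other edges in no other triangle and in no false $4$-cycle, so your final enumeration would wrongly report them as triangle-free. This is repairable (for each deleted edge, enumerate all $O(n^{\alpha})$ triangles through it in the slice and record all three edges of each) at no asymptotic extra cost, but it does not rescue the exponent issue above; and note that in your final pass mere detection does not suffice---you must enumerate all subdivided triangles to answer every edge, which again relies on the $O(n^{2-\delta})$ triangle bound.
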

\begin{proof}
 Let $G$ be an $n$-node tripartite graph with sides $A,B,$ and $C$. First, we run the subquadratic-time algorithm from Theorem~\ref{thm:main1} with $\eps',\alpha$ to be chosen later, and $k=4$. Since the expected number of $4$-cycles in each $G_i$ is $O(n^{\frac{\omega-1}{4}+\eps'+4\alpha})$, it follows that the total number of $4$-cycles in all the $G_i$'s is at most $O(n^{3/2+\frac{\omega-1}{4}+\alpha+\eps'})$  with probability at least $99/100$ (by linearity of expectation and Markov). Furthermore, since each edge that is in a triangle is either in $E'$ or is in a triangle in one of the $G_i$'s, in order to find the remaining edges in $E\setminus E'$ that are in a triangle, it suffices to enumerate the triangles in all the $G_i$'s.

For this, we subdivide the edges in $B\times C$ by adding a dummy node $bc$ on each edge $\{b,c\}$. Hence, any triangle $\{a,b,c\}$ in some $G_i$ becomes a $4$-cycle $\{a,b,bc,c\}$; these are the only newly introduced 4-cycles.  We refer to the other $4$-cycles that are not a result of subdividing triangles as \emph{false} $4$-cycles; note that each false 4-cycle already was a 4-cycle before subdividing the edges. For each $G_i$, we run a $4$-cycles enumeration algorithm on the subdivided graph. Let $P(m)=m^{x}$ be the preprocessing time and $D(m)=m^{o(1)}$ be the delay. Since each $G_i$ has $n^{1/2+2\alpha}$ edges with high probability, the total preprocessing time for all $G_i$'s is

$$O(n^{3/2-3\alpha}\cdot P(n^{1/2+2\alpha}))=O(n^{3/2-3\alpha+x(1/2+2\alpha)}),$$ 
with high probability. For $x<1+\alpha/(1/2+2\alpha)$ this is subquadratic. On the other hand, the total delay we spend on false $4$-cycles is $O(n^{3/2+\frac{\omega-1}{4}+\alpha+\eps'+o(1)})$ with probability at least $99/100$. The remaining delay is spent on enumerating subdivided triangles, and there is only a subquadratic number of them. Hence, for $\alpha=\frac{3-\omega}{4}-\eps''$, for $\eps''>\eps'$, the total delay is subquadratic. Furthermore, since

\begin{align*}
    1+\alpha/(1/2+2\alpha)&=1+\frac{\frac{3-\omega}{4}-\eps''}{1/2+2(\frac{3-\omega}{4}-\eps'')}\\&>1+\frac{\frac{3-\omega-4\eps''}{4}}{\frac{4-\omega}{2}}
    \\&>1+(3-\omega)/(2(4-\omega))-2\eps'',
\end{align*}
when we set $x=1+(3-\omega)/(2(4-\omega))-\eps$, for $\eps=2\eps''$, the total preprocessing time and the total delay are both subquadratic, as desired.
\end{proof}

Corollary~\ref{cor:fourEnum} follows immediately by combining Theorem~\ref{thm:4-cycle-enumeration} and Theorem~\ref{thm:All-Edge-Few-Triangles}.

\begin{corollary}\label{cor:fourEnum}

Assuming either the $3$-SUM or APSP Conjectures, no algorithm can process an $m$-edge graph in $O(m^{1+\frac{3-\omega}{2(4-\omega)}-\eps})$ time and then enumerate $4$-cycles with $m^{o(1)}$ delay.
\end{corollary}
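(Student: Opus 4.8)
The plan is to obtain Corollary~\ref{cor:fourEnum} by a direct contradiction argument that simply chains the two results already established in this subsection. Suppose, towards a contradiction, that some algorithm preprocesses an $m$-edge graph in $O(m^{1+\frac{3-\omega}{2(4-\omega)}-\eps})$ time and then enumerates $4$-cycles with $m^{o(1)}$ delay. Feeding this algorithm into Theorem~\ref{thm:4-cycle-enumeration} produces a constant $\delta>0$ together with a randomized $n^{2-\delta+o(1)}$-time algorithm that, on any $n$-node graph with maximum degree at most $\sqrt{n}$ and at most $O(n^{2-\delta})$ triangles, answers the All-Edge Triangle question correctly with probability at least $9/10$. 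It then remains to observe that such an algorithm is exactly what Theorem~\ref{thm:All-Edge-Few-Triangles} rules out under the $3$-SUM or APSP conjectures.

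The only step that needs care is lining up the two ``few triangles'' regimes. Theorem~\ref{thm:All-Edge-Few-Triangles} gives hard instances with $O(n^{3/2+\gamma})$ triangles, where $\gamma=\frac{\omega-1}{4}+\eps'$ for any sufficiently small $\eps'>0$, whereas the algorithm above requires at most $O(n^{2-\delta})$ triangles. Here I would use that $\omega<3$, so $\frac{\omega-1}{4}<\frac12$ and hence $\tfrac32+\gamma = 2-\bigl(\tfrac{3-\omega}{4}-\eps'\bigr)<2$; thus for $\eps'$ small enough the inequality $\tfrac32+\gamma\le 2-\delta$ holds, after first shrinking $\delta$ if necessary (an $n^{2-\delta}$-time algorithm trivially gives an $n^{2-\delta'}$-time algorithm for any $\delta'<\delta$, applicable to a strictly larger class of inputs, so we may assume $\delta'<\frac{3-\omega}{4}$ and then pick $\eps'<\delta'$). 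With these choices the algorithm of Theorem~\ref{thm:4-cycle-enumeration} solves All-Edge Triangle on precisely the family of bounded-degree, $O(n^{3/2+\gamma})$-triangle instances that Theorem~\ref{thm:All-Edge-Few-Triangles} declares hard, in time $n^{2-\delta'+o(1)}=O(n^{2-\eps'})$, contradicting both the $3$-SUM and the APSP conjectures.

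The one mild subtlety is randomization: the reduction of Theorem~\ref{thm:4-cycle-enumeration} only guarantees constant success probability, so to close the argument I would invoke the standard fact that the $3$-SUM and APSP conjectures (hence the hardness in Theorem~\ref{thm:All-Edge-Few-Triangles}) are robust against randomized algorithms with constant error, or alternatively amplify by running the All-Edge Triangle algorithm $O(\log n)$ times and taking a per-edge majority, which keeps the running time $n^{2-\delta'+o(1)}$. I do not expect any genuine obstacle: the mathematical content is entirely contained in Theorems~\ref{thm:4-cycle-enumeration} and~\ref{thm:All-Edge-Few-Triangles}, and the remaining work is the parameter bookkeeping above, whose only real input is $\omega<3$. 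Finally, the same scheme extends verbatim to all $k\ge 3$ (recovering Corollary~\ref{cor:Enum}) by first composing with the folklore edge-subdivision gadget that reduces $3$-cycle or $4$-cycle enumeration to $k$-cycle enumeration on a graph with the same number of edges up to constant factors.
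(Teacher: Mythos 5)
Your proposal matches the paper's proof, which obtains Corollary~\ref{cor:fourEnum} exactly as you describe, by directly combining Theorem~\ref{thm:4-cycle-enumeration} with Theorem~\ref{thm:All-Edge-Few-Triangles}; the parameter matching and error amplification you spell out are precisely the (implicit) bookkeeping behind the paper's one-line derivation. The only loose point is your parenthetical claim that shrinking $\delta$ ``trivially'' yields an algorithm applicable to a larger class of inputs---as a black-box implication this is false, since smaller $\delta$ weakens the running-time guarantee but strengthens the input restriction---yet it is harmless here, because the proof of Theorem~\ref{thm:4-cycle-enumeration} uses the triangle bound only to control the delay spent on subdivided triangles and therefore yields the statement for every sufficiently small $\delta>0$, so one may take $\delta<\frac{3-\omega}{4}$ from the outset and then choose $\eps'<\frac{3-\omega}{4}-\delta$.
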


By simple gadget reductions that show that $k$-cycle (detection, enumeration, or listing) for any $k$ is at least as hard as either the $k=3$ or $k=4$ case (see Appendix~\ref{sec:reduction34cycle}) the following corollary follows: 

\CorEnum*

Next, we show a hardness result for approximate distance oracles.
\begin{theorem}\label{thm:distanceOracles}
For every $\eps,\delta'>0$ there is a $\delta>0$ such that if there is an $O(m^{1+\frac{3-\omega}{2(k+1-\omega)}-\eps})$-time algorithm that can $(k/2-\delta')$-approximate the distances between $m$ given pairs of nodes in a given $m$-edge graph, then there is an $O(n^{2-\delta})$-time algorithm for $n$-node graphs with maximum degree at most $\sqrt{n}$ that with probability at least $9/10$ answers for every edge whether it participates in a triangle.
\end{theorem}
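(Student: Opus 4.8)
\medskip\noindent\textbf{Proof plan.} The goal is to turn the hypothesized offline $(k/2-\delta')$-approximate distance algorithm into a subquadratic algorithm for All-Edge-Triangle on $n$-node graphs of maximum degree $\sqrt n$, which is $3$-SUM/APSP hard by Theorem~\ref{thm:3sumReduction}. Start from a tripartite instance $G$ with parts $A,B,C$. First I would make the instance \emph{triangle-sparse}: by the dense-piece removal argument behind Theorem~\ref{thm:All-Edge-Few-Triangles} (repeatedly pulling out a vertex $v$ whose neighbourhood spans $\Omega(n^{1/2+\gamma})$ edges, reporting all edges on a triangle through it via Lemma~\ref{lem:check-triangle-piece}, and deleting those edges), in $O(n^{2-\eps_1})$ time we may assume $G$ has at most $O(n^{3/2+\gamma})$ triangles, $\gamma=\tfrac{\omega-1}{4}+\eps_1$. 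Then apply Theorem~\ref{thm:main1} with parameter $k$ to obtain the certified set $E'\subseteq E(G)$ together with the $s=n^{3/2-3\alpha}$ slices $G_1,\dots,G_s$, each tripartite with parts $A_i,B_i,C_i$ of size $O(n^{1/2+\alpha})$, maximum degree $O(n^\alpha)$, and expected number of $\{4,\dots,k\}$-cycles $O(n^{(\omega-1)/4+k\alpha+\eps_2})$; recall that an edge of $G$ lies in a triangle of $G$ iff it lies in $E'$ or in a triangle of some $G_i$, and that each triangle of $G$ lies in exactly one slice, so $\sum_i(\text{\#triangles in }G_i)=O(n^{3/2+\gamma})$.

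The heart of the reduction is to solve All-Edge-Triangle on each slice through the classical distance reduction, applied three times (once per unordered pair of parts). Take the $B_iC_i$-edges as a representative case: delete the $B_i\times C_i$ edges of $G_i$ to get $G_i^{(3)}$, which is bipartite with sides $A_i$ and $B_i\cup C_i$, so all $b$--$c$ distances are even, $\{b,c\}$ lies in a triangle of $G_i$ exactly when $d_{G_i^{(3)}}(b,c)=2$, and a $b$--$c$ path of length $2j$ closed by the edge $\{b,c\}$ is a simple $(2j+1)$-cycle of $G_i$. Hence if $\{b,c\}$ is not in a triangle of $G_i$ and no odd cycle of $G_i$ of length $\le k$ passes through it, then $d_{G_i^{(3)}}(b,c)\ge k$, whereas if $\{b,c\}$ is in a triangle the returned value is at most $k-2\delta'<k$. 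So I would call the hypothesized algorithm on $G_i^{(3)}$ with query set all $O(m_i)$ edges of $G_i$ lying in $B_i\times C_i$ (where $m_i=O(n^{1/2+2\alpha})$, padding with isolated vertices if needed), declare ``not in a triangle'' for every queried edge with returned value $\ge k$, and \emph{flag} the rest; by the parity discussion every flagged pair is either a genuine triangle edge of $G_i$ or lies on an odd $\le k$-cycle of $G_i$, and each such cycle flags at most $O(k)=O(1)$ of its edges. Finally each flagged pair $\{b,c\}$ is verified directly in $O(n^\alpha)$ time by scanning $N(b)\cap A_i$ for a neighbour of $c$. Doing the analogous thing for the $A_iB_i$- and $A_iC_i$-edges solves All-Edge-Triangle on each $G_i$; OR-ing the slice answers with membership in $E'$ gives the answer for $G$. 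Correctness holds with probability $\ge 9/10$ by a union bound over the high-probability size/degree guarantees of Theorem~\ref{thm:main1} and a Markov bound on the total number of short cycles in the slices.

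It remains to choose $\alpha,\eps_1,\eps_2$ so that everything is subquadratic and to read off $\delta>0$. The preprocessing and Theorem~\ref{thm:main1} run in $O(n^{2-\min(\eps_1,\eps_2)})$ time; the distance calls cost $s\cdot 3\cdot O(m_i^{1+c-\eps})$ with $c=\tfrac{3-\omega}{2(k+1-\omega)}$, i.e.\ $n^{\,2-\alpha+(1/2+2\alpha)(c-\eps)}$, which is subquadratic exactly when $(1/2+2\alpha)(c-\eps)<\alpha$, forcing $\alpha$ to exceed roughly $\tfrac{3-\omega}{4(k-2)}$; and the verification costs $n^\alpha$ per flagged pair, with the number of flagged pairs at most $O(n^{3/2+\gamma})$ (true positives, bounded using the triangle-sparse step) plus $O(n^{3/2+(\omega-1)/4+(k-3)\alpha+\eps_2})$ (from short cycles, $s$ times the per-slice expectation), so verification is $O(n^{3/2+\gamma+\alpha}+n^{3/2+(\omega-1)/4+(k-2)\alpha+\eps_2})$, subquadratic exactly when $\alpha$ stays below roughly $\tfrac{3-\omega}{4(k-2)}$. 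The main obstacle — and the content of the theorem — is precisely that these opposing constraints on $\alpha$ leave a nonempty window: the $-\eps$ in the hypothesis supplies the strict slack needed against the verification bound once $\eps_2$ is taken small enough relative to $\eps$ (and when $k$ is even one gains extra room by only needing to sparsify $(k-1)$-cycles, since distance $\ge k$ already follows). A secondary point is that the triangle-sparse preprocessing is genuinely needed: without it the number of flagged ``true positive'' pairs could be $\Theta(n^{2-\alpha})$ and the verification step alone would be quadratic.
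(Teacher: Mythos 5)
Your proposal is correct and, at its core, is the same reduction as the paper's: run Theorem~\ref{thm:main1}, delete the edges of one part-pair in each slice, feed all deleted edges as offline queries to the hypothesized $(k/2-\delta')$-approximation with acceptance threshold $k$, verify every flagged pair inside its slice in $O(n^{\alpha})$ time, charge false flags to the $\{4,\dots,k\}$-cycle budget of the slices, and balance $\alpha$ just below $\tfrac{3-\omega}{4(k-2)}$ against the query-cost constraint $(1/2+2\alpha)(c-\eps)<\alpha$; your parameter window and the role of the $-\eps$ slack match the paper's choices exactly. The one genuine difference is how you keep the \emph{true-positive} flags from dominating the verification: you add a preliminary triangle-sparsification pass (the argument behind Theorem~\ref{thm:All-Edge-Few-Triangles}) so that the total number of triangles, and hence of successful flags, is $O(n^{3/2+\gamma})$, whereas the paper instead deduplicates across slices: once an edge is confirmed to lie in a triangle in some $G_i$, it is deleted from the candidate sets of all later $G_j$, so each edge pays for at most one successful check and the total cost of successful checks is only $O(m\cdot n^{\alpha})=O(n^{3/2+\alpha})$. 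Both fixes are valid and yield the same exponents; yours is more modular (no cross-slice bookkeeping, and it reuses machinery the paper needs anyway for the enumeration result, Theorem~\ref{thm:4-cycle-enumeration}), while the paper's is cheaper in that it avoids an extra $O(n^{2-\eps_1})$-time preprocessing pass. Accordingly, your closing claim that the triangle-sparse preprocessing is ``genuinely needed'' should be softened: the $\Theta(n^{2-\alpha})$ blow-up you identify is real for a scheme that re-verifies the same edge in every slice, but the paper's candidate-removal trick shows it can be neutralized without sparsifying triangles at all.
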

\begin{proof}
First, we run the subquadratic-time algorithm from Theorem~\ref{thm:main1} with $\eps'$ and $\alpha$ to be chosen later. For each of the returned $G_i$'s, we show how to check for every edge in $B\times C$ whether it participates in a triangle. (Checking the edges in $A\times B\cup A\times C$ is symmetric.) For every $G_i$, we remove the edges in $(B\times C)\cap E(G_i)$, and we denote the obtained graph by $G_i'$. We run a $(k/2-\delta')$-approximate distance oracle algorithm on $G_i'$, where we query all the $\{b,c\}$ pairs that correspond to the removed edges. We refer to the pairs $\{b,c\}$ for which the algorithm returned an estimate that is smaller than $k$ as the candidates of $G_i$. For each such candidate pair, we check whether the corresponding edge is in a triangle in $G_i$, which takes $O(n^{\alpha})$ time per edge. If the edge is found to be in a triangle, we remove it from the set of candidates of $G_j$ for every $j>i$. This ensures that we don't spend too much time on checking whether the same edge is in many different triangles.

Observe that for every edge $\{b,c\}$ that is in a triangle in $G_i$, the $(k/2-\delta')$-approximation algorithm must return an estimate that is smaller than $k$ when we query the pair $\{b,c\}$, as there is a two-path between $b$ and $c$ in $G_i'$. Furthermore, for every pair $\{b,c\}$ for which the algorithm returns an estimate that is smaller than $k$, it holds that there is a path between $b$ and $c$ of length at most $k-1$ in $G_i'$, and therefore the edge $\{b,c\}$ is in a cycle of length at most $k$ in $G_i$. We refer to the edges $\{b,c\}$ for which the algorithm returns an estimate $<k$ but $\{b,c\}$ is not in a triangle in $G_i$ as \emph{false} edges.

\paragraph{Running time:} Let $T(m)=m^x$ be the running time of the distance oracle algorithm (specifically, the total time for preprocessing an $m$-edge graph and answering $m$ approximate distance queries that are given in advance). In total, running this algorithm for all the $G_i'$'s takes time

$$O(n^{3/2-3\alpha+x(1/2+2\alpha)}).$$

For $x<1+\alpha/(1/2+2\alpha)$ this is subquadratic. Furthermore, the total number of $\{4,..,k\}$-cycles in all the $G_i$'s is $O(n^{3/2-3\alpha+(\omega-1)/4+\eps'+\alpha k})$ with probability at least $99/100$. Therefore, this is also the total number of times we check whether a false edge is in a triangle, over all the $G_i$'s. For an edge $\{b,c\}$ that participates in a triangle, we run a single check - the first time it was found to be in a triangle in some $G_i$. Hence, the total running time for this step is $O(n^{3/2+(\omega-1)/4+\eps'+\alpha (k-3)}\cdot n^{\alpha})=O(n^{3/2+(\omega-1)/4+\eps'+\alpha (k-2)})$. This is subquadratic when we set  

$$\alpha=\frac{\frac{1}{2}-\frac{\omega-1}{4}}{k-2}-\eps''=\frac{3-\omega}{4(k-2)}-\eps''$$
for $\eps''>\eps'$. For this choice of $\alpha$, we have that 

\begin{align*}
    1+\alpha/(1/2+2\alpha) &= 1 + \frac{\frac{3-\omega}{4(k-2)}-\eps''}{\frac{1}{2}+2(\frac{3-\omega}{4(k-2)}-\eps'')}\\&>1 + \frac{\frac{3-\omega-4(k-2)\eps''}{4(k-2)}}{\frac{k+1-\omega}{2(k-2)}}\\
    &=1 + \frac{3-\omega}{2(k+1-\omega)}-\frac{4(k-2)\eps''}{2(k+1-\omega)}\\
    &>1 + \frac{3-\omega}{2(k+1-\omega)}-\frac{4(k-2)\eps''}{2(k+1-3)}\\
    &=1 + \frac{3-\omega}{2(k+1-\omega)}-2\eps''
\end{align*}

Thus, when we set $x=1+\frac{3-\omega}{2(k+1-\omega)}-\eps$, for $\eps>2\eps''$, the total running time is subquadratic, as desired.
\end{proof}

Corollary~\ref{cor:distOracles} follows immediately by combining Theorem~\ref{thm:distanceOracles} with Theorem~\ref{thm:3sumReduction}.

\CorDistOracles*

Finally, we prove a hardness result for dynamic approximate All Pairs Shortest Paths.

\begin{theorem}\label{thm:dynamicAPSP}
For every $\eps,\delta'>0$ and integer $k \ge 4$ there is a $\delta>0$ such that if there is a dynamic algorithm for $(k/2-\delta')$-approximate APSP with preprocessing time $O(N^3)$ and update/query time $O(m^{\frac{3-\omega}{2(k+1-\omega)}-\eps})$ in $N$-node and $m$-edge graphs, then there is an $O(n^{2-\delta})$-time algorithm for $n$-node graphs with maximum degree at most $\sqrt{n}$ that with probability at least $9/10$ answers for every edge whether it participates in a triangle.
\end{theorem}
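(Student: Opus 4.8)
The statement concerns the second (fully dynamic, $O(N^3)$-preprocessing) setting of Corollary~\ref{cor:dynamic}; the decremental bullet there is handled by the offline distance-oracle reduction of Theorem~\ref{thm:distanceOracles}, and the plan here is to mirror that reduction, feeding the slices produced by Theorem~\ref{thm:main1} into the dynamic data structure, while being careful not to pay the cubic preprocessing more than once. Starting from an $n$-node, tripartite, maximum-degree-$\sqrt{n}$ instance $G$ of All-Edges-Triangle (hard by Theorem~\ref{thm:3sumReduction}), first run the algorithm of Theorem~\ref{thm:main1} with the given $k$ and with parameters $\eps',\alpha>0$ to be fixed later, obtaining the edge set $E'$ and the $s=n^{3/2-3\alpha}$ slices $G_1,\dots,G_s$, each on $O(n^{1/2+\alpha})$ vertices, with maximum degree $O(n^{\alpha})$, hence $O(n^{1/2+2\alpha})$ edges, and with $O(n^{(\omega-1)/4+k\alpha+\eps'})$ expected $\{4,..,k\}$-cycles. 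Since every triangle edge of $G$ lies in $E'$ or in a triangle of some $G_i$, it suffices to solve All-Edges-Triangle inside each $G_i$; by symmetry I only describe the edges of $G_i$ lying in $B\times C$. Delete those edges from $G_i$ to get $G_i'$. For a $B\times C$ edge $\{b,c\}$ of $G_i$ we have $\mathrm{dist}_{G_i'}(b,c)=2$ if $\{b,c\}$ lies in a triangle of $G_i$ (a common neighbour in $A$) and $\mathrm{dist}_{G_i'}(b,c)\ge 3$ otherwise, while a $(k/2-\delta')$-approximate answer for such a pair is always $<k$ in the first case, and if it is $<k$ in the second case it exhibits a $b$--$c$ path of length $\le k-1$ in $G_i'$, which together with $\{b,c\}$ is a simple cycle of length in $\{4,..,k\}$ in $G_i$. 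Thus the pairs whose reported distance is $<k$ form a candidate set containing all triangle edges and only $O(k)$ false candidates per $\{4,..,k\}$-cycle of $G_i$; a false candidate $\{b,c\}$ is re-checked in $O(n^{\alpha})$ time by scanning $N_{G_i}(b)$, and a genuine triangle edge is verified once and then removed from the candidate sets of all later slices.

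The crucial point — and the only real departure from the offline-distance-oracle argument — is that we cannot afford the $O(N^3)$ preprocessing separately for each of the $s=n^{3/2-3\alpha}$ slices, since $s\cdot n^{3(1/2+\alpha)}=n^{3}$. Instead we invoke the dynamic algorithm \emph{once}: preprocess the empty graph on a fixed vertex set $[N]$ with $N=\Theta(n^{1/2+\alpha})$ (an upper bound on $|V(G_i)|$), at cost $O(N^3)=O(n^{3/2+3\alpha})$. We then process the slices one at a time: relabel $V(G_i)$ into $[N]$, insert the $O(n^{1/2+2\alpha})$ edges of $G_i'$, issue the $O(n^{1/2+2\alpha})$ distance queries needed for the candidate test above, and delete all those edges again to reset to the empty graph before moving to $G_{i+1}$. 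The current edge count is always $O(n^{1/2+2\alpha})$, so every update or query costs $O\big((n^{1/2+2\alpha})^{x'}\big)$ with $x'=\frac{3-\omega}{2(k+1-\omega)}-\eps$, and the total number of updates and queries over all slices is $O(n^{3/2-3\alpha}\cdot n^{1/2+2\alpha})=O(n^{2-\alpha})$, so the whole dynamic part costs $O(n^{2-\alpha+(1/2+2\alpha)x'})$ on top of the one $O(n^{3/2+3\alpha})$ preprocessing.

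Finally, balance the parameters exactly as in the proof of Theorem~\ref{thm:distanceOracles}: pick $0<\eps'<\eps''$ small and set $\alpha=\frac{3-\omega}{4(k-2)}-\eps''\in(0,\tfrac12)$, which for $k\ge4$ satisfies $\alpha<\tfrac18$. Then the one-time preprocessing $O(n^{3/2+3\alpha})$ is subquadratic; the expected number of false candidates over all slices is $O(n^{3/2-3\alpha+(\omega-1)/4+\eps'+k\alpha})$, so re-verifying them (at $O(n^{\alpha})$ each) costs $O(n^{3/2+(\omega-1)/4+\eps'+(k-2)\alpha})=O(n^{2-(k-2)\eps''+\eps'})$ in expectation, hence $O(n^{2-\Omega(1)})$ with probability at least $99/100$ by Markov; verifying genuine triangle edges costs $O(n^{3/2+\alpha})$; and $O(n^{2-\alpha+(1/2+2\alpha)x'})$ is subquadratic because the arithmetic in the proof of Theorem~\ref{thm:distanceOracles} gives $\frac{\alpha}{1/2+2\alpha}>\frac{3-\omega}{2(k+1-\omega)}-2\eps''>x'$ as soon as $\eps>2\eps''$. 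Adding the $O(n^{2-\eps'})$ cost of Theorem~\ref{thm:main1} and intersecting its high-probability guarantees with the Markov event, the procedure runs in $O(n^{2-\delta})$ time for a suitable $\delta=\delta(\eps,k)>0$ and answers all-edges-triangle correctly with probability at least $9/10$; by Theorem~\ref{thm:3sumReduction} this contradicts both the $3$-SUM and APSP conjectures. I expect the only genuinely delicate step to be this bookkeeping — arranging that the one-time cubic preprocessing, the false-candidate re-verification, and the dynamic update/query cost are all simultaneously subquadratic under the single choice of $\alpha$ above — since replaying the $O(N^3)$ phase per slice, or choosing $\alpha$ too large, immediately breaks the bound.
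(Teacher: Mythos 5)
Your proposal is correct and follows essentially the same route as the paper's proof: a single invocation of the dynamic structure whose $O(N^3)$ preprocessing on $N=\Theta(n^{1/2+\alpha})$ vertices is paid once, phases that swap in the edges of each slice $G_i'$ and issue the $B\times C$ distance queries, candidate edges (estimate $<k$) charged either to genuine triangles or to $\{4,..,k\}$-cycles and re-verified in $O(n^{\alpha})$ time, with the same parameter choice $\alpha=\frac{3-\omega}{4(k-2)}-\eps''$ and the same arithmetic. The only cosmetic difference is that you reset to the empty graph between slices while the paper preprocesses $G_1'$ and transitions directly from $G_i'$ to $G_{i+1}'$; this changes nothing in the analysis.
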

\begin{proof}
First, we run the subquadratic-time algorithm from Theorem~\ref{thm:main1} with $\eps'$ and $\alpha$ to be chosen later. We show how to use a dynamic algorithm to check for each edge $\{b,c\}\in B\times C$ whether it participates in a triangle. (Checking the edges in $A\times B\cup A\times C$ is symmetric.) For each $G_i$, we remove the $B\times C$ edges, obtaining a graph $G_i'$. We let $G_1'$ be the input graph to be preprocessed by the dynamic algorithm in $O((n^{(1/2+\alpha)})^3)=O(n^{3/2+3\alpha})$ time, and we consider the following sequence of updates and queries we feed into the dynamic algorithm. 

\paragraph{Sequence of updates and queries:} For $1\leq i\leq n^{3/2-3\alpha}$ phases, in each phase $i$ we make the following queries and updates. Queries: For each edge $\{b,c\}\in (B\times C)\cap E(G_i)$, we query the pair $\{b,c\}$. This takes $O(n^{1/2+2\alpha})$ queries. Updates: we delete all the edges in $G_i'$ and add all the edges in $G'_{i+1}$, by using $O(n^{1/2+2\alpha})$ updates. 

\paragraph{Postprocessing:}
We use the distance estimations returned by the queries to find for each edge $\{b,c\}$ in each $G_i$ whether it in a triangle in $G_i$, as follows. For each $G_i'$, we collect all the pairs $\{b,c\}$ for which the answer to the query is $< k$, and we refer to the corresponding edges as the candidates of $G_i$. For each candidate edge, we check whether it is in a triangle in $G_i$ by iterating over all neighbors of the endpoints, and if so, we remove the edge from the set of candidates of $G_j$ for every $j>i$. This finishes the reduction. 

\paragraph{Running time:} In total, for all phases, the number of queries and updates is $O(n^{2-\alpha})$. Hence, if each update and query takes time $O(n^{x(1/2+2\alpha)})$, we have that the total query and update time is $O(n^{x(1/2+2\alpha)}\cdot n^{2-\alpha})$, for all updates and queries.
For $x<\frac{\alpha}{1/2+2\alpha}$, this is subquadratic.

For the postprocessing the analysis is similar to the one in Theorem~\ref{thm:distanceOracles}: Checking whether a candidate edge forms a triangle takes time $O(n^{\alpha})$, and there are $O(n^{3/2+(\omega-1)/4+\eps'+\alpha (k-3)})$ candidate edges in total, so the postprocessing takes total time $O(n^{3/2+(\omega-1)/4+\eps'+\alpha (k-2)})$. For $\alpha=\frac{3-\omega}{4(k-2)}-\eps''$, for $\eps''>\eps'$, this is subquadratic. Since $\omega \ge 2$ and $k \ge 4$, this choice of $\alpha$ satisfies $\alpha < 1/6$, so also the preprocessing time of $O(n^{3/2+3\alpha})$ is subquadratic.


Furthermore, by a similar calculation to the one provided in Theorem~\ref{thm:distanceOracles}, we have that $\frac{\alpha}{1/2+2\alpha}>\frac{3-\omega}{2(k+1-\omega)}-2\eps''$. For $\eps>2\eps''$, we set $x=\frac{3-\omega}{2(k+1-\omega)}-\eps<\frac{\alpha}{1/2+2\alpha}$. Hence, since the number of edges at any time of the dynamic process is $m=O(n^{1/2+2\alpha})$, as a function of the number of edges $m$, if the update time is  $O(m^x)=O(m^{\frac{3-\omega}{2(k+1-\omega)}-\eps})$, the total running time of the above algorithm is subquadratic, as desired.
\end{proof}

The following corollary follows immediately by combining Theorem~\ref{thm:distanceOracles}, Theorem~\ref{thm:dynamicAPSP}, and Theorem~\ref{thm:3sumReduction}, where the first bullet follows by a straightforward reduction from the Offline Distance Oracles problem to Decremental Dynamic APSP (just preprocess the graph and answer the queries without ever making edge deletions).

\CorDynamic*

\section{Removing All $4$-Cycles}\label{sec:det}\label{sec:RemoveAllFourCycles}

 In this section we prove Theorem~\ref{thm:main2} (Section~\ref{subsec:RemoveAllFourCycles}), as well as some hardness consequences of it (Section~\ref{subsec:FourCyclesConsequences}).

\subsection{A proof of Theorem~\ref{thm:main2}}\label{subsec:RemoveAllFourCycles}

\ThmAllFourCycles*

\begin{proof}
Let $\alpha\in (0,\frac{3-\omega}{8})$ and $\eps,\eps'\in (0,\frac{3-\omega}{4}-2\alpha)$ to be chosen later. First, we run the subquadratic-time algorithm from Theorem~\ref{thm:main1} with $\eps',\alpha$, and $k=4$. Recall that this algorithm returns a set of edges $E'$ and $n^{3/2-3\alpha}$ subgraphs, such that each edge that is in a triangle is either in $E'$ or in a triangle in one of the subgraphs, where each subgraph is a slice $(A_j,B_k,C_{\ell})$, for $j,k,{\ell}\in [n^{1/2-\alpha}]$. Furthermore, each slice has $O(n^{(\omega-1)/4+\eps'+4\alpha})$ $4$-cycles in expectation, and therefore the overall number of $4$-cycles in all the slices is at most $O(n^{3/2+\alpha+(\omega-1)/4+\eps'})$ with probability at least $99/100$. Our algorithm adds more edges to $E'$ such that the obtained slices are $4$-cycle-free, as follows.

We show that it is possible to list all the $4$-cycles in all the slices in time that is linear in their number, see Lemma~\ref{lem:listfourcycles} below. After listing all the $4$-cycles, we denote by $\mathcal{S}_{j,k,{\ell}}$ the set of edges that participate in 4-cycles in the slice $(A_j,B_k,C_{\ell})$. Note that after removing the edges $\mathcal{S}_{j,k,{\ell}}$ from the slice $(A_j,B_k,C_{\ell})$ it becomes 4-cycle-free, as desired. It remains to check for each edge in $\mathcal{S}_{j,k,{\ell}}$ whether it participates in a triangle in the slice, and if so we add it to $E'$. Since the degree of each node in a slice is with high probability at most $O(n^{\alpha})$, this takes $O(|\mathcal{S}_{j,k,{\ell}}|\cdot n^{\alpha})$ time per slice with high probability. Hence, in total, for all slices, this takes time $\sum_{j,k,{\ell}} O(|\mathcal{S}_{j,k,{\ell}}|\cdot n^{\alpha})=O(n^{3/2+2\alpha+(\omega-1)/4+\eps'})$ with constant probability. By setting $\alpha=(3-\omega)/8-(\eps+\eps')/2$, this takes time $O(n^{2-\eps})$, as desired. It remains to show how to efficiently list all the $4$-cycles in all the slices in time that is linear in their number: 

\begin{lemma}\label{lem:listfourcycles}
We can enumerate all 4-cycles in any of the slices $(A_j,B_k,C_{\ell})$, for all $j,k,{\ell}\in [n^{1/2-\alpha}]$,
in total time $O(n^{3/2+\alpha} + c)$ 
where $c$ is the output size, that is, $c$ is the total number of such 4-cycles.
\begin{proof}

Observe that for any slice, any $4$-cycle uses exactly two nodes from one of the sides of the slice. In the following, we show how to list all the $4$-cycles that use two nodes from $A_j$, for all the slices $(A_j,B_k,C_{\ell})$. Listing all the $4$-cycles that use two nodes from $B_k$ or two nodes from $C_{\ell}$ is symmetric.  

We start with the following useful notations. For each slice $(A_j,B_k,C_{\ell})$, we think about $A_j$ as being the center of the slice, $B_k$ being the left side of the slice, and $C_{\ell}$ being the right side of the slice. For a set $A_j$ and a pair $\{a,a'\}\subseteq A_j$, let $L^{a,a'}_{k}$ be the set of nodes $b\in B_k$ for which $\{a,b,a'\}$ is a two-edge path. Similarly, let $R^{a,a'}_{\ell}$ be the set of nodes $c\in C_{\ell}$ for which $\{a,c,a'\}$ is a 2-path. Furthermore, for a set $A_j$ and a pair $\{a,a'\}\subseteq A_j$, let $I^{a,a'}_L$ be the set of coordinates $k\in [n^{1/2-\alpha}]$ for which $L^{a,a'}_k$ is not empty. Similarly, $I^{a,a'}_R$ is the set of coordinates ${\ell}\in [n^{1/2-\alpha}]$ for which $R^{a,a'}_{\ell}$ is not empty. Finally, for every pair of sets $A_j,B_k$, let $P^L_{j,k}$ be the set of pairs $\{a,a'\}\subseteq A_j$ for which $|L^{a,a'}_k|\geq 2$. Similarly, $P^R_{j,\ell}$ is the set of pairs $\{a,a'\}\subseteq A_j$ for which $|R^{a,a'}_{\ell}|\geq 2$. 

Our algorithm has a preprocessing step that computes all the sets $I^{a,a'}_L, I^{a,a'}_R$, all the nonempty sets $L^{a,a'}_k,R^{a,a'}_{\ell}$, and all the sets $P^L_{j,k}, P^R_{j,\ell}$ (for every $j,k,{\ell}\in [n^{1/2-\alpha}]$ and every pair $\{a,a'\}\in A_j$). 
Then we show that given the sets $I^{a,a'}_L, I^{a,a'}_R,L^{a,a'}_k,R^{a,a'}_{\ell}$ we can list all the 4-cycles that use two nodes from $A_j$ and one node from each of $B_k$ and $C_{\ell}$, and given the sets $P^L_{j,k},P^R_{j,\ell},L^{a,a'}_k,R^{a,a'}_{\ell}$ we can list all 4-cycles between every pair $A_j,B_k$, and every pair $A_j,C_{\ell}$. The details follow.

\paragraph{Preprocessing step:} 
Recall that we denote by $N(u)$ the set of neighbors of a node $u$ in the original graph. For each pair $A_j,B_k$, we go over all the nodes $b\in B_k$, and for each such node, we go over all the pairs $\{a,a'\}\subseteq N(b)\cap A_j$, and we add $k$ to $I^{a,a'}_L$ and $b$ to $L^{a,a'}_k$. If $|L^{a,a'}_k|\geq 2$, then we also add $\{a,a'\}$ to  $P^L_{j,k}$. Since the size of $B_k$ is $O(n^{1/2+\alpha})$, and the maximum degree of a node $b\in B_k$ in $A_j$ is $O(n^{\alpha})$, for a pair $A_j,B_k$, this takes time $O(n^{1/2+\alpha}\cdot n^{2\alpha})=O(n^{1/2+3\alpha})$. Hence, in total, for all pairs $A_j,B_k$, this takes time $O(n^{1-2\alpha}\cdot n^{1/2+3\alpha})=O(n^{3/2+\alpha})$. 
The sets $I^{a,a'}_R$, the nonempty sets $R^{a,a'}_{\ell}$, and the sets $P^R_{j,\ell}$ are computed symmetrically, for all $j,\ell$, and $\{a,a'\}\subseteq A_j$. 

\paragraph{Listing all 4-cycles between all pairs $A_j, B_k$ and all pairs $A_j,C_{\ell}$:} We show how to list all 4-cycles between all pairs $A_j,B_k$. Listing all the 4-cycles between all pairs $A_j,C_{\ell}$ is symmetric. Observe that the total number of 4-cycles between all pairs $A_j,B_k$ is

\begin{align*}
\sum_{j,k\in [n^{1/2-\alpha}]} \sum_{\{a,a'\}\in P^L_{j,k}} {|L^{a,a'}_j|\choose 2}
\end{align*}

 To list them, we go over all pairs $A_j,B_k$, and for each such pair we list all tuples $(a,b,a',b')$, where $\{a,a'\}\in P^L_{j,k}$, and $\{b,b'\}\subseteq L^{a,a'}_k$. Since  all the sets $P^L_{j,k}$ and $L^{a,a'}_k$ were already computed in the preprocessing step, this takes an amount of time which is linear in the number of 4-cycles. We list all the 4-cycles between all pairs $A_j,C_{\ell}$ in a similar way. 

\paragraph{Listing all 4-cycles that use two nodes from $A_j$ and one node from each of $B_k,C_{\ell}$, for every $j,k,\ell$:} Observe that the number of such 4-cycles is

\begin{align}\label{eq:4-cycles-tri}
    \sum_{j\in [n^{1/2-\alpha}]}\sum_{\{a,a'\}\subseteq A_j}\sum_{(k,\ell)\in I^{a,a'}_L\times I^{a,a'}_R}|L^{a,a'}_k|\cdot |R^{a,a'}_\ell|
\end{align}

 Our goal is to list all these 4-cycles in an amount of time that is linear in their number. For this, we go over all sets $A_j$, and for each such set, we go over all pairs $\{a,a'\}\subseteq A_j$, and for each such pair, we go over all pairs $(k,\ell)\in I^{a,a'}_L\times I^{a,a'}_R$, and list all the tuples $(a,b,a',c)$, where $b\in L^{a,a'}_k$ and $c\in R^{a,a'}_{\ell}$. The amount of time for this step is proportional to

\begin{align*}
    \sum_{j\in [n^{1/2-\alpha}]}\sum_{\{a,a'\}\subseteq A_j} \Bigg(1+\sum_{(k,\ell)\in I^{a,a'}_L\times I^{a,a'}_R}|L^{a,a'}_k|\cdot |R^{a,a'}_\ell| \Bigg).
\end{align*}

This is because for the pairs $\{a,a'\}$ that don't contribute any 4-cycle to the sum (\ref{eq:4-cycles-tri}) we spend constant time. For the other pairs, the amount of time we spend is proportional to the number of 4-cycles they participate in. 
Note that the summand 1 contributes $O(n^{1/2-\alpha} \cdot |A_j|^2) = O(n^{1/2-\alpha} \cdot (n^{1/2+\alpha})^2) = O(n^{3/2+\alpha})$ to the running time. The other summand is simply the total number of 4-cycles as in (\ref{eq:4-cycles-tri}). We thus obtain total time $O(n^{3/2+\alpha} + c)$, as desired.
\end{proof}
\end{lemma}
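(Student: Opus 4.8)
The plan is to exploit the rigid structure of $4$-cycles in a tripartite graph. In any slice $(A_j,B_k,C_\ell)$, a $4$-cycle $v_1v_2v_3v_4$ has at least one \emph{monochromatic} diagonal pair, i.e. one of $\{v_1,v_3\}$, $\{v_2,v_4\}$ lying entirely in a single part: otherwise $v_1,v_3$ sit in distinct parts and $v_2,v_4$, each adjacent to both, would be forced into the unique remaining part, contradicting the assumption. Call a part containing a monochromatic diagonal a \emph{center} of the cycle; a cycle living in only two parts has two centers, so we fix a tie-breaking rule to emit each cycle once. It then suffices to enumerate, for each of the three choices of center — say $A$, the others being symmetric — every $4$-cycle whose monochromatic diagonal $\{a,a'\}$ lies in some $A_j$. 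Such a cycle is exactly a pair $\{a,a'\}\subseteq A_j$ together with two distinct common neighbors of $a$ and $a'$, of one of three types: both in a single $B_k$, both in a single $C_\ell$, or one in some $B_k$ and one in some $C_\ell$.

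First I would run a preprocessing pass computing, for all $j,k,\ell$ and all pairs $\{a,a'\}\subseteq A_j$, the nonempty common-neighbor sets $L^{a,a'}_k = N(a)\cap N(a')\cap B_k$ and $R^{a,a'}_\ell = N(a)\cap N(a')\cap C_\ell$, the index sets $I^{a,a'}_L,I^{a,a'}_R$ of coordinates where these are nonempty, and the pair-lists $P^L_{j,k}=\{\{a,a'\}:|L^{a,a'}_k|\ge 2\}$ and $P^R_{j,\ell}$. The key to the running time is to build these ``from the $B$/$C$ side'': for each of the $n^{1-2\alpha}$ pairs $(A_j,B_k)$, loop over the $O(n^{1/2+\alpha})$ nodes $b\in B_k$ and over the $O(n^{2\alpha})$ pairs $\{a,a'\}\subseteq N(b)\cap A_j$ (using the high-probability degree bound $O(n^\alpha)$ inside a slice), inserting $b$ into $L^{a,a'}_k$, $k$ into $I^{a,a'}_L$, and $\{a,a'\}$ into $P^L_{j,k}$ when the size reaches $2$. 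This costs $O(n^{1-2\alpha}\cdot n^{1/2+\alpha}\cdot n^{2\alpha}) = O(n^{3/2+\alpha})$, and symmetrically for the $C$ side; everything is stored in adjacency-list style so later lookups are $O(1)$ per element.

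Second, the enumeration itself is then output-sensitive. For the $2{+}2$ cycles centered at $A$: loop over $(j,k)$ and, for each $\{a,a'\}\in P^L_{j,k}$, emit all $\binom{|L^{a,a'}_k|}{2}$ cycles; every pair examined produces at least one $4$-cycle, and scanning empty $P^L_{j,k}$ costs only $O(n^{1-2\alpha})$. For the $2{+}1{+}1$ cycles centered at $A$: loop over $j$, over $\{a,a'\}\subseteq A_j$, and over $(k,\ell)\in I^{a,a'}_L\times I^{a,a'}_R$, emitting the $|L^{a,a'}_k|\cdot|R^{a,a'}_\ell|$ cycles for each such pair of coordinates; since both factors are at least $1$, each coordinate pair examined yields $\ge 1$ output, and the only non-output work is an $O(1)$ ``touch'' per pair $\{a,a'\}$, summing to $\sum_j\binom{|A_j|}{2} = O(n^{1/2-\alpha}\cdot (n^{1/2+\alpha})^2) = O(n^{3/2+\alpha})$ by the high-probability bound $|A_j|=O(n^{1/2+\alpha})$. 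Adding the three center choices and the preprocessing cost gives the claimed $O(n^{3/2+\alpha}+c)$.

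The step I expect to be the crux is precisely arranging the bookkeeping so the enumeration is genuinely output-sensitive: a naive iteration over all triples $(j,k,\ell)$, or over the full product of index ranges for a pair $\{a,a'\}$ that supports no $4$-cycle, would reintroduce a near-quadratic term. The fix — keeping $L^{a,a'}_k$, $R^{a,a'}_\ell$ and their index sets \emph{only} when nonempty, so every coordinate pair ever inspected contributes a cycle, and charging the unavoidable per-pair $O(1)$ against the $O(n^{3/2+\alpha})$ bound on $\sum_j\binom{|A_j|}{2}$ — is exactly what the high-probability size and degree guarantees on the slices (from Theorem~\ref{thm:main1}) make affordable; the rest is routine accounting.
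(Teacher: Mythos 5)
Your proposal is correct and follows essentially the same route as the paper's proof: the same case split into $(2{+}2)$ and $(2{+}1{+}1)$ cycles via a monochromatic diagonal, the same sets $L^{a,a'}_k, R^{a,a'}_\ell, I^{a,a'}_L, I^{a,a'}_R, P^L_{j,k}, P^R_{j,\ell}$ built from the $B$/$C$ side in $O(n^{3/2+\alpha})$ time, and the same output-sensitive accounting charging the per-pair $O(1)$ overhead to $\sum_j \binom{|A_j|}{2} = O(n^{3/2+\alpha})$. Your explicit tie-breaking for cycles with two monochromatic diagonals is a small added care that the paper leaves implicit but changes nothing substantive.
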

This finishes the proof of Theorem~\ref{thm:main2}.
\end{proof}

\subsection{Consequences of Theorem~\ref{thm:main2}}\label{subsec:FourCyclesConsequences}

We start with a reduction from triangle detection to $4$-cycle detection.

\begin{theorem}\label{lem:TimeFourDet}
For every $\delta>0$ there is a $\delta'>0$ such that if there is an $O(m^{1+\frac{3-\omega}{2(5-\omega)}-\delta})$-time algorithm for 4-cycle detection, then there is an $O(n^{2-\delta'})$-time algorithm for triangle detection in $n$-node graphs with maximum degree at most $\sqrt{n}$. 
\begin{proof}
First, we run the subquadratic-time algorithm from Theorem~\ref{thm:main2} with an arbitrarily small constant $\eps>0$ and $\alpha<(3-\omega)/8-\eps/2$ to be chosen later. Since the algorithm already checked for each edge in $E'$ whether it participates in a triangle, and since each triangle either uses an edge from $E'$ or is in one of the $G_i$'s, it remains to solve triangle detection in each $G_i$.

For this, we add a dummy node $bc$ on each edge $\{b,c\}\in B\times C$, which converts any triangle $\{a,b,c\}$ to a $4$-cycle $\{a,b,bc,c\}$. Furthermore, since none of the $G_i$'s had a $4$-cycle before adding the dummy nodes, the existence of a $4$-cycle in $G_i$ implies the existence of a triangle. Therefore, to solve triangle detection, it suffices to run a $4$-cycle detection algorithm in all the obtained $G_i$'s. Let $T(m)=m^x$ be the time complexity for $4$-cycle detection in $m$-edge graphs. Since each $G_i$ has $O(n^{1/2+2\alpha})$ edges with high probability, the total running time for all the $G_i$'s is  $$O(n^{3/2-3\alpha}\cdot(n^{1/2+2\alpha})^x)=O(n^{\frac{3}{2}-3\alpha+x(1/2+2\alpha)})$$

For $x<1+\alpha/(1/2+2\alpha)$, this is subquadratic. Hence, by setting $\alpha=(3-\omega)/8-\eps'$, for some $\eps'>\eps/2$, the running time from Theorem~\ref{thm:main1} is subquadratic, and

\begin{align*}
    1+\alpha/(1/2+2\alpha)&=1+\frac{\frac{3-\omega}{8}-\eps'}{1/2+2(\frac{3-\omega}{8}-\eps')}\\
    &>1+\frac{3-\omega}{2(5-\omega)}-\frac{8\eps'}{2(5-\omega)}
\end{align*}

By setting $x=1+\frac{3-\omega}{2(5-\omega)}-\delta$, for $\delta>8\eps'$, the total running time for $4$-cycle detection in all the $G_i$'s is subquadratic, as desired.
\end{proof}
\end{theorem}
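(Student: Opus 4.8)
The plan is to invoke Theorem~\ref{thm:main2} as a black box. Given a worst-case triangle-detection instance $G$ (tripartite, maximum degree $\sqrt n$), it produces in time $O(n^{2-\eps})$ a set $E'$ of edges each already certified to lie in a triangle of $G$, together with $s = n^{3/2-3\alpha}$ subgraphs $G_1,\dots,G_s$, where each $G_i$ is a slice $(A_j,B_k,C_\ell)$ with $O(n^{1/2+\alpha})$ vertices and maximum degree $O(n^\alpha)$, and with probability $>0.99$ no $G_i$ contains a $4$-cycle. Since every triangle of $G$ either uses an edge of $E'$ or lies entirely inside some $G_i$, it suffices — after checking whether $E'\neq\emptyset$, which already settles the instance if so — to detect a triangle inside each $G_i$ and take the OR of the answers.

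Next I would apply the standard subdivision gadget to each $G_i$: place a fresh dummy vertex $v_{bc}$ on every edge $\{b,c\}$ with $b\in B_k$, $c\in C_\ell$. A triangle $\{a,b,c\}$ of $G_i$ (with $a\in A_j$) becomes the $4$-cycle $a\text{--}b\text{--}v_{bc}\text{--}c\text{--}a$. Conversely, because $G_i$ is tripartite and — crucially — $4$-cycle-free, any $4$-cycle in the subdivided graph must use exactly one dummy vertex: a $4$-cycle avoiding all dummies would be a $4$-cycle of $G_i$, and it cannot use two dummies since distinct dummies are non-adjacent and no two dummies sit on the same edge. Its two non-dummy neighbours are then $b,c$ together with a common neighbour $a\in A_j$, so $\{a,b,c\}$ is a triangle of $G_i$. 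Hence $G_i$ has a triangle iff its subdivision has a $4$-cycle, and we call the hypothesized $4$-cycle detector on each subdivided $G_i$. The resulting algorithm is randomized, inheriting the constant failure probability of Theorem~\ref{thm:main2}; it is correct with probability $>0.99$ (and can be amplified).

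It remains to do the running-time bookkeeping. Each slice has $O(n^{1/2+\alpha}\cdot n^\alpha)=O(n^{1/2+2\alpha})$ edges, and subdivision at most doubles this, so each detector call is on a graph with $m_i = O(n^{1/2+2\alpha})$ edges; with a detector running in time $T(m)=m^x$ the total cost of all $s$ calls is
\[
  s\cdot m_i^{x} \;=\; n^{3/2-3\alpha+x(1/2+2\alpha)},
\]
which is subquadratic exactly when $x < \frac{1/2+3\alpha}{1/2+2\alpha} = 1 + \frac{\alpha}{1/2+2\alpha}$. Theorem~\ref{thm:main2} permits any $\alpha < (3-\omega)/8$, and as $\alpha \to (3-\omega)/8$ this threshold tends to $1 + \frac{(3-\omega)/8}{(5-\omega)/4} = 1 + \frac{3-\omega}{2(5-\omega)}$. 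So, given $x = 1 + \frac{3-\omega}{2(5-\omega)} - \delta$, one picks $\alpha$ slightly below $(3-\omega)/8$ (with $\eps,\eps'$ correspondingly small, respecting $\eps,\eps'\in(0,\tfrac{3-\omega}{4}-2\alpha)$) so that $x < 1+\frac{\alpha}{1/2+2\alpha}$, yielding total time $O(n^{2-\delta'})$ for some $\delta'>0$, after also absorbing the $O(n^{2-\eps})$ decomposition cost.

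I expect no serious obstacle in this reduction itself, since all the difficulty has been front-loaded into Theorem~\ref{thm:main2}. The one place where the full strength of that theorem is essential — rather than merely Theorem~\ref{thm:main1} — is the soundness of the subdivision gadget, which fails the instant a single $G_i$ retains a $4$-cycle not arising from a triangle (this is precisely the \emph{false $4$-cycle} obstruction from the technical overview). The only other point requiring care is that the optimal $\alpha$ is pushed right up against the constant $(3-\omega)/8$ allowed by Theorem~\ref{thm:main2}, so $\eps$ and $\eps'$ must be taken small enough that the constraint window $(0,\tfrac{3-\omega}{4}-2\alpha)$ remains nonempty; this merely shrinks $\delta'$ and does not affect the exponent.
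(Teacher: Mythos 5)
Your proposal is correct and follows essentially the same route as the paper: invoke Theorem~\ref{thm:main2}, subdivide the $B\times C$ edges in each $4$-cycle-free slice so that triangles become the only source of $4$-cycles, run the hypothesized detector on each slice, and balance $n^{3/2-3\alpha+x(1/2+2\alpha)}$ by taking $\alpha$ just below $(3-\omega)/8$. Your explicit check that a $4$-cycle in the subdivided slice uses exactly one dummy vertex is a (welcome) elaboration of a step the paper states more tersely, not a different argument.
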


Corollary~\ref{cor:fourFree} follows immediately from Theorem~\ref{lem:TimeFourDet}. The second and third bullets follow by essentially the same proof as the one provided for Theorem~\ref{lem:TimeFourDet}. Instead of running a $4$-Cycle detection algorithm, we run a triangle detection in $4$-cycle free graphs for the second bullet, and a girth approximation algorithm for the third bullet.

\CorFourFree*



\section{On the Hardness of Triangle}
\label{sec:discussion}

The conditional lower bounds in this paper are based on the $n^{2-o(1)}$ time hardness of two versions of triangle finding in $\sqrt{n}$-degree graphs:
The \emph{all-edge} version of reporting for each of the $n^{1.5}$ edges whether it is in a triangle, and the more basic \emph{detection} version of just deciding if there is any triangle in the graph.
The former is already known to be hard under either the 3SUM or APSP conjectures~\cite{VX20}, two of the most central conjectures in fine-grained complexity~\cite{Vass15,virginiaICM}, and therefore does not need further justification (see also \cite{DurajK0W20} for equivalences to range reporting problems). 
The goal of this section is to discuss the latter assumption.

Abboud and Vassilevska Williams \cite{AV14} introduced the following Triangle Conjecture and used it to prove hardness result for dynamic problems; the conjecture has also been used elsewhere, e.g. in databases \cite{CK19}.

\begin{conjecture}[The Triangle Conjecture \cite{AV14}]
Triangle detection requires $m^{4/3-o(1)}$ time, for some density regime $m=\Omega(n)$. In other words, there exists a constant $1 \le \alpha \le 2$ such that for all $\varepsilon > 0$ there is no algorithm that given a graph with $n$ nodes and $m = \Theta(n^\alpha)$ edges detects whether it contains a triangle in $O(m^{4/3-\varepsilon})$ time.
\end{conjecture}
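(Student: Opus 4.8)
The statement is a conjecture, so there is no proof to reconstruct; the only way to ``prove'' it in the fine-grained sense would be to reduce one of the anchor conjectures (3SUM or APSP) to triangle \emph{detection}, and that is the plan. The starting point is the known chain of reductions from 3SUM to triangle problems in $\sqrt n$-degree graphs: P{\u{a}}tra{\c{s}}cu's hashing argument \cite{Pat10}, refined by Kopelowitz, Pettie, and Porat \cite{KPP16} and then by Vassilevska Williams and Xu \cite{VX20} (the form we use is Theorem~\ref{thm:3sumReduction}), which already yields $n^{2-o(1)}$ hardness --- equivalently $m^{4/3-o(1)}$ at density $m=\Theta(n^{3/2})$, i.e.\ exactly the $\alpha=3/2$ instance of the conjecture --- but only for the \emph{all-edge}/listing version. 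The entire difficulty is upgrading ``report for every edge whether it lies in a triangle'' to ``decide whether \emph{some} triangle exists''.

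The natural attack is a block-splitting / color-coding argument: partition the 3SUM universe (or the APSP label set) into $B=n^{\delta}$ contiguous blocks, enumerate the $O(B^2)$ block-triples that can hit the target, and for each triple build a tripartite graph whose triangles encode precisely the solutions confined to that triple; a detection call on each such graph, OR-ed over all triples, recovers the 3SUM answer. Alongside this one would record the self-reduction observation that the $\alpha=3/2$, max-degree-$\sqrt n$ regime is (one of) the hardest --- it is the density at which the two classical triangle algorithms $O(m^{2\omega/(\omega+1)})$ and $O(n^{\omega})$ coincide when $\omega=2$, both giving $n^{2}$ --- so it suffices to target this regime, which is also the one feeding our Corollary~\ref{cor:fourFree}.

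The main obstacle, and the reason the conjecture remains open, is controlling triangle multiplicities. A worst-case instance produced by the 3SUM reduction may contain up to $n^{2}$ triangles spread roughly uniformly, so block-splitting does not make any single sub-instance triangle-free, or even low-multiplicity, without shrinking it polynomially; and a sub-instance with $m$ edges still costs about $m^{4/3}$ to detect in, which when summed over the many sub-instances overshoots $n^{2}$ and makes the argument circular. This is precisely the ``too many witnesses in a worst-case instance'' phenomenon that the present paper overcomes for short \emph{cycles} via dense-piece removal and random slicing --- but for triangles themselves those tools do not apply, since a single-output decision problem cannot tolerate even one leftover witness, whereas the whole point of dense-piece removal is to filter witnesses in a post-processing phase that a detection algorithm never gets.

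Accordingly, the honest deliverable of this section is not a proof but: (i) the self-reduction pinning the hard regime at $m=\Theta(n^{3/2})$ with maximum degree $\sqrt n$; (ii) a summary of how the conjecture relates to the Boolean Matrix Multiplication, OMv, and $k$-Clique conjectures, and to the known barriers against sub-$m^{4/3}$ algorithms; and (iii) the acknowledgement that Corollary~\ref{cor:fourFree} must take this hardness as a hypothesis in its own right, since --- unlike our all-edge results --- it cannot be based on 3SUM or APSP without first closing the listing-versus-detection gap for Triangle.
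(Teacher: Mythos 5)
You are right that this is a conjecture and has no proof in the paper; the paper states it verbatim from \cite{AV14} and then, in Section~\ref{sec:discussion}, discusses exactly the issues you raise, so your assessment of the state of affairs is essentially correct. Two points of comparison with what the paper actually does. First, your item (i) slightly overclaims: the paper's Observation~\ref{obs:density} shows, via the high-degree/low-degree split you allude to, that the conjecture is equivalent to $n^{2-o(1)}$ hardness at \emph{average} degree $\Theta(\sqrt n)$, but the authors explicitly note that they do \emph{not} know how to reduce the average-degree case to the \emph{maximum}-degree-$\sqrt n$ case that Corollary~\ref{cor:fourFree} needs, so the hypothesis used there is formally stronger than (not pinned down by) the Triangle Conjecture. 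Second, you omit the paper's actual positive contribution on this front: rather than attempting the 3SUM/APSP block-splitting route (whose failure mode you correctly diagnose --- detection cannot tolerate leftover witnesses the way all-edge/listing can), the paper gives a simple weight-splitting reduction from Zero-Triangle with $|A|=n$, $|B|=|C|=W=\sqrt n$ to unweighted Triangle detection in maximum-degree-$\sqrt n$ graphs, so the hypothesis of Corollary~\ref{cor:fourFree} is supported by the Strong Zero-Triangle Conjecture even though basing it on 3SUM or APSP remains open. Your proposed deliverables (ii) and (iii) match the paper's framing, but the relations it actually records are to Zero-Triangle and to the combinatorial BMM-style equivalence of \cite{williams2018subcubic}, not to OMv or $k$-Clique.
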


They also considered a weaker form of the conjecture where only some $\Omega(m^{1+\delta})$ lower bound is assumed, and a stronger form with an $m^{\frac{2\omega}{\omega+1}-o(1)}$ lower bound even when $\omega>2$.
However, the above $m^{4/3-o(1)}$ is the more natural and popular hypothesis and it continues to hold even if $\omega=2$.

While the conjecture does not specify the density for which $m^{4/3-o(1)}$ time is required, by a simple high-degree low-degree analysis, one can show that the hardest regime is $m=n^{1.5}$:

\begin{observation}
\label{obs:density}
The Triangle Conjecture is equivalent to the hypothesis that Triangle detection requires $n^{2-o(1)}$ time in graphs with average degree $\Theta(\sqrt{n})$. 
\end{observation}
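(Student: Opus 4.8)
The statement has two directions, and the direction ``the balanced regime is hard $\Rightarrow$ Triangle Conjecture'' is essentially definitional: a graph with $n$ vertices and average degree $\Theta(\sqrt n)$ has $m = \Theta(n^{3/2})$ edges, so $m^{4/3-\eps} = \Theta(n^{2-(3/2)\eps})$; hence an $n^{2-o(1)}$ lower bound in this regime is exactly an $m^{4/3-o(1)}$ lower bound at density $m = \Theta(n^{3/2})$, witnessing the Triangle Conjecture with $\alpha = 3/2$. So the content is in the converse, which I would prove by contraposition: from an $O(N^{2-\delta})$-time algorithm for triangle detection in $N$-vertex graphs of average degree $\Theta(\sqrt N)$, I would construct, for \emph{every} edge density, an $O(m^{4/3-\delta'})$-time algorithm for triangle detection on $m$-edge graphs, contradicting the conjecture at whatever density $\Theta(n^\alpha)$ it asserts hardness for.

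The engine is the standard high-degree/low-degree split, run with threshold $\Delta := m^{1/3-\eta}$ for a small $\eta>0$ to be fixed in terms of $\delta$. Every triangle of $G$ either contains a vertex of degree at most $\Delta$, or lies entirely inside the subgraph $H$ induced on the vertices of degree exceeding $\Delta$. Triangles of the first kind are detected in time $O(m\Delta) = O(m^{4/3-\eta})$: for each low-degree vertex $v$ scan all pairs of its neighbours and test adjacency via a hash table, which costs $O(\sum_v \deg(v)\cdot\Delta)=O(m\Delta)$ overall. It then remains to detect a triangle in $H$, which has $n_H < 2m/\Delta = 2m^{2/3+\eta}$ vertices and $m_H \le m$ edges.

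The one subtlety is that $H$ need not lie in the balanced-density regime handled by the assumed algorithm, so I would pad it. Set $N := 2\max(n_H,\,\lceil m_H^{2/3}\rceil)$ and let $H'$ be $H$ together with a vertex-disjoint triangle-free bipartite graph on $N-n_H$ fresh vertices carrying $\Theta((N-n_H)^{3/2})$ edges. This changes neither the vertices of $H$ nor whether a triangle exists, and $H'$ has $N$ vertices and $\Theta(N^{3/2})$ edges, i.e.\ average degree $\Theta(\sqrt N)$: if $H$ is no denser than the balanced regime then $m_H^{2/3}\le n_H$, so $N=\Theta(n_H)$ and the padding supplies the missing edges; if $H$ is denser then $n_H\le m_H^{2/3}$, so $N=\Theta(m_H^{2/3})\ge 2n_H$ and already $m_H=O(N^{3/2})$. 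In both cases $N = O(m^{2/3+\eta})$, so running the assumed algorithm on $H'$ costs $O(N^{2-\delta}) = O(m^{(2/3+\eta)(2-\delta)})$, which is $O(m^{4/3-\delta'})$ once $\eta<\delta/3$; combined with the $O(m^{4/3-\eta})$ low-degree step this gives total time $O(m^{4/3-\delta''})$ with $\delta''>0$ depending only on $\delta$, uniformly over the density $m$, contradicting the conjecture. The step that needs care is precisely this padding: guaranteeing that the instance passed to the assumed algorithm genuinely has average degree $\Theta(\sqrt N)$ in \emph{both} the ``$H$ too sparse'' and ``$H$ too dense'' cases, without creating or destroying triangles and without inflating $N$ past $m^{2/3+o(1)}$; the remaining degree bookkeeping is routine.
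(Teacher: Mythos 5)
Your proposal is correct and follows essentially the same route as the paper: the trivial direction is read off at density $m=\Theta(n^{3/2})$, and the converse uses the same high-degree/low-degree split with threshold $m^{1/3-\eta}$, pads the high-degree part to average degree $\Theta(\sqrt N)$ with a triangle-free bipartite gadget, and runs the assumed balanced-regime algorithm on $N=O(m^{2/3+\eta})$ vertices. Your only deviation is that you handle the padding slightly more explicitly (choosing $N=2\max(n_H,\lceil m_H^{2/3}\rceil)$ to cover the case where the high-degree subgraph is denser than balanced), which the paper glosses over by taking $N=m^{2/3+\delta}$ as the vertex count outright.
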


\begin{proof}
One direction is trivial: If Triangle detection requires $n^{2-o(1)}$ time in graphs with average degree $\Theta(\sqrt{n})$, then for the density regime $m = \Theta(n^{3/2})$ Triangle detection requires $m^{4/3-o(1)}$ time, so the Triangle Conjecture holds. 

For the other direction, suppose that Triangle in graphs with $N$ nodes and $\Theta(N^{1.5})$ edges can be solved in $O(N^{2-\eps})$ time, for some $\eps>0$.
Given a graph on $n$ nodes and $m$ edges as input to Triangle,
let $H$ be the set of nodes of degree $\geq m^{1/3-\delta}$, and let $L=V \setminus H$ be the nodes of degree at most $m^{1/3-\delta}$.

\begin{itemize}
\item To find a triangle that uses any node from $L$, iterate over all $m$ edges $\{u,v\}$ and if one of the endpoints is in $L$, e.g.~$u$, scan its neighborhood and for each $w \in N(u)$ check if $u,v,w$ is a triangle. This takes $O(m \cdot m^{1/3-\delta})$ time.
\item To find a triangle that only uses nodes from $H$ consider the induced graph on these $N=m/m^{1/3-\delta}=m^{2/3+\delta}$ nodes. This graph has only $m=O(N^{1.5})$ edges.
If the number of edges happens to be $o(N^{1.5})$ we can artificially turn it into $\Theta(N^{1.5})$ by simply adding a bipartite graph on $N$ nodes and $N^{1.5}$ edges (this does not introduce any new triangles).
Then, by assumption, we can find a triangle in this graph in time $O(N^{2-\eps})=O((m^{2/3+\delta})^{2-\eps})=O(m^{4/3 +2\delta-2/3 \eps}) = O(m^{4/3-\delta})$ for $\delta<\eps/10$.
\end{itemize}
In both cases we can solve Triangle detection in time $O(m^{4/3-\delta})$, which refutes the Triangle Conjecture. 
\end{proof}

This does not quite prove an equivalence between our hardness assumption and the Triangle Conjecture because we do not know how to reduce the \emph{average} degree $\le \sqrt{n}$ case to the \emph{maximum} degree $\sqrt{n}$ case.

Indeed, this issue arises also in the all-edge version where we do not know how to reduce the general $m=n^{1.5}$ case to the $\sqrt{n}$-degree case.
There, we side-stepped this discussion by starting from other popular conjectures (3SUM and APSP) rather than from a hardness assumption about All-Edges-Triangle itself.
Can we do the same here? Unfortunately, basing the Triangle Conjecture on other popular conjectures such as 3SUM and APSP is a major open question:

\begin{oq}
Can we prove the Triangle Conjecture under other hardness assumptions such as 3SUM or APSP?
\end{oq}

Ever since P{\u{a}}tra{\c{s}}cu's \cite{Pat10} 3SUM-hardness for the all-edge and listing versions of Triangle, it has been a pressing open question to prove the same for detection.
APSP has been connected to Triangle detection in the work of Vassilevska Williams and Williams \cite{williams2018subcubic} but only in a restricted sense: the two problems are subcubic-equivalent for \emph{combinatorial} algorithms in \emph{dense} graphs.
Extending such results to general algorithms or to sparse graphs is a well-known challenge.

As a side result of independent interest, we make progress towards this goal.
We prove the first conditional lower bound for Triangle detection that is based on the hardness of a problem of a very similar flavor to 3SUM and APSP: the Zero-Triangle problem.
Importantly, this hardness continues to hold under the restriction to $\sqrt{n}$-degree graphs, justifying our belief that this is the hard case for triangles.

\begin{definition}[Zero-Triangle]
Given a tripartite graph $G=(A\times B\times C,E)$ with integral edge weights $w:E \to [-W,+W]$ decide if there is a triangle $(a,b,c)\in A \times B \times C$ with total weight $w(a,b)+w(b,c)+w(a,c)=0$.
\end{definition}

Ignoring subpolynomial improvements, there are only two algorithms for this problem.
The first is a brute force over all triples and its running time is $O(|A|\cdot |B| \cdot |C|)$.
In the symmetric setting where $|A|=|B|=|C|=n/3$ this is $O(n^3)$ and it is optimal under both $3$-SUM and APSP conjectures, as long as $W=\Omega(n^3)$ 
\cite{Pat10,WilliamsW13}. 
The second algorithm is faster when $W$ is small enough: It applies the standard exponentiation trick (encoding $w$ as $2^w$) to reduce summation to multiplication and then uses fast matrix multiplication. In the symmetric setting the running time is $O(W \cdot n^\omega)$ and otherwise it is a complicated expression that depends on the rectangular matrix multiplication exponent.
Assuming $\omega=2$, the upper bound simplifies to $(W \cdot N)^{1+o(1)}$ where $N= (|A|\cdot |C| + |A|\cdot |B| + |B|\cdot |C|)$ is an upper bound on the size of the graph.
It is natural to conjecture that these bounds cannot be broken for Zero-Triangle.

\begin{conjecture}[The Strong Zero-Triangle Conjecture]
Zero-Triangle requires $(\min\{WN,|A|\cdot |B| \cdot |C|\})^{1-o(1)}$ time, for any parameters $W,|A|,|B|,|C|=n^{\Theta(1)}$ and where $N=(|A|\cdot |C| + |A|\cdot |B| + |B|\cdot |C|)$. 
\end{conjecture}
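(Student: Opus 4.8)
Since a statement of this form cannot be proved unconditionally, the plan is to \emph{establish the conjecture conditionally}: show that it follows from the $3$-SUM and APSP conjectures, which already underlie every other lower bound in this paper. Concretely, for each choice of parameters $W,|A|,|B|,|C|=n^{\Theta(1)}$ one wants a reduction that transforms a hard instance of $3$-SUM (or of APSP) of size $s$ into a Zero-Triangle instance with sides of sizes $|A|,|B|,|C|$ and weights in $[-W,+W]$, such that $\min\{WN,|A|\cdot|B|\cdot|C|\}=s^{1-o(1)}$; then any algorithm beating this running time by a subpolynomial factor refutes the source conjecture. The two cases of the outer $\min$ are handled separately.

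The large-weight regime, where $WN\ge|A|\cdot|B|\cdot|C|$ and the target bound is $(|A|\cdot|B|\cdot|C|)^{1-o(1)}$, is essentially classical. For the balanced case $|A|=|B|=|C|=m$, $W=\Theta(m^3)$, it is exactly the known $3$-SUM-hardness (P\u{a}tra\c{s}cu) and APSP-hardness (Williams--Williams) of Zero-Triangle at cubic running time. The unbalanced subcase follows by a standard win--win: partition the three sides of a balanced hard instance into blocks of the target sizes; this produces $\tfrac{m}{|A|}\cdot\tfrac{m}{|B|}\cdot\tfrac{m}{|C|}$ sub-instances, each with weights still in $[-\Theta(m^3),\Theta(m^3)]$, and a $o(|A|\cdot|B|\cdot|C|)$-time algorithm would solve all of them in total $o(m^3)$ time, contradicting the balanced lower bound (one takes $m=\max\{|A|,|B|,|C|\}$, so that $W=\Theta(m^3)\ge|A|\cdot|B|\cdot|C|$ and hence $WN\ge|A|\cdot|B|\cdot|C|$ is indeed the relevant regime). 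No new ideas are needed here.

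The small-weight regime $WN<|A|\cdot|B|\cdot|C|$, where the target is $(WN)^{1-o(1)}$, is the genuinely new and hardest part. Starting from $3$-SUM on $\sim s$ numbers in a range $[-U,U]$ (one may take $U=s^{3}$ after the usual hashing), one splits each number into a ``high part'' and a ``low part'' at a cut chosen so that the number of distinct high parts is $\approx N$ and the low parts lie in a window of width $\approx W$; the high parts index vertices (governing $N$) while the low parts become edge weights (governing $W$), so a zero-sum triple of numbers corresponds to a zero-weight triangle. The delicate points are: (i) carries between the low and high parts, which create spurious zero-weight triangles and must be killed by the usual device of running $O(\log s)$ additively-shifted copies of the split, or by reducing modulo a random prime and derandomizing; (ii) keeping the reduction fine-grained lossless --- neither $W$ nor $N$ may grow by a polynomial factor --- since the conjecture pins down the \emph{exact} product $WN$; and (iii) matching the APSP starting point, which requires adapting the Williams--Williams ``exact triangle'' reduction in a weight- and size-sensitive manner rather than the $3$-SUM hashing. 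I expect (ii) to be the main obstacle: making the exponents line up for \emph{every} split ratio, and in particular matching the rectangular-matrix-multiplication upper bound exactly rather than only in the $\omega\to 2$ limit, may be impossible with current techniques, so the honest outcome is likely a proof of the conjecture in its $\omega=2$-normalized form (together with a slightly weaker ``$(WN)^{1-o(1)}$ up to matrix-multiplication factors'' bound for the true value of $\omega$) --- which is precisely why the full strong form is offered here as a clean conjecture rather than a theorem.
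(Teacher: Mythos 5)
The statement you are trying to prove is stated in the paper as a \emph{conjecture}, and the paper offers no proof of it; on the contrary, it explicitly remarks that the Strong Zero-Triangle Conjecture ``is not known to be implied by the $3$-SUM and APSP conjectures (because the existing reductions change the ratio of weight $W$ to number of nodes $n$).'' Your plan is precisely to derive it from $3$-SUM/APSP, and the crux of your own outline --- item (ii), keeping the reduction lossless in \emph{both} $W$ and $N$ simultaneously so that the product $WN$ is preserved for every split ratio --- is exactly the known obstruction the authors cite. You concede in your final sentence that this step ``may be impossible with current techniques,'' so what you have written is a research program with an acknowledged missing core, not a proof. That is the honest position, and it matches the paper's: this is why the statement is a conjecture.

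Two more concrete issues with the parts you do claim. First, even your ``classical'' large-weight regime is not fully handled: partitioning a balanced hard instance into blocks produces sub-instances whose weights still lie in $[-\Theta(m^3),+\Theta(m^3)]$, so this only establishes the $(|A|\cdot|B|\cdot|C|)^{1-o(1)}$ bound when the target $W$ is $\Omega(m^3)$. The regime $WN\ge |A|\cdot|B|\cdot|C|$ admits much smaller $W$ (e.g.\ $W\approx m$ in the balanced case), and for those parameters the block decomposition does not produce instances within the required weight range --- this is again the $W$-to-$n$ ratio problem. Second, your high-part/low-part split in the small-weight regime would, if it worked cleanly, prove a statement strictly stronger than what is currently known even for the thoroughly studied balanced case; the carry-handling and the exact alignment of $N$ with the number of distinct high parts are not routine, and no amount of additive shifting is known to make the exponents match for arbitrary $(W,|A|,|B|,|C|)$. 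If you want a provable statement in this vicinity, the right target is the paper's actual theorem in that section: a reduction \emph{from} Zero-Triangle (with specific parameters $|A|=n$, $|B|=|C|=\sqrt n$, $W=\sqrt n$) \emph{to} unweighted Triangle detection in $\sqrt n$-degree graphs, which uses the conjecture as a hypothesis rather than attempting to establish it.
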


While this conjecture is not known to be implied by the $3$-SUM and APSP conjectures (because the existing reductions change the ratio of weight $W$ to number of nodes $n$) its plausibility has the same source.
In fact, it is analogous to the stronger version of the APSP conjecture recently studied by Chan, Vassilevska, and Xu \cite{ChanWX21}.
Notably, Zero-Triangle is a problem that is hard due to the weights and the addition operator and \emph{not} due to the graph structure: the input graph may be assumed to be complete.
Thus, we find it surprising that it explains the hardness of our purely structural subgraph detection problems; in particular it gives a \emph{tight} lower bound for Triangle:

\begin{theorem}
If Triangle detection in graphs with maximum degree $\sqrt{n}$ can be solved in $O(n^{2-\eps})$ time, for some $\eps>0$, then Zero-Triangle with $|A|=n,|B|=|C|=\sqrt{n}$ and $W=\sqrt{n}$ can be solved in $O(n^{2-\eps})$ time, and the Strong Zero Triangle Conjecture is false.
\end{theorem}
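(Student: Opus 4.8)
The plan is to reduce Zero-Triangle with these parameters to \emph{unweighted} Triangle detection in a graph $G'$ whose number of vertices is $O(n)$ up to logarithmic factors and whose maximum degree is $O(\sqrt{n}\log n)$. Padding $G'$ with isolated vertices so that it has $N'=\Theta(n\log^2 n)$ vertices makes its maximum degree at most $\sqrt{N'}$, so the hypothesized algorithm (triangle detection in $V$-vertex graphs of maximum degree $\le\sqrt V$ in time $O(V^{2-\eps})$) runs on $G'$ in time $O((N')^{2-\eps})=O(n^{2-\eps'})$ for every $\eps'<\eps$. Since $N:=|A|\cdot|C|+|A|\cdot|B|+|B|\cdot|C|=\Theta(n^{3/2})$ and $W=\sqrt{n}$, we have $\min\{WN,\,|A|\cdot|B|\cdot|C|\}=\Theta(n^2)$, so an $O(n^{2-\eps'})$-time algorithm for this Zero-Triangle instance refutes the Strong Zero-Triangle Conjecture.

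The graph $G'$ is the standard ``running partial sum'' gadget. It is tripartite with parts $P_1=A$, $P_2=B\times R_2$, $P_3=C\times R_3$, where $R_2,R_3$ are integer intervals of size $\Theta(\sqrt n)$ chosen large enough to contain the values below, and its edges are: $\{a,(b,x)\}$ whenever $w(a,b)=x$; $\{(b,x),(c,y)\}$ whenever $y=x+w(b,c)$; and $\{a,(c,y)\}$ whenever $y=-w(a,c)$. Any triangle $\{a,(b,x),(c,y)\}$ of $G'$ forces $x=w(a,b)$, then $y=x+w(b,c)$, then $y=-w(a,c)$, hence $w(a,b)+w(b,c)+w(a,c)=0$; conversely a zero-triangle $(a,b,c)$ yields the triangle $\{a,\,(b,w(a,b)),\,(c,w(a,b)+w(b,c))\}$. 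With $|B|=|C|=\sqrt n$ and $|R_2|,|R_3|=\Theta(\sqrt n)$, all three parts have $\Theta(n)$ vertices, every $a\in P_1$ has degree exactly $2\sqrt n$, every $(b,x)$ has degree exactly $|C|=\sqrt n$ towards $P_3$, and every $(c,y)$ has degree exactly $|B|=\sqrt n$ towards $P_2$.

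The obstacle is that the degree of a vertex $(b,x)\in P_2$ towards $P_1$ equals $|\{a:w(a,b)=x\}|$, which for a worst-case instance can be as large as $n\gg\sqrt n$ (and symmetrically for $(c,y)\in P_3$), so the naive gadget violates the $\sqrt n$-degree restriction. The fix I would use is a random ``shear'' of the weights that is invisible to zero-triangles: draw $s_a\in\{0,\dots,\lceil\sqrt n\rceil-1\}$ independently and uniformly for each $a\in A$, and replace $w(a,b)$ by $w(a,b)+s_a$ and $w(a,c)$ by $w(a,c)-s_a$, leaving $w(b,c)$ unchanged. This preserves $w(a,b)+w(b,c)+w(a,c)$ for every triple, hence preserves the set of zero-triangles exactly, and it enlarges each weight range only by an additive $O(\sqrt n)$, so one can still take $|R_2|,|R_3|=\Theta(\sqrt n)$. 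After the shear, the degree of $(b,x)$ towards $P_1$ is $|\{a:s_a=x-w(a,b)\}|$, a sum of $n$ independent indicators each of probability at most $1/\lceil\sqrt n\rceil$, so its expectation is at most $\sqrt n$; a Chernoff bound together with a union bound over the $\Theta(n)$ vertices of $P_2\cup P_3$ shows that with high probability every such degree is $O(\sqrt n\log n)$. Constructing the sheared weights and $G'$ takes $\tilde O(n^{3/2})$ time, which is dominated by the triangle-detection call, so the whole reduction runs in $O(n^{2-\eps'})$ time.

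I expect the degree control to be the crux. One needs a structure-preserving randomized perturbation that simultaneously (i) leaves the set of zero-triangles intact, (ii) keeps the weight range --- hence the vertex count of $G'$ --- near-linear, and (iii) thins every preimage set $\{a:w(a,b)=x\}$ down to size $\tilde O(\sqrt n)$, and the per-$A$-vertex uniform shear above does all three at once. Everything else is routine: the $P_1$ vertices and the $P_2$-$P_3$ adjacencies have degree $O(\sqrt n)$ deterministically, so once the concentration bound is checked there is nothing further to verify, and the conclusion that the Strong Zero-Triangle Conjecture is false is immediate from the parameter count in the first paragraph.
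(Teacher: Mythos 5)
Your reduction is correct, and its core gadget is exactly the paper's: index the $B$- and $C$-vertices by partial-sum values (the paper's copies $b_x$, $c_{-y}$ with a shifted matching on the $B\times C$ edges are precisely your $P_2=B\times R_2$, $P_3=C\times R_3$ adjacencies), so that triangles in the unweighted graph correspond bijectively to zero-triangles. Where you go beyond the paper is the degree issue: the paper's one-paragraph proof never verifies the maximum-degree-$\sqrt{n}$ requirement, and indeed in its construction the copy $b_x$ has degree $|\{a : w(a,b)=x\}|$ toward $A$, which can be as large as $n$ for skewed weight distributions. Your random shear $w(a,b)\mapsto w(a,b)+s_a$, $w(a,c)\mapsto w(a,c)-s_a$ preserves all triangle sums, keeps the value range at $\Theta(\sqrt n)$, and by Chernoff plus a union bound thins every such preimage to $O(\sqrt n\log n)$ (in fact $O(\sqrt n)$ with high probability, since the deviation bound is exponentially small in $\sqrt n$, so constant-factor padding already suffices and you recover $O(n^{2-\eps})$ exactly rather than $O(n^{2-\eps'})$); note also that the shear never affects correctness, only the degree bound, so a failed sample can simply be redrawn. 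So this is essentially the paper's proof, made rigorous on the one point the paper glosses over; the remaining bookkeeping (vertex count $\Theta(n)$, construction time $\tilde O(n^{3/2})$, and $\min\{WN,|A||B||C|\}=\Theta(n^2)$ so the conjecture is refuted) matches the paper and is handled correctly.
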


\begin{proof}
Given an instance of Zero-Triangle with $|A|=n,|B|=|C|=\sqrt{n}$ and $W=\sqrt{n}$ we construct an unweighted graph as follows.
Each node in $u\in B \cup C$ is copied $6W+1$ times $u_{-3W},\ldots,u_{3W}$ where $u_i$ represents both the node $u$ and the integer value $i$.
A node $a\in A$ has a single copy in the new graph.

An edge of weight $x$ from $a\in A$ to $b \in B$ becomes an edge from $a$ to $b_x$.
An edge of weight $y$ from $a\in A$ to $c \in C$ becomes an edge from $a$ to $c_{-y}$.
On the other hand, an edge of weight $z$ from $b \in B$ to $c \in C$ becomes a matching between the $b_i$ and $c_j$ nodes such that there is an edge between $b_i$ and $c_{i+z}$ for all $i \in [-2W,2W]$.

A zero-triangle $(a,b,c)$ with weights $w(a,b)=x, w(a,c)=y, w(b,c)=z$ becomes a triangle $a,b_x,c_{-y}$. The edges $(a,b_x)$ and $(a,c_{-y})$ exist by definition, and the third edge exists because $-y = x+z$.
By a reverse argument, any triangle in the new graph corresponds to a zero-triangle in the original graph.
\end{proof}

The reduction is rather simple but we find the statement quite interesting.
First, it bases the Triangle Conjecture (and our hardness for $4$-Cycle) on a hardness assumption of a very different nature.
Second, it makes a substantial step towards establishing the Triangle Conjecture  under the more central $3$-SUM or APSP Conjectures.
And third, assuming $\omega=2$, it pinpoints a challenge that one must resolve before making any further progress on Triangle, the lower bound is completely tight for all density regimes (due to Observation~\ref{obs:density}).

\section{Open Questions}

In this paper we have introduced a \emph{short cycle removal} technique and used it to obtain the first conditional lower bounds for $4$-Cycle detection and to demonstrate the optimality of the $O(m^{1+{1/k}})$-time vs. $O(k)$-approximation trade-off for various distance computation problems.
Some of the hardness results are based on the conjectured hardness of the All-Edge Triangle problem (and therefore implied by the 3SUM/APSP Conjectures) and some are based on the hardness of triangle detection.
Let us conclude by highlighting some open questions.

\paragraph{Tight bounds.} Does breaking the longstanding $O(\min(n^2, m^{4/3}))$ upper bound for $4$-Cycle imply a new algorithm for triangle detection? 
Tightening the constants in the exponents of the lower bound of each of the problems we have considered is an interesting open question.
One way to reduce the gaps is by resolving the following conjecture about the relationship between $4$-cycles and dense pieces.

\begin{conjecture}
For all $\eps>0$ there is a $\delta>0$ such that any graph with maximum degree at most $\sqrt{n}$ that has $\geq n^{2+\eps}$ $4$-cycles must have a subgraph on $k$ nodes and $\geq k^{1.5+\delta}$ edges.
\end{conjecture}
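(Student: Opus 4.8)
The natural first move is to recast the hypothesis in terms of codegrees: up to constant factors $\#C_4(G)=\sum_{\{u,v\}}\binom{d(u,v)}{2}$ with $d(u,v):=|N(u)\cap N(v)|$, so we are assuming $\sum_{\{u,v\}}\binom{d(u,v)}{2}=\Omega(n^{2+\eps})$, while the maximum-degree bound gives the a~priori constraint $\sum_{\{u,v\}}d(u,v)=\tfrac12\sum_w d(w)^2\le\tfrac12 n^2$. A dyadic pigeonhole over the codegree pins down a single scale $t$ for which the set $\mathcal H=\{\{u,v\}:d(u,v)\ge t\}$ of \emph{heavy pairs} satisfies $|\mathcal H|\cdot t^2=\tilde\Omega(n^{2+\eps})$; combined with the first-moment bound $|\mathcal H|\le n^2/(2t)$ this forces $t\ge n^{\Omega(\eps)}$ together with $|\mathcal H|=\tilde\Omega(n^{1+\eps})$, i.e.\ polynomially more than $n$ pairs, each certified by $n^{\Omega(\eps)}$ common neighbours. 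Note that Lemma~\ref{lem:ManyDensePairs}, applied with $k=4$ and $\gamma=\eps$, already resolves the conjecture whenever $\eps\ge\tfrac14$: it produces a piece on $\le 2\sqrt n$ vertices with $\ge n^{1/2+\eps}/2$ edges, and $n^{1/2+\eps}\ge(2\sqrt n)^{3/2+\delta}$ once $\eps\ge\tfrac14+\tfrac{\delta}{2}$. So the entire content of the conjecture lives in the window $\eps<\tfrac14$, where the best one-shot piece sits on $\Theta(\sqrt n)$ vertices but carries only $o((\sqrt n)^{3/2})$ edges --- it is \emph{below} the target density, and one genuinely needs to find a strictly smaller, proportionally denser subgraph.

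The plan is a density-increment argument: pass repeatedly to a subgraph on which the number of $4$-cycles is ``super-random'' relative to its own vertex count, and argue that the density exponent $m/|V|^{3/2}$ strictly increases in each round while the vertex count stays polynomially bounded away from $n$. The engine I would try first is to iterate the paper's dense-piece extraction \emph{inside} the pieces it produces: a $\gamma$-dense piece lies on $\Theta(\sqrt n)$ vertices with $\Omega(n^{1/2+\gamma})$ edges, so after randomly sub-sampling its vertices to restore a maximum-degree-to-size ratio of the form required by Lemma~\ref{lem:ManyDensePairs} (while retaining in expectation an appropriate fraction of its edges and of its $4$-cycles), one re-runs the extraction on the smaller host and hopes to boost $\gamma$ by a constant over $O(1/\eps)$ steps. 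A complementary engine is dependent random choice: sample a handful of pivot vertices $w_1,\dots,w_r$, restrict to $U=\bigcap_i N(w_i)$ (which shrinks the ground set to $\mathbb{E}|U|\le n^{1-r/2}$ while retaining in expectation at least a $(t/n)^r$ fraction of the heavy pairs), adjoin one layer of common neighbours, re-weight, and iterate. In both cases the objective is a closed recursion whose fixed point is a subgraph on $k\le n^{1-\Omega(1)}$ vertices with $k^{3/2+\Omega(\eps)}$ edges.

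The main obstacle --- shared by both engines --- is the \emph{near-random regime}. A graph with only $n^{2+\eps}$ four-cycles for small $\eps$ may spread its ``excess'' codegree so evenly that no single pivot set (and no single dense piece) concentrates it; then, after the restriction, the surviving graph no longer has a super-random $4$-cycle count and the recursion stalls after one step --- precisely why the one-shot argument of Lemma~\ref{lem:ManyDensePairs} does not already suffice here. Pushing past this appears to require a genuine supersaturation/stability theorem for $C_4$ at \emph{polynomial} distance above the K\H{o}v\'{a}ri--S\'{o}s--Tur\'{a}n threshold $k^{3/2}$: the assertion that $n^{2+\eps}$ copies of $C_4$ cannot be hosted by a graph all of whose subgraphs are within a subpolynomial factor of $C_4$-extremal. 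No such theorem is available off the shelf, and it is not even obvious the conjecture is true --- so I would run the two engines above in parallel with an explicit search for a near-pseudorandom construction (for instance a carefully controlled random lift of an algebraically defined host graph) achieving $n^{2+\eps}$ four-cycles with no $k^{3/2+\delta}$-dense subgraph. Settling whether such a construction exists is, I expect, the crux; a negative answer would most naturally arrive packaged as exactly the stability statement the density increment needs.
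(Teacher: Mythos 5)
You have not proved the statement, but it is important to be clear about its status: in the paper this is an open \emph{conjecture} (stated in the Open Questions section as a target whose constructive resolution would yield an $m^{4/3-o(1)}$ lower bound for $4$-cycle enumeration), and the paper offers no proof of it. So there is no argument of ours to compare yours against; the only fair evaluation is of your partial reasoning and of whether you claim more than you establish. You do not: your proposal is explicitly a research plan, and its final paragraph concedes that the crux --- a supersaturation/stability statement for $C_4$ polynomially above the K\H{o}v\'ari--S\'os--Tur\'an threshold, or alternatively a pseudorandom counterexample --- is unresolved. That concession is exactly where the gap lies: neither the density-increment engine (re-running the dense-piece extraction inside a piece after subsampling) nor dependent random choice is carried to a closed recursion, and you yourself identify the near-random regime in which both stall after one step. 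As a proof attempt this is therefore incomplete, though the incompleteness faithfully reflects the problem's actual status.

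Your preliminary reductions are correct and match how we think about the problem. The codegree reformulation $\#C_4 = \Theta\bigl(\sum_{\{u,v\}}\binom{d(u,v)}{2}\bigr)$, the first-moment bound $\sum_{\{u,v\}} d(u,v) \le \tfrac12 n^2$ under the degree cap, and the dyadic pigeonhole forcing $t \ge n^{\Omega(\eps)}$ and $|\mathcal H| = \tilde\Omega(n^{1+\eps})$ are all sound. So is your observation that Lemma~\ref{lem:ManyDensePairs} with $k=4$ already settles the conjecture (up to logarithmic and constant-factor slack) once $\eps \ge \tfrac14 + \tfrac{\delta}{2}$, since a piece on at most $2\sqrt n$ vertices with $\ge n^{1/2+\eps}/2$ edges then beats the $k^{3/2+\delta}$ threshold; this correctly isolates $\eps < \tfrac14$ as the regime where the conjecture has content and where our one-shot extraction gives pieces that are \emph{below} the target density. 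If you wish to pursue this further, either of your two engines would need a mechanism for transferring the global excess of $4$-cycles into a super-random count \emph{relative to the shrunken host} after restriction --- that transfer is precisely what is missing, and it is what any eventual proof (or refutation via a pseudorandom construction) must supply.
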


A constructive proof of this conjecture (that comes with an efficient algorithm for finding the dense pieces) would establish an $m^{4/3-o(1)}$ lower bound for $4$-Cycle \emph{enumeration} with $n^{o(1)}$ delay.

\paragraph{Which patterns can be detected in linear time?}
In its most basic form, the subgraph isomorphism problem asks if a given graph $G$ on $m$ edges contains a fixed size pattern $H$ as a (not necessarily induced) subgraph. 
It is natural to conjecture that the subgraph isomorphism problem can be solved in $m^{1+o(1)}$ time if and only if $H$ is \emph{acyclic}.
A linear time algorithm for forests follows from the Color-Coding technique \cite{AYZ97}, and this paper proves that all cycles require super-linear time (assuming the hardness of triangles).
Thus, all we have to do is reduce $k$-cycle detection to the detection of any pattern that contains a $k$-cycle.
Such a reduction is known for odd $k$ \cite{DVW21} but not for even. 



\medskip
\paragraph{Acknowledgements}
AA would like to thank Kevin Lewi, Virginia Vassilevska Williams, and Ryan Williams for introducing the fascinating $4$-Cycle problem in his first days at Stanford, and also Arturs Backurs, Greg Bodwin, S{\o}ren Dahlgaard, Mathias B{\ae}k Tejs Knudsen, Aviad Rubinstein, and Morten St{\"{o}}ckel, for stimulating discussions.
We also thank Thatchaphol Saranurak for references on dynamic shortest paths. 

Funding: This work is part of the project TIPEA that has received funding from the European Research Council (ERC) under the European Unions Horizon 2020 research and innovation programme (grant agreement No.\ 850979). This work is supported by the Defense Advanced Research Projects Agency (DARPA) under agreement number HR00112020023. This work is supported by an Alon scholarship and a research grant from the Center for New Scientists at the Weizmann Institute of Science. 

\bibliographystyle{alpha}
\bibliography{main}

\appendix

\section{Reduction from Triangle or 4-Cycle to any $k$-Cycle}
\label{sec:reduction34cycle}

For completeness, we include a proof of the following statement. 
The components of this proof are considered folklore.

\begin{theorem}\label{thm:ckto34}
For any integer $k \ge 3$ one of the following is true:
\begin{itemize}
    \item There is a reduction that given an $m$-edge tripartite graph $G$ runs in $O(m)$ time and constructs a graph $G^\star$ such that the $k$-cycles in $G^\star$ are in 1-to-1 correspondence with the triangles in $G$.
    \item There is a reduction that given an $m$-edge graph $G$ runs in $O(m)$ time and constructs a graph $G^\star$ such that the $k$-cycles in $G^\star$ are in 1-to-1 correspondence with the 4-cycles in $G$.
\end{itemize}
\end{theorem}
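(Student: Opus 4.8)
Both reductions will be built by \emph{subdividing edges}: replace each edge $e$ of $G$ by a path of $\ell(e)\ge 1$ edges, calling the result $G^\star$. Since every internal vertex of a subdivided edge has degree exactly $2$, a simple cycle of $G^\star$ must traverse each subdivided edge it enters in full; consequently the simple cycles of $G^\star$ are in bijection with the simple cycles of $G$, and a cycle of $G$ through edges $e_1,\dots,e_\ell$ becomes a cycle of length $\ell(e_1)+\dots+\ell(e_\ell)$. So the whole task reduces to choosing the lengths $\ell(\cdot)$, as a function of $k$, so that the cycles of $G$ whose image has length exactly $k$ are precisely the triangles (for the first bullet) or precisely the $4$-cycles (for the second). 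In every case $G^\star$ will have $O(km)=O(m)$ vertices and edges and be constructible in $O(m)$ time, so only the length bookkeeping needs care.

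\textbf{When $4\mid k$, I reduce from $4$-Cycle.} Subdivide every edge of the general graph $G$ uniformly into $k/4$ edges. A cycle of length $\ell$ becomes a cycle of length $\ell k/4$, which equals $k$ exactly when $\ell=4$; hence the $k$-cycles of $G^\star$ are in bijection with the $4$-cycles of $G$.

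\textbf{When $4\nmid k$, I reduce from Triangle on a tripartite graph} $G=(A\cup B\cup C,E)$. Now there are three edge types $AB,BC,AC$, a triangle uses exactly one edge of each type, and I will subdivide every type-$t$ edge into a common length $\ell_t$, always with $\ell_{AB}+\ell_{BC}+\ell_{AC}=k$, so that every triangle maps to a $k$-cycle. Two structural facts handle the other cycles. (i) A cycle using $c_t\ge 1$ edges of \emph{every} type has image length $\sum_t c_t\ell_t = k+\sum_t(c_t-1)\ell_t\ge k$, with equality iff all $c_t=1$, i.e.\ iff it is a triangle. (ii) A cycle omitting one edge type, say $AC$, uses no edge inside $A\cup C$, so around the cycle the vertices of $A\cup C$ alternate with those of $B$; hence such a cycle has \emph{even} length, and being a non-triangle it has length $\ge 4$, so its image has length $\ge 4\min_t\ell_t$. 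The choices of $(\ell_{AB},\ell_{BC},\ell_{AC})$ are: $(k/3,k/3,k/3)$ if $3\mid k$ --- every non-triangle cycle has length $\ge 4$ and image $\ge 4k/3>k$; otherwise $(1,\tfrac{k-1}{2},\tfrac{k-1}{2})$ if $k$ is odd --- all-three-type non-triangle cycles have image $>k$ by (i), and cycles omitting a type have even image, hence $\ne k$; and otherwise ($k\equiv 2\pmod 4$ and $3\nmid k$, so $k=4j+2\ge 10$) any triple of positive integers each exceeding $k/4$ and summing to $k$, for instance $(j+1,j+1,2j)$ --- then $\le\!2$-type cycles have image $\ge 4\min_t\ell_t>k$ by (ii) and all-three-type non-triangles have image $>k$ by (i). In each sub-case the $k$-cycles of $G^\star$ biject with the triangles of $G$.

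\textbf{Expected main obstacle.} The fiddly step is the last sub-case (even $k$ with $3\nmid k$): one needs that a triple of integers in $(k/4,k/2)$ summing to $k$ exists for every such $k$ --- it does for all $k\ge 10$, and the only exceptions $k\in\{4,8\}$ are already handled by the $4\mid k$ branch --- and one must confirm that facts (i)--(ii) really exhaust all cycle types of a tripartite graph, which rests on the small observations that there are no cycles of length $<3$, that the only $3$-cycles use all three edge types, and that any two of the three edge types share a part. Checking that the subdivision is lossless (the degree-$2$ bijection between cycles of $G^\star$ and of $G$) is then routine.
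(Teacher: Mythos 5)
Your construction is correct in substance but takes a genuinely different route from the paper's. The paper splits on whether $k$ is a power of two: if not, it takes an odd prime $p\mid k$, reduces Triangle to $p$-Cycle by subdividing only the $B\times C$ edges into paths of length $p-2$ (correctness argued via a homomorphism of the construction onto the $p$-cycle), and then reduces $p$-Cycle to $k$-Cycle by uniform subdivision; if $k$ is a power of two it reduces from $4$-Cycle by uniform subdivision. You instead do everything in one shot with a typed subdivision $(\ell_{AB},\ell_{BC},\ell_{AC})$ summing to $k$, splitting on $4\mid k$ versus $4\nmid k$ and arguing directly via your counting fact (i) and the bipartiteness/parity fact (ii). Note that the composite of the paper's two lemmas is itself a typed subdivision, with lengths $(k/p,\,(p-2)k/p,\,k/p)$, so the two proofs are close in spirit; your version buys a self-contained elementary argument with no appeal to prime factorization, at the price of an arithmetic sub-case analysis, while the paper's buys two clean reusable lemmas. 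Your case split also assigns the bullets differently (you invoke the $4$-Cycle bullet whenever $4\mid k$, the paper only when $k$ is a power of two), which is fine since the theorem only asks that one bullet hold.

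One step in your odd-$k$ branch is under-justified as written. With lengths $(1,\frac{k-1}{2},\frac{k-1}{2})$ you assert that a cycle omitting one edge type has \emph{even image}, but fact (ii) only gives that such a cycle has an even number of edges of $G$. When $k\equiv 1\pmod 4$ the length $\frac{k-1}{2}$ is even while $\ell_{AB}=1$ is odd, so evenness of $c_{AB}+c_{BC}$ alone does not force $c_{AB}\cdot 1+c_{BC}\cdot\frac{k-1}{2}$ to be even. The assertion is nevertheless true, and the fix is one line: in a cycle using no $A\times C$ edges, every cycle vertex of $A$ has both incident cycle edges in $A\times B$ and every cycle vertex of $C$ has both in $B\times C$, so $c_{AB}=2|A\cap\mathrm{cyc}|$ and $c_{BC}=2|C\cap\mathrm{cyc}|$ are \emph{each} even, whence the image length is even for any choice of lengths. (Alternatively, choose $(1,k-2,1)$ for odd $k$ --- essentially the paper's construction --- so that all three lengths are odd and your fact (ii) suffices as stated.) With that patch, all your sub-cases check out, including the existence of three integers in $(k/4,k/2)$ summing to $k$ when $k\equiv 2\pmod 4$, $3\nmid k$, $k\ge 10$, and the routine degree-$2$ argument that simple cycles of $G^\star$ biject with simple cycles of $G$.
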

Recall that in Triangle detection we can assume without loss of generality that the input graph is tripartite, so this condition makes no big difference. 

By Theorem~\ref{thm:ckto34}, if $k$-Cycle detection can be solved in time $O(m^\alpha)$, for some $\alpha \ge 1$, then either Triangle or 4-Cycle detection can also be solved in time $O(m^\alpha)$. Moreover, if after $O(m^\alpha)$ preprocessing we can enumerate $k$-cycles with $m^{o(1)}$ delay, then the same is true for enumerating either triangles or 4-cycles.

\begin{lemma}\label{lem:cktor}
Let~$r\mid k$ be two positive fixed integers such that~$r$ divides~$k$.
There is a reduction that given an $m$-edge graph $G$ runs in $O(m)$ time and constructs a graph $G^\star$ such that the $k$-cycles in $G^\star$ are in 1-to-1 correspondence with the $r$-cycles in $G$.
\end{lemma}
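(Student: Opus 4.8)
The plan is to obtain $G^\star$ from $G$ by \emph{uniformly subdividing} every edge, so that every cycle is stretched by one common integer factor. Let $\ell := k/r$, which is a positive integer since $r \mid k$, and replace every edge $\{u,v\}$ of $G$ by an internally vertex-disjoint path of length $\ell$ from $u$ to $v$ (for $\ell=1$ this does nothing; for $\ell \ge 2$ it introduces $\ell-1$ fresh degree-$2$ vertices per edge, which I will call \emph{subdivision vertices}, as opposed to the \emph{branch vertices} inherited from $G$). Since $k$, and hence $\ell$, is a fixed constant, this construction runs in $O(m\ell)=O(m)$ time and $G^\star$ has $m\ell=O(m)$ edges.

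The forward direction is immediate: an $r$-cycle $v_1 v_2 \cdots v_r v_1$ of $G$ is sent to the cycle of $G^\star$ obtained by replacing each edge $\{v_i,v_{i+1}\}$ by its length-$\ell$ subdivision path; this has $r\ell=k$ edges, and distinct $r$-cycles use distinct edge sets in $G$, hence are sent to $k$-cycles with distinct edge sets in $G^\star$. The content of the lemma is the converse, and this is the one step I expect to require care. The key structural fact is that every subdivision vertex has degree exactly $2$ in $G^\star$, so whenever a simple cycle $D$ of $G^\star$ enters the subdivision path of an edge $\{u,v\}$ of $G$ it is forced to run along that whole path and leave at the opposite endpoint. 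Hence $D$ meets at least one branch vertex, and the portion of $D$ between any two consecutive branch vertices is exactly one full subdivision path, of length $\ell$. Therefore $D$ traces a cyclic sequence $u_1,u_2,\dots,u_t,u_1$ of branch vertices with consecutive vertices joined by an edge of $G$; since $D$ is simple, the subdivision paths it uses are edge-disjoint and so correspond to pairwise distinct edges of $G$, and the $u_i$ are pairwise distinct, so $u_1\cdots u_t u_1$ is a simple cycle of $G$ (and $t\ge 3$ because $G$ is simple). Counting edges, $k=|D|=t\ell$, whence $t=k/\ell=r$.

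Finally I would note that "subdivide each edge" and "read off the cyclic sequence of branch vertices" are mutually inverse: each edge of $G$ has a unique subdivision path in $G^\star$, and, by the argument above, a $k$-cycle of $G^\star$ is determined by its branch-vertex sequence. This upgrades the two maps into the claimed bijection. (The degenerate cases $r\in\{1,2\}$ need no separate treatment: a simple graph has neither a $1$-cycle nor a $2$-cycle, and by the same length computation any cycle of $G^\star$ has length a multiple of $\ell$ that is at least $3\ell>k$, so both sides are empty.) Assembling these points proves Lemma~\ref{lem:cktor}.
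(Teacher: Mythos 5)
Your proposal is correct and follows essentially the same route as the paper: subdivide every edge of $G$ into a path of length $k/r$, use the degree-$2$ property of subdivision vertices to force any simple cycle of $G^\star$ to decompose into full subdivision paths, and conclude that $k$-cycles correspond exactly to $r$-cycles of $G$. The extra remarks on the bijection and the degenerate cases $r\in\{1,2\}$ only make explicit what the paper dismisses as straightforward.
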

\begin{proof}
We can assume that~$3\leq r < k$, as otherwise the claim is straightforward.
Let~$G$ be a graph with~$m$ edges, we construct the graph~$G^\star$ by replacing each edge of~$G$ with a path of length~$\frac{k}{r}$ (that is, each edge of~$G$ is subdivided by~$\frac{k}{r}-1$ vertices).
The number of edges in~$G^\star$ is~$\frac{k}{r} m = O(m)$, and constructing~$G^\star$ from~$G$ takes~$O(m)$ time.
We prove the claim by showing that~$G^\star$ contains a $k$-cycle if and only if~$G$ contains an $r$-cycle.

If~$G$ contains an $r$-cycle, then after the subdivision of its edges this $r$-cycle corresponds to a $k$-cycle in~$G^\star$.
On the other hand, any simple cycle in~$G^\star$ can be partitioned into paths of length~$\frac{k}{r}$ corresponding to the full subdivision of edges from~$G$. This holds as the degree of every subdividing vertex is exactly~$2$. Hence, every cycle in~$G^\star$ is of size divisible by~$\frac{k}{r}$ and such cycle of size~$\frac{k}{r}\cdot x$ must correspond to a cycle of size~$x$ in~$G$.
In particular, if~$G^\star$ contains a $k$-cycle then~$G$ contains an $r$-cycle.
\end{proof}

\begin{lemma}\label{lem:ckto3}
Let~$k\geq 3$ be any \textbf{odd} fixed integer.
There is a reduction that given an $m$-edge tripartite graph~$G$ runs in $O(m)$ time and constructs a graph $G^\star$ such that the $k$-cycles in $G^\star$ are in 1-to-1 correspondence with the triangles in $G$.
\end{lemma}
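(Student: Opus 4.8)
The plan is to build $G^\star$ by subdividing the edges of $G$ according to their ``type''. Since $G$ is tripartite with parts $A,B,C$, every edge lies in $A\times B$, $B\times C$, or $C\times A$. Fix positive integers $a,b,c$ with $a+b+c=k$; such a choice exists because $k\ge 3$, e.g.\ $a=k-2$, $b=c=1$ (for $k=3$ this is the trivial choice $a=b=c=1$ and $G^\star=G$). Form $G^\star$ by replacing each $A$-$B$ edge with a path of length $a$ (i.e.\ inserting $a-1$ fresh degree-$2$ vertices), each $B$-$C$ edge with a path of length $b$, and each $C$-$A$ edge with a path of length $c$. Then $G^\star$ has at most $k\,m=O(m)$ edges and is constructed in $O(m)$ time.

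First I would record the standard structural fact that, because every inserted vertex has degree exactly $2$ in $G^\star$, any simple cycle of $G^\star$ must traverse in full each subdivided path it enters; contracting these paths back to single edges yields a simple cycle $Q$ of $G$, and conversely subdividing a simple cycle of $G$ yields a simple cycle of $G^\star$. (The point to state carefully here is that simplicity in $G^\star$ forces the visited original vertices to be distinct and each original edge to be used at most once, so $Q$ is genuinely a simple cycle of $G$.) This gives a bijection between simple cycles of $G$ and simple cycles of $G^\star$. If $Q$ uses $x$ edges of type $A$-$B$, $y$ of type $B$-$C$, and $z$ of type $C$-$A$, then the corresponding cycle of $G^\star$ has length exactly $xa+yb+zc$. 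A triangle of $G$ is precisely the case $(x,y,z)=(1,1,1)$, which maps to a cycle of length $a+b+c=k$, so every triangle of $G$ becomes a $k$-cycle of $G^\star$.

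The heart of the argument is that no other simple cycle of $G$ can map to a $k$-cycle. Here I would use a parity observation: at each vertex of $Q$ the two incident cycle edges have prescribed types, so counting edges at the vertices of each part gives $x+z=2|V(Q)\cap A|$, $x+y=2|V(Q)\cap B|$, $y+z=2|V(Q)\cap C|$, whence $x,y,z$ all have the same parity. If they are all even, then $xa+yb+zc$ is even and hence $\ne k$ — this is exactly where $k$ being odd is used. If they are all odd and $(x,y,z)\ne(1,1,1)$, then all of $x,y,z$ are $\ge 1$ and at least one is $\ge 3$, so $xa+yb+zc\ge (a+b+c)+2\min(a,b,c)\ge k+2>k$ since $a,b,c\ge 1$. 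Either way such a cycle does not yield a $k$-cycle in $G^\star$. Combined with the bijection above, the $k$-cycles of $G^\star$ are in bijection with the triangles of $G$.

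I do not anticipate a real obstacle; the two things to be careful about are (i) making precise that degree-$2$ inserted vertices force a cycle of $G^\star$ to respect the subdivision, and in particular that it projects to a genuine simple cycle of $G$, and (ii) handling the degenerate regime $k=3$ (and the degenerate choices among $a,b,c$) so that the statement holds uniformly. Both are routine.
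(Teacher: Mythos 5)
Your proof is correct, and your construction is in essence the paper's: choosing $a=k-2$, $b=c=1$ is exactly the paper's reduction (which subdivides the $B$--$C$ edges into paths of length $k-2$), up to relabeling the parts; you additionally allow any split $a+b+c=k$. Where you genuinely diverge is in the verification. The paper notes that $G^\star$ is homomorphic to the $k$-cycle via the partition $A,B,D^{(1)},\dots,D^{(k-3)},C$, and uses the fact that an odd cycle is not bipartite to conclude that a $k$-cycle in $G^\star$ must meet every part (hence, having only $k$ vertices, meets each exactly once); the degree-$2$ subdivision vertices then force it to be a subdivided triangle. You instead set up the contraction bijection between all simple cycles of $G$ and of $G^\star$, record that a cycle using $x,y,z$ edges of the three types lifts to length $xa+yb+zc$, and rule out non-triangles by the parity identities $x+z=2|V(Q)\cap A|$ etc.\ (all even $\Rightarrow$ even length $\ne k$; all odd and $\ne(1,1,1)$ $\Rightarrow$ length $\ge k+2$). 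Both arguments are sound and both use oddness of $k$ essentially — the paper through non-bipartiteness, you through the all-even case. Your route is a bit longer but more informative: it classifies the lengths of \emph{all} cycles of $G^\star$ and works uniformly for any choice of $a,b,c$, which also makes the $k=3$ degenerate case transparent; the paper's homomorphism argument is shorter, localizes the role of oddness in one bipartiteness observation, and parallels the proof of its companion lemma for the divisibility reduction. The two points you flag (degree-$2$ vertices forcing full traversal and projecting to a genuinely simple cycle, using simplicity of $G$ so no two original vertices are joined by two subdivided paths) are indeed the only details to spell out, and they are routine.
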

\begin{proof}
Let~$G$ be a tripartite graph with~$m$ edges and vertex sets~$V=A\cup B \cup C$.
We construct~$G^\star$ by replacing every edge of~$G$ with endpoints in~$B$ and~$C$ with a path of length~$k-2$ (that is, we subdivide each edge of~$E(G)\cap \left(B \times C\right)$ by~$k-3$ vertices).
The number of edges in~$G^\star$ and the time to construct it are~$O(m)$.
If~$G$ contains a triangle then~$G^\star$ clearly contains a corresponding $k$-cycle. 
It is left to prove that if~$G^\star$ contains a $k$-cycle then~$G$ contains a triangle.

Denote by~$D^{(i)}$ for~$1\leq i \leq k-3$ the set of all~$i$-th vertices in a subdivision of some subdivided edge.
The graph~$G^\star$ is homomorphic to the $k$-cycle by the partition~$A,B,D^{(1)},\ldots,D^{(k-3)},C$.
Any $k$-cycle in~$G^\star$ must include exactly one vertex in each of the parts $A,B,D^{(1)},\ldots,D^{(k-3)},C$, since the $k$-cycle is not bipartite yet after the removal of any of these parts the remaining graph is homomorphic to a path and hence bipartite.
Due to the degree of each vertex in a part~$D^{(i)}$ being exactly~$2$, such a cycle is necessarily a triangle of~$G$ with one subdivided edge. This follows in a similar manner to the proof of Lemma~\ref{lem:cktor}.
\end{proof}

\begin{proof}[Proof of Theorem~\ref{thm:ckto34}]
Let~$k\geq 3$.
If~$k$ is not a power of~$2$, then it has an odd prime divisor~$p$ and hence we can apply Lemma~\ref{lem:ckto3} to reduce from Triangle detection to $p$-Cycle detection, and then apply Lemma~\ref{lem:cktor} to reduce from $p$-Cycle detection to $k$-Cycle detection, to prove the theorem.
Otherwise, $k\geq 3$ is a power of~$2$ and in particular is divisible by~$4$. Then we can use Lemma~\ref{lem:cktor} to reduce from 4-Cycle detection to $k$-Cycle detection. 
\end{proof}

We note that the components in the proof of Theorem~\ref{thm:ckto34} (and any other previously known technique) do not show that if 4-Cycle detection is linear then so is Triangle detection.
The reason that a similar argument fails is that as a bipartite graph, a 4-cycle can appear between any of the three pairs of parts in~$G$.
On the other hand, we observe that if the original graph~$G$ contains no 4-cycle, then a similar reduction does work.
\end{document}